\newcommand{\vertiii}[1]{{\left\vert\kern-0.25ex\left\vert\kern-0.25ex\left\vert #1 
    \right\vert\kern-0.25ex\right\vert\kern-0.25ex\right\vert}_\infty}
\newcommand{\R}[0]{\mathbb{R}}
\newcommand{\norm}[1]{\|#1\|}
\DeclareMathOperator*{\argmin}{argmin}
\newtheorem{theorem}{Theorem}
\newtheorem{lemma}[theorem]{Lemma}
\newtheorem{proposition}[theorem]{Proposition}
\newtheorem{assumption}{Assumption}
\newtheorem{remark}{Remark}
\def\gsp{\textsc{gsp}}
\def\ges{\textsc{ges}}
\def\pc{\textsc{pc}}
\def\T{{ \mathrm{\scriptscriptstyle T} }}
\def\inv{{ \mathrm{\scriptscriptstyle -1} }}
\def\invT{{ \mathrm{\scriptscriptstyle -T} }}
\def\gap{\textsc{GAP}}
\def\rgap{\textsc{RGAP}}
\def\hdbu{\textsc{HD-BU}}
\def\hdtd{\textsc{HU-TD}}
\def\td{\textsc{TD}}
\def\misocp{\textsc{MISOCP}}
\def\micp{\textsc{Our Method}}
\title[Integer Programming for Bayesian Networks]{Integer Programming for Learning Directed Acyclic Graphs
from Non-identifiable Gaussian Models}
\author[T.~Xu]{Tong Xu$^1$$^\star$}
\email{tongxu2027@u.northwestern.edu}
\address{$^1$Department of Industrial Engineering and Management Sciences, Northwestern University, U.S.}
\author[A.~Taeb]{Armeen Taeb$^{2}$$^\star$}
\email{ataeb@uw.edu}
\address{$^2$Department of Statistics, University of Washington, U.S.}
\author[S.~Küçükyavuz]{Simge Küçükyavuz$^1$}
\email{simge@northwestern.edu}
\author[A.~Shojaie]{Ali Shojaie$^{3}$}
\email{ashojaie@uw.edu}
\address{$^3$Department of Biostatistics, University of Washington, U.S.}
\address{$^\star$Equal contributions}
\date{\today}
\begin{document}
\maketitle

\begin{abstract}
We study the problem of learning directed acyclic graphs from continuous observational data, generated according to a linear Gaussian structural equation model. State-of-the-art structure learning methods for this setting have at least one of the following shortcomings: i) they cannot provide optimality guarantees and can suffer from learning sub-optimal models; ii) they rely on the stringent assumption that the noise is homoscedastic, and hence the underlying model is fully identifiable.  We overcome these shortcomings and develop a computationally efficient mixed-integer programming framework for learning medium-sized problems that accounts for arbitrary heteroscedastic noise. We present an early stopping criterion under which we can terminate the branch-and-bound procedure to achieve an asymptotically optimal solution and establish the consistency of this approximate solution. In addition, we show via numerical experiments that our method outperforms state-of-the-art algorithms and is robust to noise heteroscedasticity, whereas the performance of some competing methods deteriorates under strong violations of the identifiability assumption. The software implementation of our method is available as the Python package \emph{micodag}.
\end{abstract}
\section{Introduction}


\subsection{Background and {R}elated {W}orks}
A Bayesian network is a probabilistic graphical model that represents conditional independencies among a set of random variables. A directed acyclic graph serves as the structure to encode these conditional independencies, where the random variables are represented as vertices, and a pair of vertices that are not connected in a path are conditionally independent. A directed edge from node $i$ to node $j$ indicates that $i$ causes $j$. The acyclic property of the graph prevents the occurrence of circular dependencies and allows for probabilistic inference and learning algorithms for Bayesian networks. At a high level, the focus of this paper is developing methods to efficiently learn provably optimal directed acyclic graphs from continuous observational data. Throughout, we assume that all relevant variables are observed. 

Various methods exist for learning directed acyclic graphs, with the majority falling into two categories: constraint-based methods, and score-based methods. 
Constraint-based methods aim to identify conditional independencies from the data. An example is the PC algorithm \citep{causalitybase}, which initiates with a complete undirected graph and iteratively removes edges based on conditional independence assessments. \citet{kalisch2005estimating} shows that under a condition known as `strong faithfulness,' the PC algorithm is consistent in high-dimensional settings for learning sparse Gaussian directed acyclic graphs. Score-based methods, which is the approach considered in this paper, often employ penalized log-likelihood as a score function to seek the optimal graph within the entire space of directed acyclic graphs \citep{vgBuhlmann}. Penalized maximum likelihood estimation methods do not require the strong faithfulness assumption, which is known to be restrictive in high dimensions \citep{Uhler2012GeometryOT}. 
However, exactly solving the corresponding problem suffers from high computational complexity as the number of directed acyclic graphs is super-exponential in the number of variables. For example, learning an optimal graph using dynamic programming takes about 10 hours for a medium-size problem with 29 nodes \citep{silander2012simple}. 

One can resort to greedy search methods in a score-based setting for faster computation with large graphs. A popular example is the greedy equivalence search algorithm \citep{chickering2002optimal}, which performs a greedy search on the space of completed partially directed acyclic graphs rather than directed acyclic graphs. \citet{chickering2002optimal} established the asymptotic consistency of the resulting estimate. Furthermore, under a strong assumption that the graph has a fixed degree, greedy equivalence search and its variants have polynomial-time complexity in the number of variables \citep{Chickering2020StatisticallyEG}. Despite their favorable properties, these algorithms do not generally recover the optimal scoring graph for any finite sample size. Consequently, in moderate sample size regimes, the sub-optimal model they learn can differ from the true model. Another popular greedy approach is Greedy Sparse Permutations \citep{Solus21permutation}, which greedily searches over the space of permutations to identify a maximally scoring ordering of the variables. Similar to greedy equivalence search, this technique can get stuck in a local optima.

Several computationally efficient approaches rely on finding a topological ordering of the random variables under homoscedastic noise, also known as the equal-variance condition. When the nodes exhibit a natural ordering, the problem of estimating directed graphs reduces to the problem of estimating the network structure, which can be efficiently solved using lasso-like methods \citep{Ali10}. Assuming equal variances, \citet{Chen19} proposed a top-down approach to obtain an ordering among conditional variances. Similarly, \citet{ghoshal18a} introduced a bottom-up method that selects sinks by identifying the smallest diagonal element in conditional precision matrices obtained using the estimator of \citet{cai2011constrained}, resulting in a topological ordering. These approaches are efficient and thus readily extendable to high-dimensional problems. They also provide better estimates than greedy methods, such as the greedy equivalence search. Consequently, they are currently two of the state-of-the-art approaches for causal structure learning with homoscedastic noise. While being computationally efficient, these methods heavily rely on the equal-variance assumption. With non-equal variances, the topological ordering of variances no longer exists, and the performance of these methods deteriorates \citep{kucukyavuz2022consistent}. 

Integer and mixed integer programming formulations provide computationally efficient and rigorous optimization techniques for learning directed acyclic graphs. Typically, these methods consider a penalized maximum likelihood estimator over the space of directed acyclic graphs and cast the acyclicity condition as a constraint over a collection of binary variables. Such a formulation enables the use of branch-and-bound algorithms for fast and accurate learning of graphs, both with discrete \citep{Bartlett17discrete, cussens2017polyhedral, cussens2017bayesian} and continuous \citep{kucukyavuz2022consistent} data. More specifically, in the context of continuous Gaussian data, \cite{kucukyavuz2022consistent} impose homoscedastic noise assumption to develop their procedure. The homoscedastic noise assumption results in both computational and statistical simplifications. From the computational perspective, the penalized maximum likelihood estimator can be expressed as a mixed integer programming formulation, with a convex quadratic loss function and a regularization penalty subject to linear constraints. From the statistical perspective, the homoscedastic noise assumption implies full identifiability of the underlying directed acyclic graph \citep{Peters_2013} and it helps avoid the typical challenge of non-identifiability or identifiability of what is known as the completed partially directed acyclic graph from observational data. To the best of our knowledge, no integer programming formulation exists for continuous data without the stringent assumption of homoscedastic noise.


Mixed-integer programming has also been used for learning undirected graphical models, including Gaussian graphical models  \citep{Bertsimas2019certifiably, behdin2023sparse, misra2020information}. However, undirected graphical models need not satisfy the highly non-convex acyclicity constraint. Thus, learning directed acyclic graphs is often much harder than learning undirected graphical models and requires substantially different mixed-integer programming formulations.

\subsection{Our Contributions}
\label{sec:our_contributions}
We propose a mixed-integer programming formulation for learning directed acyclic graphs from general Gaussian structural equation models. Our estimator, which consists of a negative log-likelihood function that is a sum of a convex logarithmic term and a convex quadratic term, allows for heteroscedastic noise and reduces to the approach of \cite{kucukyavuz2022consistent} if the noise variances are constrained to be identical. Using a layered network formulation \citep{Manzour21}, as well as tailored cutting plane methods and branch-and-bound techniques, we provide a computationally efficient approach to obtain a solution that is guaranteed to be accurate up to a pre-specified optimality gap. By connecting the optimality gap of our mixed-integer program to the statistical properties of the estimator, we establish an early stopping criterion under which we can terminate the branch-and-bound procedure and attain a solution that recovers the Markov equivalence class of the underlying directed acyclic graph. Compared to state-of-the-art benchmarks in non-identifiable instances, we demonstrate empirically that our method generally exhibits superior estimation performance and displays greater robustness to variations in the level of non-identifiability compared to methods relying on the identifiability assumption. 
The improved performance is highlighted in the synthetic example of  Fig.~\ref{fig:alpha_results}, where our method is compared with the high-dimensional bottom-up approach of \cite{ghoshal18a}, the greedy sparsest permutations method of \cite{Solus21permutation}, the top-down approach of \cite{Chen19}, the mixed-integer second-order conic program of \cite{kucukyavuz2022consistent}, and the greedy equivalence search method of \cite{chickering2002optimal}.
For methods that support the use of superstructures as input---including \cite{kucukyavuz2022consistent}, \cite{Solus21permutation}, and \citet{chickering2002optimal}---we provide the estimated moral graphs as input. 
Here, we randomly select the noise variances from the interval $[4-\rho, 4+\rho]$, 
where larger $\rho$ leads to greater heteroscedasticity. Each method is implemented with a sample size of $n=100$ over 30 independent trials. The results show that in contrast to our method, the performance of competing approaches deteriorates under strong violations of the homoscedasticity assumption; see \S{\ref{subsec:var}} for more details. 


\begin{figure}[t]
    \centering
    \subfloat[$m=10$]{
        \includegraphics[scale=0.33]{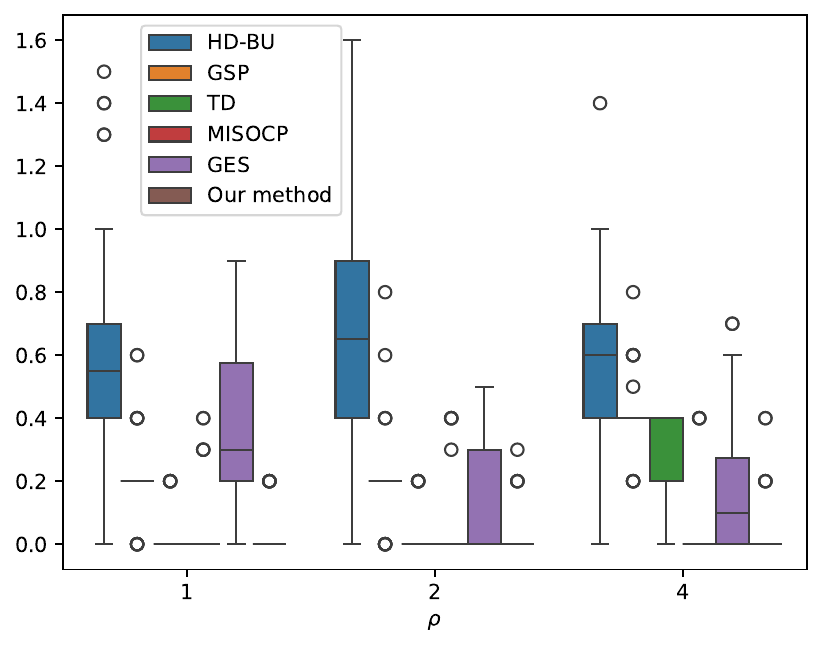}}
    \subfloat[$m=15$]{
        \includegraphics[scale=0.33]{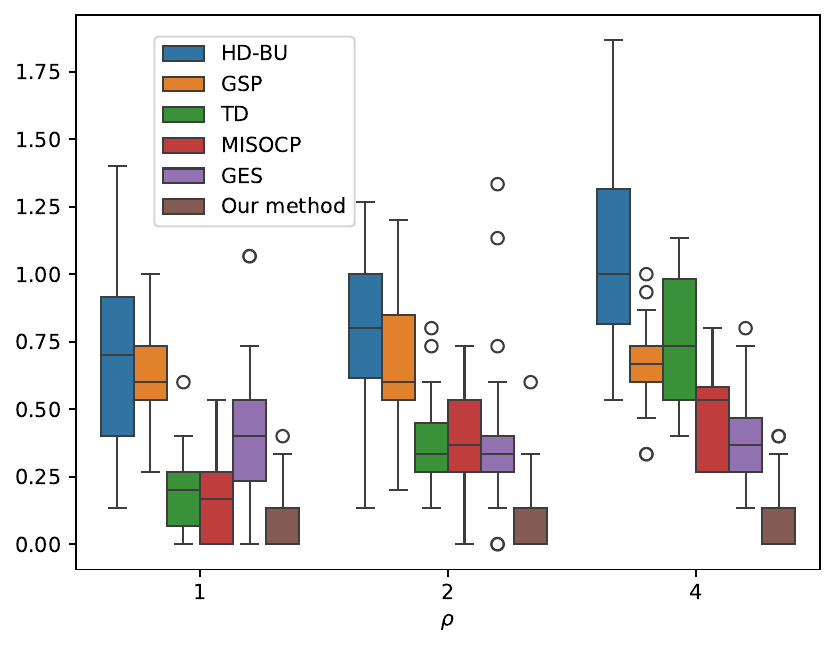}}
    \subfloat[$m = 20$]{
        \includegraphics[scale=0.33]{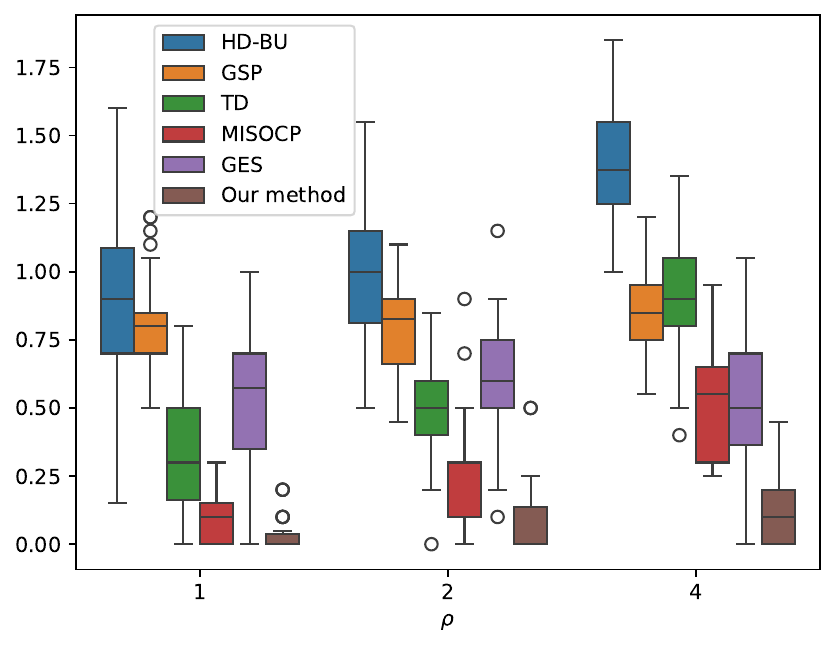}}
    \caption{{Box plots of scaled $d_{\mathrm{cpdag}}$ for our methods and benchmarks with $\rho = 1,2, 4$ and number of nodes $m=10, 15, 20$,  respectively. Here, $d_{\mathrm{cpdag}}$ values are scaled by the total number of edges in the true underlying directed acyclic graph. \hdbu, high-dimensional bottom-up; \gsp, greedy sparsest permutations; \td, top-down;   \misocp, mixed-integer second-order conic program; \ges, greedy equivalence search; $d_{\mathrm{cpdag}}$, differences between true and estimated completed partially directed acyclic graphs with smaller values being better; see \S\ref{sec:experiments}}.  
 } 
    \label{fig:alpha_results}
\end{figure}

\subsection{Notations and Definitions}
A directed acyclic graph $\mathcal{G} = (V,E)$ among $m$ nodes consists of vertices $V = \{1,2,\dots,m\}$ and directed edge set $E \subseteq V \times V$, where there are no directed cycles. We denote $\mathrm{MEC}(\mathcal{G})$ to be the Markov equivalence class of $\mathcal{G}$, consisting of directed acyclic graphs that have the same skeleton and same v-structures as $\mathcal{G}$. The skeleton of $\mathcal{G}$ is the undirected graph obtained from $\mathcal{G}$ by substituting directed edges with undirected ones. Furthermore, nodes $i,j$, and $k$ form a v-structure if $(i,k) \in E$ and $(j,k) \in E$, and there is no edge between $i,j$. The Markov equivalence class can be compactly represented by a completed partially directed acyclic graph, which is a graph consisting of both directed and undirected edges.  A completed partially directed acyclic graph has a directed edge from a node $i$ to a node $j$ if and only if this directed edge is present in every directed acyclic graph in the associated Markov equivalence class. A completed partially directed acyclic graph has an undirected edge between nodes $i$ and $j$ if the corresponding Markov equivalence class contains directed acyclic graphs with both directed edges from $i$ to $j$ and from $j$ to $i$. 
For a matrix $B \in \mathbb{R}^{m \times m}$, we denote $\mathcal{G}(B)$ to be the directed graph on $m$ nodes such that the directed edge from $i$ to $j$ appears in $\mathcal{G}(B)$ if and only if $B_{ij} \neq 0$. We denote the identity matrix by $I$ with the size being clear from context. The collection of positive-definite diagonal matrices is denoted by $\mathbb{D}_{++}^m$. For a matrix $M \in \mathbb{R}^{m \times m}$, $\|M\|_F := \{\sum_{i,j} M_{ij}^2\}^{1/2}$ denotes the Frobenius norm, $\|M\|_\infty = \max_{i,j}|M_{i,j}|$ denotes the max norm, and $\vertiii{M} := \max_{i}\sum_{j}|M_{ij}|$ denotes the $\ell_\infty$ opeator norm. We use the Bachman-Landau symbols $\mathcal{O}$ to describe the limiting behavior of a function. Furthermore, we denote $z \asymp 1$ to express $z = \mathcal{O}(1)$ and $1/z = \mathcal{O}(1)$.



\section{Problem setup}
\label{sec:setup}

\subsection{Modeling {F}ramework}
\label{sec:modeling_framework}
Consider an unknown directed acyclic graph whose $m$ nodes correspond to observed random variables $X \in \mathbb{R}^m$. We denote the directed acyclic graph by $\mathcal{G}^\star = (V,E^\star)$ where $V=\left\{1, \ldots, m\right\}$ is the vertex set and $E^\star \subseteq V \times V$ is the directed edge set. 

We assume the random variables $X$ satisfy a linear structural equation model:
\begin{equation*}\label{eq:compact_sem_model}
    X={B^\star}^\T{X}+ \epsilon.
\end{equation*}
Here, the connectivity matrix $B^\star \in \mathbb{R}^{m \times m}$ is a matrix with zeros on the diagonal and $B^\star_{jk} \neq 0$ if $(j,k) \in E^\star$. Further, $\epsilon$ is a mean-zero random vector with independent coordinates where the variances of individual coordinates can, in general, be different. {Without loss of generality, we assume that all random variables are centered}. Thus, each variable $X_j$ in this model can be expressed as the linear combination of its parents, the set of nodes with directed edges pointing to $j$, plus independent centered noise. Our objective is to estimate the matrix $B^\star$, or an equivalence class if the underlying model is not identifiable, as its sparsity pattern encodes the structure of the graph $\mathcal{G}^\star$.  To arrive at our procedure, we model the random variable $\epsilon \sim \mathcal{N}(0,\Omega^\star)$ to be Gaussian where noise variance matrices $\Omega^\star$ are positive definite and diagonal. Modeling the data to be Gaussian has multiple important implications. First, the structural equation model \eqref{eq:compact_sem_model} is parameterized by the connectivity matrix $B^\star$ and the noise variances $\Omega^\star$. Second, the random variables $X$ are distributed according to $\mathcal{P}^\star = \mathcal{N}(0,\Sigma^\star)$ where $\Sigma^\star = (I-B^\star)^\invT\Omega^\star(I-B^\star)^\inv$. Throughout, we assume that the distribution $\mathcal{P}^\star$ is non-degenerate, or equivalently, $\Sigma^\star$ is positive definite.   

In the effort to estimate the parameters $(B^\star,\Omega^\star)$, one faces a challenge with identifiability: there may be multiple structural equation models that are compatible with $\mathcal{P}^\star$. Specifically, consider any connectivity matrix $B$ such that the associated graph $\mathcal{G}(B)$ is a directed acyclic graph, and noise variance matrix $\Omega \in \mathbb{D}_{++}^m$. Then, the structural equation model entailed by $(B,\Omega)$, yields an equally representative model as the one entailed by the population parameters $(B^\star,\Omega^\star)$. Thus, the following question naturally arises: what is the set of equivalent directed acyclic graphs? The answer is that under an assumption called \emph{faithfulness}, the sparsest directed acyclic graphs that are compatible with $\mathcal P^\star$ are precisely $\mathrm{MEC}(\mathcal{G}^\star)$, the Markov equivalence class of $\mathcal{G}^\star$ {\citep{verma1990equivalence}}. 
Since parsimonious models are generally more desirable, our objective is to estimate $\mathrm{MEC}(\mathcal{G}^\star)$ from data. In the next sections, we describe a maximum likelihood estimator, and subsequently an equivalent mixed-integer programming framework, and present its optimality and statistical guarantees.

\subsection{An {I}ntractable Maximum-Likelihood Estimator}
We assume we have $n$ samples of $X$ which are independent and identically distributed. 
We denote $\hat{\Sigma}_n$ to be the sample covariance matrix of the data. Consider a Gaussian structural equation model parameterized by connectivity matrix $B$ and noise variance $\Omega$ with $D = \Omega^{-1}$. The parameters $(B,D)$ specify the following precision, or inverse covariance, matrix:
$$\Theta = \Theta(B,D) = (I-B){D}(I-B)^\T.$$
The negative log-likelihood of this structural equation model is proportional to
$$\ell_n(\Theta) = \mathrm{trace}(\Theta\hat{\Sigma}_n)-\log\det(\Theta).$$
Naturally, we seek a model that not only has a small negative log-likelihood but is also specified by a sparse connectivity matrix, containing few nonzero elements. Thus, we deploy the following $\ell_0$-penalized maximum likelihood estimator with regularization parameter $\lambda \geq 0$:
\begin{equation}
    \min\limits_{B \in \mathcal{B}, D \in \mathbb{D}^m_{++}} \ell_n\{(I-B) D (I-B)^\T\} + \lambda^2\norm{B}_{\ell_0}, 
\label{Problem:original}
\end{equation}
where $\mathcal{B} = \{B\in\R^{m \times m} \mid \mathcal{G}(B) \text{ is a directed acyclic graph}\}$ {is a complicated (non-convex) set} and $\|B\|_{\ell_0}$ denotes the number of non-zeros in the connectivity matrix $B$. 
A few remarks are in order. First, the estimator \eqref{Problem:original} is equivalent to one proposed and analyzed in \cite{vgBuhlmann}. Second, when the diagonal term $D$ is restricted to be a multiple of identity, the model will have homoscedastic noise. As desired, the resulting estimator reduces to the one considered in \cite{kucukyavuz2022consistent}. Finally, $\ell_0$ regularization is generally preferred over $\ell_1$ regularization in the objective of \eqref{Problem:original}. In particular, unlike the $\ell_1$ penalty, $\ell_0$ regularization preserves the important property that equivalent directed acyclic graphs---those in the same Markov equivalence class---have the same penalized likelihood score  \citep{vgBuhlmann}.

\cite{vgBuhlmann} demonstrated that the solution of \eqref{Problem:original} has desirable statistical properties; however, solving it is in general intractable. As stated, the likelihood function $\ell_n(\cdot)$ is a non-linear and non-convex function of the parameters $(B,D)$, and is thus not amenable to standard mixed-integer programming optimization techniques. In the following section, after a change of variables, we obtain a reformulation with a convex negative log-likelihood that enables a mixed-integer programming framework for our problem.

 \section{Our convex mixed-integer programming framework}
\label{sec:formulation}
\subsection{An Equivalent Reformulation}
Consider the following optimization problem:
\begin{equation}
\min\limits_{\Gamma \in \mathbb{R}^{m \times m}: \Gamma-\mathrm{diag}(\Gamma) \in \mathcal{B}} \sum_{i=1}^m-2\log(\Gamma_{ii})+\mathrm{tr}(\Gamma\Gamma^\T\hat{\Sigma}_n) + \lambda^2\norm{\Gamma-\mathrm{diag}(\Gamma)}_{\ell_0},
\label{Problem:3}
\end{equation}
where $\mathrm{diag}(\Gamma)$ is a diagonal $m \times m$ matrix consisting of the diagonal entries of $\Gamma$. The following proposition establishes the equivalence between \eqref{Problem:original} and \eqref{Problem:3}.
\begin{proposition}
Formulations \eqref{Problem:original} and \eqref{Problem:3} have the same minimum objective value. Further:
\begin{enumerate}
\item[(a)] for any minimizer $(\hat{B}^\mathrm{opt},\hat{D}^\mathrm{opt})$ of \eqref{Problem:original}, the parameter $\hat{\Gamma}^\mathrm{opt} = (I-\hat{B}^\mathrm{opt})(\hat{D}^\mathrm{opt})^{1/2}$ is a minimizer of \eqref{Problem:3};
\item[(b)] for any minimizer $\hat{\Gamma}^\mathrm{opt}$ of \eqref{Problem:3}, let $\hat{D}^\mathrm{opt} = \mathrm{diag}(\hat{\Gamma}^\mathrm{opt})^2$ and $\hat{B}^\mathrm{opt} = I-\hat{\Gamma}^\mathrm{opt}(\hat{D}^\mathrm{opt})^{-1/2}$. Then, the parameter set $(\hat{B}^\mathrm{opt},\hat{D}^\mathrm{opt})$ is a minimizer of \eqref{Problem:original}.
\end{enumerate}
\label{prop:main}
\end{proposition}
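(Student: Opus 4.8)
The plan is to exhibit an explicit, objective-preserving bijection between the feasible regions of \eqref{Problem:original} and \eqref{Problem:3} given by the change of variables $\Gamma = (I-B)D^{1/2}$, with inverse $D = \mathrm{diag}(\Gamma)^2$ and $B = I - \Gamma\,\mathrm{diag}(\Gamma)^{\inv}$. Once this bijection is in place, equality of the two minimum values and the correspondence of minimizers in parts (a) and (b) follow at once, since the two maps are mutually inverse and each carries a feasible point to a feasible point with the same objective value.

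First I would record the structural consequence of the acyclicity constraint. If $\mathcal{G}(B)$ is a directed acyclic graph, then after a common permutation of rows and columns $B$ becomes strictly triangular, so $I-B$ is triangular with unit diagonal; since a simultaneous row--column permutation leaves the determinant unchanged, $\det(I-B)=1$. Because $B$ has zero diagonal, the candidate $\Gamma=(I-B)D^{1/2}$ then satisfies the diagonal identity $\Gamma_{ii}=D_{ii}^{1/2}>0$ and the off-diagonal identity $\Gamma_{ij}=-B_{ij}D_{jj}^{1/2}$ for $i\neq j$. Two facts drop out. The off-diagonal sparsity pattern of $\Gamma$ coincides exactly with that of $B$ (each $D_{jj}^{1/2}>0$), so $\mathcal{G}(\Gamma-\mathrm{diag}(\Gamma))=\mathcal{G}(B)$; hence the constraint $\Gamma-\mathrm{diag}(\Gamma)\in\mathcal{B}$ holds iff $B\in\mathcal{B}$, and $\norm{\Gamma-\mathrm{diag}(\Gamma)}_{\ell_0}=\norm{B}_{\ell_0}$. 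Moreover $\det\Gamma=\det(I-B)\det(D^{1/2})=\prod_i D_{ii}^{1/2}=\prod_i\Gamma_{ii}$, even though $\Gamma$ itself need not be triangular.

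Next I would match the likelihood terms. Writing $\Theta=(I-B)D(I-B)^\T=\Gamma\Gamma^\T$ immediately identifies the trace terms, $\mathrm{tr}(\Theta\hat\Sigma_n)=\mathrm{tr}(\Gamma\Gamma^\T\hat\Sigma_n)$. For the log-determinant term, the identity $\det\Gamma=\prod_i\Gamma_{ii}$ gives $\log\det\Theta=2\log\det\Gamma=2\sum_i\log\Gamma_{ii}$, so that $-\log\det\Theta=\sum_i(-2\log\Gamma_{ii})$. Combined with the penalty identity above, the full objective of \eqref{Problem:original} at $(B,D)$ equals the objective of \eqref{Problem:3} at $\Gamma$. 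Running the same computation through the inverse map---where one first notes that finiteness of $\sum_i-2\log\Gamma_{ii}$ forces $\Gamma_{ii}>0$, so that $D=\mathrm{diag}(\Gamma)^2\in\mathbb{D}^m_{++}$ and $\mathrm{diag}(\Gamma)^{\inv}=D^{-1/2}$ are legitimate---shows that every feasible $\Gamma$ arises from a unique feasible $(B,D)$ with equal objective, and that the recovered pair indeed reproduces $\Gamma=(I-B)D^{1/2}$.

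The bijection together with objective-preservation then yields the proposition: taking infima on both sides gives equal minima, and a minimizer on either side maps to a minimizer on the other via the stated formulas, which is exactly the content of (a) and (b). I expect the only point requiring care to be the bookkeeping around the diagonal: one must verify that acyclicity forces $\det(I-B)=1$ (so that $\Gamma_{ii}=D_{ii}^{1/2}$ rather than some combination of entries) and that the domain of \eqref{Problem:3} is effectively restricted to $\Gamma_{ii}>0$ by the logarithmic term, which is what makes the inverse map well defined. I do not anticipate a genuine obstacle beyond this careful tracking of the diagonal and of the positivity constraint.
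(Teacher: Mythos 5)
Your proposal is correct and takes essentially the same approach as the paper: the change of variables $\Gamma=(I-B)D^{1/2}$ with inverse $D=\mathrm{diag}(\Gamma)^2$, $B=I-\Gamma\,\mathrm{diag}(\Gamma)^{-1}$, the identification of the two feasible sets (the paper's Lemma~\ref{prop:main_constraint}), the matching of the trace, log-determinant, and $\ell_0$ terms, and the observation that the logarithmic barrier makes the constraint $\Gamma_{ii}>0$ implicit. The only cosmetic difference is that you derive $\det\Gamma=\prod_i\Gamma_{ii}$ from $\det(I-B)=1$ by permuting $I-B$ to unit-triangular form, whereas the paper applies the same permutation argument directly to $\Gamma$.
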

The proof of Proposition~\ref{prop:main} is in Appendix~\ref{proof:prop_main}. This result states that instead of \eqref{Problem:original}, one can equivalently solve the reformulated optimization problem \eqref{Problem:3}. Furthermore, due to the equivalence of the minimizers, the nonzero sparsity pattern in the off-diagonal of an optimal solution $\hat{\Gamma}^\mathrm{opt}$ of \eqref{Problem:3} encodes the same directed acyclic graph structure as a minimizer $\hat{B}^\mathrm{opt}$ of \eqref{Problem:original}. The formulation \eqref{Problem:3} replaces the non-convex negative log-likelihood term of \eqref{Problem:original} with the term $\sum_{i=1}^m-2\log(\Gamma_{ii})+\mathrm{tr}(\Gamma\Gamma^\T\hat{\Sigma}_n)$, which is a convex function of $\Gamma$ consisting of a quadratic term plus sum of univariate logarithm terms. These appealing properties of the estimator \eqref{Problem:3} 
enable an efficient mixed-integer programming framework, which we discuss next.

\subsection{A Convex Mixed-Integer Program}
\label{sec:mixed_integer}
Various integer programming formulations have been proposed to encode acyclicity constraints, including the cutting plane method \citep{Grtschel1985OnTA,wolsey1999integer}, topological ordering \citep{Park17}, and layered network formulation \citep{Manzour21}. Here, we adopt the layered network formulation as it has been shown to perform the best with continuous data; see \cite{Manzour21} for more details. 

Before we present our formulation, we first introduce some notation. We denote the edge set $E_\text{super}$ as a \emph{superstructure} of $E^\star$ (true edge set) with the property that $E^\star \subseteq E_\text{super}$. Here, the superstructure $E_\text{super}$ may be bi-directional and consist of cycles with no self-loops. When ${E}_\text{super}$ is sparse, we can significantly reduce the search space. A good candidate for a superstructure is the moral graph corresponding to the true directed acyclic graph, which can be readily estimated from data using the \emph{graphical lasso} {\citep{friedman2007sparse}}.

Following previous work on the layered network formulation,  to efficiently encode the acyclicity constraint, we add two sets of decision variables to the optimization model. The first is the set of binary variables $\{g_{jk} \in \{0,1\}:(j,k) \in E_\text{super}\}$. These variables are used to represent the presence or absence of edges in the estimated directed acyclic graph. The second is the set of continuous variables $\{\psi_{k} \in [1,m]: k \in \{1,\ldots,m\}\}$ representing the \emph{layer value} of each node in the estimated graph with the property that an ancestor of a node in the graph has a higher layer value. Formally, the layered network formulation of Problem \eqref{Problem:3} is:
\vspace{-0.3cm} 
\begin{subequations}
\label{problem:big-M}
\begin{align}
\label{eq:big-M_obj}\min\limits_{\substack{\Gamma \in \mathbb{R}^{m \times m},\psi \in [1,m]^m\\ g \in \{0,1\}^{|E_\text{super}|}}} & \sum_{i=1}^m-2\log(\Gamma_{ii})+\mathrm{tr}(\Gamma\Gamma^\T\hat{\Sigma}_n) + \lambda^2\sum_{(j,k) \in {E}_\text{super}}g_{jk} \\
\text { s.t. }
& -M g_{j k} \leq \Gamma_{j k} \leq M g_{j k}  \quad ((j, k) \in {E}_\text{super}), \label{eq:big_M}\\
&M \geq \Gamma_{ii} \quad (i=1,\ldots, m), \label{eq:big_M-diag}\\
\label{eq:acyclic}&1 - m + m g_{jk} \leq \psi_k - \psi_j \quad ((j, k) \in{E}_\text{super}).
\end{align} 
\end{subequations}
\vspace{-0.5cm}\\
The constraints \eqref{eq:big_M} are so-called \emph{big-M constraints} \citep{Park17,Manzour21} that bound the magnitude of the entries of $\Gamma$ by a large $M$ to ensure that when $g_{jk}=0$, $\Gamma_{jk}=0$ and when $g_{jk}=1$ this constraint is redundant; we will explore choices for $M$ below. While constraints \eqref{eq:big_M-diag} are redundant for diagonal entries for large enough $M$, they are beneficial for computational efficiency. Furthermore, since more degrees of freedom yield better likelihood fit, $g_{jk}=1$ generally yields $\Gamma_{jk} \neq 0$, and thus the regularization term $\|\Gamma - \mathrm{diag}(\Gamma)\|_{\ell_0}$ in \eqref{Problem:3} is identical to $\sum_{(j,k) \in {E}_\text{super}}g_{jk}$ in \eqref{problem:big-M}. Finally, the constraint \eqref{eq:acyclic}, together with \eqref{eq:big_M}, ensures that $\Gamma$ encodes a directed acyclic graph. To see why, suppose there is a directed path in $\Gamma$ from node $j$ to node $k$ and a directed path from $k$ to $j$. The constraint \eqref{eq:acyclic} then ensures that $\psi_{k} \geq \psi_{j}+1$ and $\psi_{j} \geq \psi_{k}+1$, resulting in a contradiction. 
We have thus shown that for $M$ large enough, e.g., $M$ equaling the maximum nonzero entry, in magnitude, of the optimal $\Gamma$ in \eqref{Problem:3}, the mixed-integer program \eqref{problem:big-M} is equivalent to \eqref{Problem:3}. One can use heuristic approaches \citep{Park17,kucukyavuz2022consistent} to select the value of $M$. Specifically, we solve the problem without any cycle prevention constraints and obtain solution $\hat{\Gamma}$, then let $M=2\max_{(i,j) \in {E}_\text{super}} |\hat{\Gamma}_{ij}|$.

\subsection{Perspective Strengthening for a Tighter Formulation}
Using a concept known as perspective strengthening \citep{cui13stronger,kucukyavuz2022consistent,wei2020convexification,wei2021ideal,Wei2023}, we can tighten the constraint set of \eqref{problem:big-M} without changing the optimal objective value. Such a tighter formulation can speed up computations by providing a better lower bound in the branch-and-bound procedure. 

Specifically, let $\delta\in \R^m_+$ be a non-negative vector with the property that $\hat{\Sigma}_n - D_\delta \succeq 0$, where $D_\delta = \mathrm{diag}(\delta_1, ..., \delta_m)$. By splitting the quadratic term $\Gamma\Gamma^\T \hat{\Sigma}_n = \Gamma\Gamma^\T (\hat{\Sigma}_n - D_\delta) + \Gamma\Gamma^\T D_\delta$ and the fact that $\mathrm{tr}(\Gamma\Gamma^\T D_\delta) = \sum_{j=1}^m\sum_{k=1}^m \delta_j\Gamma_{jk}^2$, we can express \eqref{eq:big-M_obj} as $-2\sum_{i=1}^m\log(\Gamma_{ii})+\mathrm{tr}(\Gamma\Gamma^\T Q) + \mathrm{tr}(\Gamma\Gamma^\T{D}_\delta) + \lambda^2\sum_{(j,k) \in {E}_\text{super}}g_{jk}$,  where $Q =  \hat{\Sigma}_n - D_\delta$. We add a new set of non-negative continuous variables $s_{jk}$ to represent $\Gamma_{jk}^2$, resulting in the following formulation:
\begin{subequations}
\allowdisplaybreaks
\label{problem:micp}
\begin{align}
\label{problem_obj:a}\min\limits_{\substack{\Gamma \in \mathbb{R}^{m \times m},\psi \in [1,m]^m\\ g \in \{0,1\}^{|E_\text{super}|}\\s \in \mathbb{R}^{|E_\text{super}|}}}  & \sum_{i=1}^m-2\log(\Gamma_{ii})+\mathrm{tr}(\Gamma\Gamma^\T Q) + \sum\limits_{(j,k)\in {E}_\text{super}}\delta_j s_{jk} + \sum_{i=1}^m \delta_i s_{ii} +\lambda^2\sum_{(j,k) \in {E}_\text{super}}g_{jk} \\
\label{problem_const:b}\text { s.t. } &  -M g_{j k} \leq \Gamma_{j k} \leq M g_{j k}  \quad ((j, k) \in {E}_\text{super}), \\
& M \geq \Gamma_{ii} \quad (i=1,\ldots, m), \\
&1 - m + m g_{jk} \leq \psi_k - \psi_j  \quad ((j, k) \in {E}_\text{super})\label{eq:dag_new},\\
& \label{eqn:s_cont} s_{j k}g_{j k} \geq \Gamma_{jk}^2 \quad((j, k) \in {E}_\text{super}), \quad s_{ii} \geq \Gamma_{ii}^2 \quad(i = 1,\dots,m),\\
&s_{j k} \leq M^2 g_{j k} \quad ((j, k) \in {E}_\text{super}),\quad s_{ii} \leq M^2 \quad (i=1,\ldots, m).
\end{align} 
\end{subequations}
To establish that the optimal objective values of \eqref{problem:big-M} and \eqref{problem:micp} are identical, it suffices to show that the optimal set of variables $(\hat{s}^\mathrm{opt},\hat{\Gamma}^\mathrm{opt})$ in \eqref{problem:micp} satisfy $\hat{s}^\mathrm{opt}_{jk} = (\hat{\Gamma}^\mathrm{opt}_{jk})^2$ for $(j,k) \in E_\text{super}$ and $\hat{s}^\mathrm{opt}_{ii} = (\hat{\Gamma}^\mathrm{opt}_{ii})^2$ ($i = 1,\dots,m$). From the constraints \eqref{problem_const:b} and \eqref{eqn:s_cont}, it follows that $s_{jk} \geq \Gamma_{jk}^2$ for $(j,k) \in E_\text{super}$. Since $\delta_j$ is non-negative in the objective \eqref{problem_obj:a}, we can get $\hat{s}^\mathrm{opt}_{jk} = (\hat{\Gamma}^\mathrm{opt}_{jk})^2$ for $(j,k) \in E_\text{super}$ and $\hat{s}^\mathrm{opt}_{ii} = (\hat{\Gamma}^\mathrm{opt}_{ii})^2$ ($i = 1,\dots,m$). The constraints $s_{j k} \leq M^2 g_{j k}$ and $s_{ii} \leq M^2$ are simply added to improve the computational efficiency. 

The continuous relaxation of constraint set of \eqref{problem:micp}, replacing the integer constraints with $g \in [0,1]^{|E_\text{super}|}$, is contained in the continuous relaxation of the one from \eqref{problem:big-M}: for every set of feasible variables in \eqref{problem:micp}, there exists a set of feasible variables in \eqref{problem:big-M} yielding the same objective value. This fact, which follows from the analysis of \cite{cui13stronger}, leads to better lower bounds in the branch-and-bound process. Hence, throughout, we use the formulation \eqref{problem:micp}. 

In the above formulation, there is some flexibility in the choice of the vector $\delta$. As larger values of $\delta$ lead to smaller continuous relaxation of the constraint set of \eqref{problem:micp} \citep{FRANGIONI2007181}, we choose $\delta$ by maximizing $\sum_{i=1}^m \delta_i$ subject to $\hat{\Sigma}_n - \mathrm{diag}(\delta) \succeq 0$ for $\delta_i \geq 0, i=1,\ldots, m$. This is a convex semi-definite program that can be solved efficiently. 

\section{Practical aspects: outer approximation and early stopping}

\subsection{Outer Approximation to Handle the Logarithm}
\label{sec:solution method}
Formulation \eqref{problem:micp} is a convex mixed-integer model that can be solved using the current optimization solvers, such as \texttt{Gurobi}. However, \texttt{Gurobi} handles the log terms in the objective by adding a new general constraint that uses a piecewise-linear approximation of the logarithmic function at predetermined breakpoints. Unless we use a large number of pieces in the piecewise-linear functions, which is computationally intensive, this approximation often leads to solutions that violate some constraints. To address this challenge, in this section, we describe an outer approximation method, based on ideas first proposed by \citet{Duran1986AnOA} {and widely applied across various problems \citep{Bertsimas2019certifiably, behdin2021sparse}}. 

Outer approximation is a cutting plane method that finds the optimal solution by constructing a sequence of piece-wise affine lower bounds for the logarithmic term in the objective function. We first replace $-2\log(\Gamma_{ii})$ with continuous variables $T_i \in \R, i=1,\ldots, m$ so that the objective function \eqref{problem_obj:a} becomes a quadratic function: 

\begin{equation*}
    c(T, \Gamma)= \sum_{i=1}^m T_i+\mathrm{tr}(\Gamma\Gamma^\T Q) + \sum\limits_{(j,k)\in {E}_\text{super}}\delta_j s_{jk} + \sum_{i=1}^m \delta_i s_{ii} +\lambda^2\sum_{(j,k) \in {E}_\text{super}}g_{jk}.
\end{equation*}
Without any constraint on $T$, the problem would be unbounded. Therefore, we iteratively add linear inequalities that are under-estimators of the logarithmic term to ensure that $T_{i} = -2\log(\Gamma_{ii})$ ($i = 1,\dots,m$) at the optimal solution. {We implement these cutting planes from outer approximation as lazy constraints and utilize \texttt{Gurobi} as our solver.}

The linear approximation of function $f(x) = -2\log(x)$ at a point $x_0$ is $f(x_0) + \nabla f(x_0) (x-x_0) = -2\log(x_0) - 2 (x - x_0) / x_0$. At iteration $t+1$, given the solution $\Gamma^{(t)}$, we add linear constraints $T_i \geq -2\log(\Gamma^{(t)}_{ii}) - 2(\Gamma_{ii}-\Gamma^{(t)}_{ii})/\Gamma^{(t)}_{ii}$ ($i = 1,\dots,m$). As $f(x)$ is convex, these linear approximations are under-estimators and they cut off the current solution unless $T^{(t)}_{i}$ equals $-2\log(\Gamma^{(t)}_{ii})$ ($i = 1,\dots,m$). The overall procedure is presented in Algorithm~\ref{alg:oa} and a simple illustrative example is given in Appendix~\ref{oa_example}.

\begin{algorithm}
\caption{Pseudocode for outer approximation}
\begin{tabbing}
\qquad Input: Sample covariance $\hat{\Sigma}_n$ and regularization $\lambda \in \R_+$\\
\qquad Initialize: $t\gets 1$; 
$\Gamma^{(t)}\gets \text{starting point}$;
$T^{(t)}\gets - \infty$\\
\qquad While $T^{(t)}_j < -2\log(\Gamma^{(t)}_{jj})$ for some $j = 1,\ldots,m$ do\\
\qquad \qquad  $\Gamma^{(t+1)}, T^{(t+1)} \gets 
  \arg\min\limits_{T, \Gamma, g, \psi} c(T, \Gamma) \text{ s.t. } T_j \geq -2\log(\Gamma^{(i)}_{jj}) - \frac{2}{\Gamma^{(i)}_{jj}}(\Gamma_{jj}-\Gamma^{(i)}_{jj})$,\\\hspace{2.5in}$(i= 1,\ldots,t; j=1,\ldots,m)$, \eqref{problem_const:b}-\eqref{eq:dag_new}\\
\qquad \qquad  $t \gets t + 1$\\
\qquad $\hat{\Gamma}^\mathrm{opt} \gets \Gamma^{(t)}$\\
\qquad Output: $\hat{\Gamma}^\mathrm{opt} \in \R^{m\times m}$ \\
\label{alg:oa}
\end{tabbing}
\label{alg:oa}
\end{algorithm}

\subsection{Early Stopping}
\label{sec:early_stopping}
During the branch-and-bound algorithm, we maintain a lower bound and an upper bound on the objective value of the objective function \eqref{problem_obj:a}.  Specifically, by relaxing the integer constraints to $g \in [0,1]^{|E_\text{super}|
}$, one can solve the convex problem easily and the optimal objective value of the relaxed problem provides a lower bound. If any integer variable is fractional in this solution, we can split the problem into two sub-problems by considering smaller or larger integer values for that variable. This process, which is known as the \emph{branch-and-bound} technique, creates a tree structure, where each node represents a problem and is connected to its potential sub-problems. If the relaxed solution of a node contains only integer values for all integer variables, then we have a feasible solution to the original problem, giving an upper bound of the objective, and the branching process terminates for that node. Throughout the optimization algorithm, we continuously update the upper and lower bounds. The optimality gap, $\gap$, of a solution is the difference between the upper and lower bounds of the objective at that solution and it should be $0$ at an optimal solution. Alternatively, we can stop the algorithm early, before reaching the optimal solution, when the optimality gap reaches a specified threshold. Due to the intrinsic computational complexity of solving mixed-integer programming problems, early stopping is often used in practice but without statistical justification, because the solution may be suboptimal.

\section{Theoretical results for early stopping}
\label{sec:theory}
In Section~\ref{sec:early_stopping}, we described how we can terminate the branch-and-bound algorithm to guarantee a desired optimality gap. In this section, we connect this optimality gap to the statistical properties of the terminated solution. This connection enables us to propose an optimality gap under which we can terminate the branch-and-bound procedure and attain a solution that is close to a member of true Markov equivalence class $\mathrm{MEC}(\mathcal{G}^\star)$ and is asymptotically consistent. Further, after suitable thresholding of the terminated solution, we show that the associated Markov equivalence class precisely coincides with the population Markov equivalence class. We consider the estimator \eqref{Problem:3} with the additional constraint that the edges are constrained to be within an input superstructure $E_\mathrm{super}$, where we assume $E^\star \subseteq \text{moral}(E^\star) \subseteq E_\mathrm{super}$ where $E^\star$ denotes the true edge set and $\text{moral}(E^\star)$ is the edge set corresponding to the true moral graph (the moral graph is the undirected graph of $\mathcal{G}^\star$ by adding edge between nodes that have common children and undirects all edges); Appendix\ref{sec:super_structure} shows how graphical lasso can yield superstructures that satisfy this property with high probability. We simply assume access to such a superstructure in the remainder. Furthermore, following the analysis
of \cite{vgBuhlmann}, we impose an additional constraint that $\|\Gamma_{,i}\|_{\ell_0} \leq \tilde{\alpha} n/\log m$ for all $i = 1,2,\dots,m$ and some constant $\tilde{\alpha}$.



Recall from \S{\ref{sec:modeling_framework}} that there may be multiple structural equation models that are compatible with the distributions $\mathcal{P}^{\star}$. Each equivalent structural equation model is specified by a directed acyclic graph; this directed acyclic graph defines a total ordering among the variables. Associated to each ordering $\pi$ is a unique structural equation model that is compatible with the distribution $\mathcal{P}^{\star}$. We denote the set of parameters of this model as $(\tilde{B}^{\star}(\pi),\tilde{\Omega}^{\star}(\pi))$. For the tuple $(\tilde{B}^{\star}(\pi),\tilde{\Omega}^{\star}(\pi))$, we define $\tilde{\Gamma}^{\star}(\pi) = \{I - \tilde{B}^{\star}(\pi)\}\tilde{\Omega}^{\star}(\pi)^{-1/2}$. We let $\Pi = \{\pi: \mathrm{support}(\tilde{B}^\star(\pi)) \subseteq E_\text{super}\}$. We will use the notation $s^\star = \|B^\star\|_{\ell_0}$ and $\tilde{s}= \tilde{s}^{\star}(\pi) = \|\tilde{B}^{\star}(\pi)\|_{\ell_0}$.

 \begin{assumption}\label{condition:condition4}{(sparsity of every equivalent causal model)  There exists some constant $\tilde{\alpha}$ such that for every $\pi \in \Pi$, $\|\tilde{B}^{\star}_{:j}(\pi)\|_{\ell_0} \leq \tilde{\alpha} n/\log m$.}
  \end{assumption}

\begin{assumption}(beta-min condition)\label{condition:condition5}
There exist constants $0 \leq \eta_1 < 1$ and $0 < \eta_0^2 < 1 - \eta_1$, such that for every $\pi \in \Pi$, the matrix $\tilde{B}^{\star}(\pi)$ has at least $(1 - \eta_1)\|\tilde{B}^{\star}(\pi)\|_{\ell_0}$ coordinates $k \neq j$ with $|\tilde{B}^{\star}_{kj}(\pi)| > \mathcal{O}\left[(\log m/n)^{1/2}\{(m/{s^\star})^{1/2}\lor1\} / \eta_0\right]$. 
\end{assumption}

\begin{assumption} (bounded minimum and maximum eigenvalues) The eigenvalues of $\Sigma^\star$ are bounded below by $\underline{\kappa}$ and above by $\bar{\kappa}$, respectively. Moreover, $\max_j \|\Sigma^\star_{j,}\|_1 \leq \kappa_1$ for some constant $\kappa_1$. Finally, let $N = [\Sigma^\star \otimes \Sigma^\star]_{E_\text{super},E_\text{super}}$ be the restriction of
$\Sigma^\star \otimes \Sigma^\star$ to coordinates in $E_\mathrm{super}$. Then, $\vertiii{N} > \kappa_2$ for some nonzero constant $\kappa_2>0$.
\label{assump:sample_cond}
\end{assumption}

\begin{assumption}(sufficiently large noise variances) For every permutation $\pi \in \Pi$, $\mathcal{O}(1)\geq \min_{j} \{\tilde{\Omega}^{\star}(\pi)\}_{jj} \geq  \mathcal{O}\{(s^\star\log m /n)^{1/2}\}$.
\label{ass:noise_lower_bound}
\end{assumption}

\begin{assumption}(faithfulness)\label{condition:faithfulness}
Every conditional independence relationship entailed in the underlying distribution of the variables is encoded in the population directed acyclic graph $\mathcal{G}^\star$. 
\end{assumption}
Here, Assumptions~\ref{condition:condition4}-\ref{assump:sample_cond} are similar to those in \cite{vgBuhlmann}. Assumption~\ref{ass:noise_lower_bound} is used to characterize the behavior of the early stopped estimate and is thus new relative to \cite{vgBuhlmann}. Assumption~\ref{condition:faithfulness} on faithfulness is used to connect our estimates to the population parameters in \eqref{eq:compact_sem_model}; we describe a relaxation of this condition shortly. 
{
\begin{theorem} Suppose Assumptions~\ref{condition:condition4}--\ref{condition:faithfulness} are satisfied with constants $\tilde{\alpha}$ and $\eta_0$ sufficiently small and that $n\geq\mathcal{O}(m)$. Let $\alpha_0 = (4/m) \land 0.05$. Suppose we let the optimality gap criterion of our algorithm be $\gap = \mathcal{O}(\lambda^2{s}^\star)$ and let $\hat{\Gamma}^\mathrm{early}$ be the early stopped estimate. Then, for $\lambda^2 \asymp \log m/n$,  letting $\hat{\pi}^\mathrm{early}$ be the ordering associated with $\hat{\Gamma}^\mathrm{early}$, we have that, with probability greater than $1-2\alpha_0$, $\|\hat{\Gamma}^\mathrm{early}-\tilde{\Gamma}^{\star}(\hat{\pi}^\mathrm{early})\|_F^2 \leq \mathcal{O}(\lambda^2s^\star)$, and $\|\tilde{\Gamma}^{\star}(\hat{\pi}^\mathrm{early})\|_{\ell_0} \asymp s^\star$.
\label{thm:main}
\end{theorem}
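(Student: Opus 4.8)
The plan is to adapt the penalized-likelihood oracle inequality used for the exact minimizer (following \cite{vgBuhlmann}) to the early-stopped incumbent, carrying the additive slack $\gap$ through the argument and exploiting that $\gap = \mathcal{O}(\lambda^2 s^\star)$ is of the same order as the intrinsic statistical error, so it does not degrade the rate. Write $\ell_n(\Gamma) = \sum_i -2\log\Gamma_{ii} + \mathrm{tr}(\Gamma\Gamma^\T\hat{\Sigma}_n)$ for the smooth part of the objective in \eqref{Problem:3} and $\ell(\Gamma)$ for its population analogue with $\Sigma^\star$ in place of $\hat{\Sigma}_n$, and let $\hat{s}^{\mathrm{early}}$ denote the number of off-diagonal nonzeros of $\hat{\Gamma}^\mathrm{early}$. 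Because $\hat{\Gamma}^\mathrm{early}$ is the incumbent feasible solution at termination and its objective lies within $\gap$ of the lower bound, which underestimates the global optimum, for every $\Gamma$ feasible in \eqref{problem:micp} I obtain the basic inequality $\ell_n(\hat{\Gamma}^\mathrm{early}) + \lambda^2 \hat{s}^{\mathrm{early}} \leq \ell_n(\tilde{\Gamma}^{\star}(\pi)) + \lambda^2 \tilde{s}^{\star}(\pi) + \gap$ for every $\pi\in\Pi$. I would take $\pi = \hat{\pi}^\mathrm{early}$, after checking $\hat{\pi}^\mathrm{early}\in\Pi$ (the incumbent's support lies in $E_\text{super}$ and $E^\star\subseteq\mathrm{moral}(E^\star)\subseteq E_\text{super}$), so that the comparison matrix $\tilde{\Gamma}^{\star}(\hat{\pi}^\mathrm{early})$ shares the ordering of $\hat{\Gamma}^\mathrm{early}$.

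Next I would separate population signal from estimation noise. Since the logarithmic terms of $\ell_n$ and $\ell$ coincide, $\ell_n(\Gamma) - \ell(\Gamma) = \mathrm{tr}[\Gamma\Gamma^\T(\hat{\Sigma}_n - \Sigma^\star)]$; setting $\Delta = \hat{\Gamma}^\mathrm{early} - \tilde{\Gamma}^{\star}(\hat{\pi}^\mathrm{early})$, the empirical-process remainder is $\mathrm{tr}[(\Delta\tilde{\Gamma}^{\star\T} + \tilde{\Gamma}^{\star}\Delta^\T + \Delta\Delta^\T)(\hat{\Sigma}_n - \Sigma^\star)]$, which I would control via the max-norm concentration $\|\hat{\Sigma}_n - \Sigma^\star\|_\infty = \mathcal{O}((\log m/n)^{1/2})$ (sub-Gaussianity supplied by the spectral bounds of Assumption~\ref{assump:sample_cond}) together with the column sparsity of $\tilde{\Gamma}^{\star}$ and $\Delta$ (Assumption~\ref{condition:condition4}), producing a bound of order $\lambda\|\Delta\|_F (s^\star)^{1/2}$ up to lower-order terms. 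For the population part I would use the key identity that $\tilde{\Gamma}^{\star}(\pi)\tilde{\Gamma}^{\star}(\pi)^\T = (\Sigma^\star)^{-1}$ for every $\pi\in\Pi$: restricted to matrices triangular with respect to $\hat{\pi}^\mathrm{early}$, $\ell(\Gamma) = \mathrm{tr}(\Gamma\Gamma^\T\Sigma^\star) - \log\det(\Gamma\Gamma^\T)$ is strictly convex in $\Theta = \Gamma\Gamma^\T$ and minimized exactly at $\tilde{\Gamma}^{\star}(\hat{\pi}^\mathrm{early})$, the $\hat{\pi}^\mathrm{early}$-ordered Cholesky factor of $(\Sigma^\star)^{-1}$; hence the first-order term vanishes on the subspace containing $\Delta$ and strong convexity gives $\ell(\hat{\Gamma}^\mathrm{early}) - \ell(\tilde{\Gamma}^{\star}) \geq \underline{\kappa}\|\Delta\|_F^2$, the curvature coming from the per-column Hessian $2\Sigma^\star\succeq 2\underline{\kappa} I$ and the nonnegative curvature of $-2\log$ on the diagonal (kept bounded away from $0$ by Assumption~\ref{ass:noise_lower_bound}).

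Combining the three displays yields $\underline{\kappa}\|\Delta\|_F^2 \leq \gap + \lambda^2(\tilde{s} - \hat{s}^{\mathrm{early}}) + c\lambda\|\Delta\|_F(s^\star)^{1/2} + \text{(l.o.t.)}$; since $\gap = \mathcal{O}(\lambda^2 s^\star)$ and $\tilde{s} = \mathcal{O}(s^\star)$, solving the resulting quadratic in $\|\Delta\|_F$ gives $\|\Delta\|_F^2 = \mathcal{O}(\lambda^2 s^\star)$, the first conclusion. For the sparsity claim $\|\tilde{\Gamma}^{\star}(\hat{\pi}^\mathrm{early})\|_{\ell_0} \asymp s^\star$, I would exploit that $\tilde{\Gamma}^{\star}(\pi)\tilde{\Gamma}^{\star}(\pi)^\T = (\Sigma^\star)^{-1}$ makes $\ell(\tilde{\Gamma}^{\star}(\pi))$ independent of $\pi$, so at the population level the penalized objective is governed solely by $\lambda^2 \tilde{s}^{\star}(\pi)$, whose minimum over $\Pi$ is $\asymp s^\star$ and is attained on $\mathrm{MEC}(\mathcal{G}^\star)$ (using faithfulness, Assumption~\ref{condition:faithfulness}, under which every compatible directed acyclic graph shares the skeleton of $\mathcal{G}^\star$). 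A $\gap$-near-optimal feasible solution therefore cannot correspond to an ordering whose model is of strictly larger order than $s^\star$, giving the upper bound, while the beta-min separation of Assumption~\ref{condition:condition5} guarantees the gap between sparse and denser competitors is detectable at the chosen $\lambda$.

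The main obstacle is that $\hat{\pi}^\mathrm{early}$ is random and data-dependent, so the basic inequality and the empirical-process bound must hold simultaneously over all admissible orderings rather than for a single fixed one. I would address this by establishing the max-norm concentration and the induced bounds uniformly over the finite set $\Pi$ via a union bound, absorbing the resulting failure probability into the stated $1 - 2\alpha_0$ through the choice $\alpha_0 = (4/m)\wedge 0.05$; it is precisely here that $\tilde{\alpha}$ and $\eta_0$ must be taken sufficiently small and $n \geq \mathcal{O}(m)$ for the per-ordering guarantees to be uniform. Tying the optimization-level quantity $\gap$ back to the combinatorial object---certifying $\hat{\pi}^\mathrm{early}\in\Pi$ and the two-sided sparsity control $\tilde{s}\asymp s^\star$---is the most delicate part of the argument.
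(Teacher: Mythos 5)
Your starting point is the same as the paper's: the basic inequality with the $\gap$ slack absorbed into the $\mathcal{O}(\lambda^2 s^\star)$ error (this is exactly the role of \eqref{eqn:basic} and Lemma~\ref{lem:connect_vgb} in the paper, whose key observation is that the proof of Theorem~3.1 of \cite{vgBuhlmann} uses optimality only through that inequality). The gap is in how you go from the basic inequality to the Frobenius bound. You propose population strong convexity, $\ell(\hat{\Gamma}^\mathrm{early})-\ell(\tilde{\Gamma}^\star)\geq \underline{\kappa}\|\Delta\|_F^2$, plus control of the empirical-process remainder by the max-norm concentration $\|\hat{\Sigma}_n-\Sigma^\star\|_\infty=\mathcal{O}\{(\log m/n)^{1/2}\}$ and column sparsity. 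For the second-order term this fails: the best such bound is $|\mathrm{tr}\{\Delta\Delta^\T(\hat{\Sigma}_n-\Sigma^\star)\}| = |\sum_k \Delta_{:k}^\T(\hat{\Sigma}_n-\Sigma^\star)\Delta_{:k}| \leq \|\hat{\Sigma}_n-\Sigma^\star\|_\infty \max_k\|\Delta_{:k}\|_{\ell_0}\,\|\Delta\|_F^2 \leq \mathcal{O}\{\tilde{\alpha}(n/\log m)^{1/2}\}\|\Delta\|_F^2$, and the coefficient $\tilde{\alpha}(n/\log m)^{1/2}$ diverges with $n$, so it cannot be absorbed by $\underline{\kappa}$ no matter how small the constant $\tilde{\alpha}$ is. This term is not ``lower order''; it is precisely why the paper does not use population curvature at all, but instead imports the sparse restricted-eigenvalue bound (Lemma~\ref{lem:vgb_73}, i.e., Theorem~7.3 of \cite{vgBuhlmann}) to obtain \emph{empirical} restricted curvature on the set $\mathcal{T}_0$ --- note that this is also where ``$\tilde{\alpha}$ sufficiently small'' genuinely enters, through $1/K_1 = 3\underline{\kappa}/4 - \{2(t+\log m)/n\}^{1/2} - 3\bar{\kappa}\tilde{\alpha}^{1/2}>0$, not through a union bound over $\Pi$ (which could not work anyway: $|\Pi|$ can be of order $m!$, while the available concentration gives only polynomially small failure probabilities).

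There are two further structural gaps. First, instantiating the basic inequality at $\pi=\hat{\pi}^\mathrm{early}$ requires $\tilde{\Gamma}^\star(\hat{\pi}^\mathrm{early})$ to be feasible, i.e.\ $\hat{\pi}^\mathrm{early}\in\Pi$, and your proposed check does not deliver this: $\Pi$-membership is a property of the \emph{population} I-map under the ordering, not of the incumbent. For the chain $1\to2\to3$ with $E_\text{super}$ equal to the moral graph, the incumbent DAG $\{1\to2,\,3\to2\}$ is feasible yet every topological ordering of it (e.g.\ $3,1,2$) produces a population I-map containing an edge between $1$ and $3$, which lies outside $E_\text{super}$; so $\hat{\pi}^\mathrm{early}\notin\Pi$ is possible. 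The paper avoids this issue entirely: its basic inequality compares only with the true $\Gamma^\star=(I-B^\star){D^\star}^{1/2}$, which is feasible since $E^\star\subseteq E_\text{super}$, and closeness to $\tilde{\Gamma}^\star(\hat{\pi}^\mathrm{early})$ comes out of Lemma~\ref{lem:connect_vgb} as a conclusion, not as an input. Second, an early-stopped incumbent, unlike a global minimizer, need not be scale-optimized, so the paper needs the rescaling arguments (Lemma~\ref{lem:trace_m}, Remark~\ref{rem:tracem}, and the identity $\hat{\Omega}^\mathrm{early}_{jj}=\|\mathcal{X}_j-\mathcal{X}\hat{B}^\mathrm{early}_{:j}\|_2^2/n$) to fix $\mathrm{tr}\{\hat{\Gamma}^\mathrm{early}(\hat{\Gamma}^\mathrm{early})^\T\hat{\Sigma}_n\}=m$ and to prove the noise-variance lower bound of Proposition~\ref{thm:vgb_early}(b) via Cauchy interlacing; these are what place $\hat{B}^\mathrm{early}$ in $\mathcal{T}_0$ and enable the final mean-value-theorem conversion to the $\Gamma$-parameterization, and they have no counterpart in your sketch. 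Finally, note your concluding inequality contains $\lambda^2\tilde{s}$, so deducing $\|\Delta\|_F^2=\mathcal{O}(\lambda^2 s^\star)$ presupposes $\tilde{s}\lesssim s^\star$ --- but $\|\tilde{B}^\star(\hat{\pi}^\mathrm{early})\|_{\ell_0}\asymp s^\star$ is itself the theorem's second claim, proved in \cite{vgBuhlmann} via the beta-min condition; as written, your argument is circular at this point.
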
}
The proof of Theorem~\ref{thm:main} is in Appendix\ref{proof:thm_1}. The result guarantees that our early stopping optimization procedure accurately estimates certain reordering of the population model. As described in Appendix\ref{sec:comparison_early_stopping_theory}, the optimal solution of \eqref{Problem:3} (without early stopping) exhibits a similar convergence (to the population quantity) as the solution obtained from early stopping. Thus, our result highlights that early-stopping provides computational gain, without sacrificing statistical accuracy; see Section~\ref{subsec:early_stopping} for numerical support.

For accurately estimating the edges of the population model, we need a stronger version of the beta-min condition \cite{vgBuhlmann}, dubbed the strong beta-min condition. Let $d_{\max} = \max_{i} |\{j: (i,j)\in E_\mathrm{super} \text{ or } (j,i) \in E_\mathrm{super}\}|$, where for sparse true moral graphs, one can expect $d_{\max} = \mathcal{O}(1)$; see Appendix\ref{sec:super_structure}. 
\begin{assumption}(strong beta-min condition)\label{condition:strong_beta}
There exists constant $0 < \eta_0^2 < 1/s^\star$, such that for any $\pi \in \Pi$, the matrix $\tilde{B}^{\star}(\pi)$ has all of its nonzero coordinates $(k,j)$ satisfy $|\tilde{B}^{\star}_{kj}(\pi)| > \mathcal{O}\{(d_{\max}m\log m/n)^{1/2}/ \eta_0\}$. 
\end{assumption}

With Assumption~\ref{condition:strong_beta} we can guarantee that the estimated model is close to a member of the Markov equivalence class of the underlying directed acyclic graph. For a member of the population Markov equivalence class, let $(B_\mathrm{mec}^\star,\Omega_\mathrm{mec}^\star)$ be the associated connectivity matrix and noise matrix that specify an equivalent model as \eqref{eq:compact_sem_model}. Furthermore, define $\Gamma_\mathrm{mec}^\star = (I-B_\mathrm{mec}^\star){\Omega_\mathrm{mec}^\star}^{-1/2}$. 

{\begin{theorem} Suppose that $\lambda^2 \asymp d_{\max}m\log m/n$, and assumptions of Theorem~\ref{thm:main} as well as Assumption~\ref{condition:strong_beta} hold. Suppose $d_{\max} \leq \mathcal{O}\{(n/\log m)^{1/2}\}$ and that $n \geq \mathcal{O}(m)$. Suppose we let the optimality gap criterion of our algorithm be $\gap= c\lambda^2$ for some $0<c<1$. Then, with probability greater than $1-2\alpha_0$, there exists a member of the population Markov equivalence class with associated parameter $\Gamma^\star_\mathrm{mec}$ such that i) $\|\hat{\Gamma}^\text{early}-\Gamma^\star_\mathrm{mec}\|_F^2 \leq \mathcal{O}(\lambda^2{s}^\star)$ and ii) we recover the population Markov equivalence class; that is, $\mathrm{MEC}(\hat{\mathcal{G}}) = \mathrm{MEC}({\mathcal{G}}^\star)$ where $\hat{\mathcal{G}} = \mathcal{G}\{\hat{\Gamma}^\mathrm{early}-\mathrm{diag}(\hat{\Gamma}^{\mathrm{early}})\}$ is the estimated directed acyclic graph.
\label{corr:equiv}
\end{theorem}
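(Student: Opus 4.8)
The plan is to build directly on Theorem~\ref{thm:main} and upgrade its conclusion from ``close to some equivalent structural equation model'' to ``close to a member of $\mathrm{MEC}(\mathcal{G}^\star)$, with exact recovery of the equivalence class.'' First I would verify that the hypotheses of Theorem~\ref{thm:main} remain in force under the present, more stringent, choices $\lambda^2 \asymp d_{\max} m \log m / n$ and $\gap = c\lambda^2$: since $c\lambda^2 < \lambda^2 \leq \mathcal{O}(\lambda^2 s^\star)$, the tighter optimality gap only strengthens the early-stopping guarantee, and the larger $\lambda$ remains compatible with Assumptions~\ref{condition:condition4}--\ref{condition:faithfulness}. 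This yields, with probability at least $1 - 2\alpha_0$, both $\|\hat{\Gamma}^\mathrm{early} - \tilde{\Gamma}^\star(\hat{\pi}^\mathrm{early})\|_F^2 \leq \mathcal{O}(\lambda^2 s^\star)$ and $\|\tilde{\Gamma}^\star(\hat{\pi}^\mathrm{early})\|_{\ell_0} \asymp s^\star$ for the ordering $\hat{\pi}^\mathrm{early}$ returned by the algorithm.

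Next I would establish exact support recovery. The key point is that Assumption~\ref{condition:strong_beta} holds uniformly over every $\pi \in \Pi$, so it applies to $\hat{\pi}^\mathrm{early}$ without our first knowing whether $\hat{\pi}^\mathrm{early}$ is a minimal ordering. Because $\tilde{\Gamma}^\star(\pi) = \{I - \tilde{B}^\star(\pi)\}\tilde{\Omega}^\star(\pi)^{-1/2}$ and the noise variances are bounded above and below by Assumption~\ref{ass:noise_lower_bound}, the strong beta-min bound transfers (up to constants) to the off-diagonal entries of $\tilde{\Gamma}^\star(\hat{\pi}^\mathrm{early})$, giving a uniform signal floor of order $\lambda/\eta_0$. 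Since $\eta_0^2 < 1/s^\star$, this floor exceeds $\lambda\sqrt{s^\star}$, while the Frobenius bound caps the entrywise estimation error at $\mathcal{O}(\lambda\sqrt{s^\star})$. Choosing $\eta_0$ sufficiently small makes the floor more than twice the error, so thresholding neither drops a true edge nor adds a spurious one. Because the off-diagonal support of $\Gamma$ records exactly the directed edges, this gives $\hat{\mathcal{G}} = \mathcal{G}\{\tilde{B}^\star(\hat{\pi}^\mathrm{early})\}$ as directed graphs, with $\tilde{s}^\star(\hat{\pi}^\mathrm{early})$ edges.

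It then remains to show that $\hat{\pi}^\mathrm{early}$ is in fact a minimal ordering, so that $\mathcal{G}\{\tilde{B}^\star(\hat{\pi}^\mathrm{early})\} \in \mathrm{MEC}(\mathcal{G}^\star)$. Here I would use faithfulness (Assumption~\ref{condition:faithfulness}): all equivalent models $\tilde{\Gamma}^\star(\pi)$ entail the same distribution $\mathcal{P}^\star$ and hence share an identical population likelihood, so their population penalized objectives differ only through the edge-count penalty $\lambda^2\tilde{s}^\star(\pi)$; the minimum edge count $s^\star$ is attained exactly by members of $\mathrm{MEC}(\mathcal{G}^\star)$, and any non-minimal ordering costs at least one extra edge, i.e.\ at least $\lambda^2$. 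Suppose, for contradiction, $\tilde{s}^\star(\hat{\pi}^\mathrm{early}) \geq s^\star + 1$. By the support-recovery step the penalty of $\hat{\Gamma}^\mathrm{early}$ is $\lambda^2\tilde{s}^\star(\hat{\pi}^\mathrm{early}) \geq \lambda^2(s^\star+1)$, whereas a feasible MEC member $\Gamma^\star_\mathrm{mec}$ carries penalty $\lambda^2 s^\star$; since the two quadratic-plus-log likelihood terms are each close to the same population value, their empirical gap is of lower order, so the objective of $\hat{\Gamma}^\mathrm{early}$ would exceed the optimum by more than $c\lambda^2$, violating $\gap = c\lambda^2$. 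Hence $\tilde{s}^\star(\hat{\pi}^\mathrm{early}) = s^\star$, and by faithfulness $\mathcal{G}\{\tilde{B}^\star(\hat{\pi}^\mathrm{early})\}$ lies in $\mathrm{MEC}(\mathcal{G}^\star)$. Setting $\Gamma^\star_\mathrm{mec} := \tilde{\Gamma}^\star(\hat{\pi}^\mathrm{early})$ gives part~i) directly from the Frobenius bound, while the support-recovery step gives $\mathrm{MEC}(\hat{\mathcal{G}}) = \mathrm{MEC}(\mathcal{G}^\star)$, which is part~ii).

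The main obstacle I anticipate is the minimality argument: turning the heuristic ``the two likelihood terms are close'' into a rigorous, uniform control of the empirical negative log-likelihood across all orderings in $\Pi$, showing that the stochastic fluctuations are smaller than the $(1-c)\lambda^2$ separation between minimal and non-minimal orderings. This is delicate because $|\Pi|$ may be large, because $\hat{\Gamma}^\mathrm{early}$ is only approximately---not exactly---an equivalent structural equation model, and because it compares objective values at a near-optimal but sub-optimal point against the true optimum; this is precisely the concentration-and-rounding analysis carried out in the proof of Theorem~\ref{thm:main}, which I would extend to the tighter gap $\gap = c\lambda^2$. A secondary, more routine, check is confirming that the constant in the strong beta-min floor dominates the constant in the Frobenius error bound by shrinking $\eta_0$, as the theorem's provision of ``sufficiently small $\eta_0$'' permits.
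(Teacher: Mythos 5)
Your proposal reproduces the paper's skeleton faithfully: the Frobenius bound for $\hat{\Gamma}^{\mathrm{early}}$ under the rescaled $\lambda$ and the strong beta-min condition, the support inclusion $\mathrm{support}\{\tilde{\Gamma}^{\star}(\hat{\pi}^{\mathrm{early}})\} \subseteq \mathrm{support}(\hat{\Gamma}^{\mathrm{early}})$ obtained by comparing the signal floor $\lambda/\eta_0$ against the estimation error $\lambda\sqrt{s^\star}$, and the final contradiction pitting the penalty excess $\lambda^2(\hat{s}^{\mathrm{early}}-s^\star) \geq \lambda^2$ against the optimality gap $c\lambda^2$. However, the step you defer --- a rigorous lower bound of the form $\ell_n\{\hat{\Gamma}^{\mathrm{early}}(\hat{\Gamma}^{\mathrm{early}})^\top\} - \ell_n\{\Gamma^\star_{\mathrm{mec}}(\Gamma^\star_{\mathrm{mec}})^\top\} \geq -(1-c)\lambda^2$ --- is the entire difficulty of this theorem, and the route you sketch for it would fail. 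Comparing each empirical likelihood to ``the same population value'' cannot give precision $o(\lambda^2)$: already for the fixed matrix $\Theta^\star$, the quantity $\mathrm{tr}(\Theta^\star\hat{\Sigma}_n)$ is distributed as $n^{-1}\chi^2_{nm}$, so $\ell_n\{\Gamma^\star_{\mathrm{mec}}(\Gamma^\star_{\mathrm{mec}})^\top\}$ fluctuates about its population value at order $(m/n)^{1/2}$, which dominates $(1-c)\lambda^2 \asymp d_{\max}m\log m/n$ as soon as $n \gg d_{\max}^2\, m\log^2 m$; the theorem only bounds $n$ from below, so this regime is not excluded. A union bound over $\Pi$ is beside the point (the failure already occurs for a single comparison), and the machinery of Theorem~\ref{thm:main} that you propose to extend (the basic inequality, the set $\mathcal{T}_0$, Lemma~\ref{lem:vgb_73}) yields Frobenius-rate error bounds, not likelihood-difference control at scale $\lambda^2$.

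What the paper actually does here is qualitatively different, and it is the genuinely new content of this proof relative to Theorem~\ref{thm:main}. It introduces the auxiliary unpenalized, support-constrained log-determinant estimator $\hat{\Theta}$ of \eqref{eqn:opt_const} with objective $f(\Theta) = -\log\det(\Theta)+\mathrm{tr}(\Theta\hat{\Sigma}_n)$: since $\hat{\Gamma}^{\mathrm{early}}(\hat{\Gamma}^{\mathrm{early}})^\top$ is feasible for that problem, $\ell_n\{\hat{\Gamma}^{\mathrm{early}}(\hat{\Gamma}^{\mathrm{early}})^\top\} \geq f(\hat{\Theta})$, while $\ell_n\{\Gamma^\star_{\mathrm{mec}}(\Gamma^\star_{\mathrm{mec}})^\top\} = f(\Theta^\star)$, so everything reduces to bounding $|f(\hat{\Theta})-f(\Theta^\star)|$. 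That difference is genuinely second order: by the KKT conditions $\nabla f(\hat{\Theta}) = -Z$ with $Z$ supported off $E_\mathrm{super}$, hence orthogonal to $\hat{\Theta}-\Theta^\star \in \Psi$, so the first-order Taylor term vanishes \emph{exactly}, and the quadratic remainder is controlled by the entrywise rate $\|\hat{\Theta}-\Theta^\star\|_\infty$ of Lemma~\ref{lemma:temp_const}, itself proved via Brouwer's fixed point theorem. It is this exact cancellation, not separate concentration of the two likelihoods, that achieves precision $(1-c)\lambda^2$; none of this machinery appears in your proposal. A secondary but real issue: your step-2 claim that no spurious edge appears is valid only for a thresholded estimate, whereas the theorem's $\hat{\mathcal{G}}$ is the raw support of $\hat{\Gamma}^{\mathrm{early}}$. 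In the paper, exclusion of spurious edges is not a consequence of the Frobenius bound at all; it is precisely the output of the contradiction argument ($\hat{s}^{\mathrm{early}} = s^\star$ combined with the support inclusion), so as written your part ii) is established only for a thresholded graph rather than for the $\hat{\mathcal{G}}$ in the statement.
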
}
The proof of Theorem~\ref{corr:equiv} is given in Appendix\ref{proof:thm_2}. This theorem guarantees that by taking $\lambda^2$ large enough, early stopping consistently estimates a population parameter corresponding to the underlying Markov equivalence class, and recovers the population Markov equivalence class. 

We remark that without the faithfulness condition in Assumption~\ref{condition:faithfulness}, we can guarantee that the early stopping procedure recovers what is known as the \emph{minimal-edge I-MAP}. The minimal-edge I-MAP is the sparsest set of directed acyclic graphs that induce a structural equation model that is compatible with the true data distribution.
Under faithfulness, the minimal-edge I-MAP coincides with the population Markov equivalence class \citep{vgBuhlmann}.

\section{Experiments}
\label{sec:experiments}
\subsection{Setup}

In this section, we illustrate the utility of our method over competing methods on synthetic and real directed acyclic graphs. The state-of-the-art approaches include the high-dimensional top-down approach in \cite{Chen19}, the high-dimensional bottom-up approach in \cite{ghoshal18a}, the mixed-integer second-order cone program in \cite{kucukyavuz2022consistent}, the PC algorithm in \cite{causalitybase}, the greedy sparsest permutations method in \cite{Solus21permutation}, and the greedy equivalence search method in \cite{chickering2002optimal}.  We supply the estimated moral graph as the input superstructure for our and \cite{kucukyavuz2022consistent}'s methods (see \S{\ref{sec:mixed_integer}}). The moral graph is estimated from data using the {graphical lasso} \citep{friedman2007sparse}; see Appendix\ref{sec:super_structure} for more description.
In Appendix\ref{sec:compare_benchmarks}, we also present results where the true moral graphs are used as input. All experiments are performed with a MacBook Air with an Apple M2 chip, 8-core CPU, and 8 GB of RAM with \texttt{Gurobi} 10.0.1 Optimizer. Our method and the method by \citet{kucukyavuz2022consistent} are implemented using \texttt{Python}. \citet{Chen19}'s top-down methods and \citet{ghoshal18a}'s bottom-up method are implemented in \texttt{R}.

As stated earlier, due to heteroscedastic noise, the underlying directed acyclic graph is generally identifiable up to the Markov equivalence class, represented by a completed partially directed acyclic graph. Thus, to evaluate the performance of the methods, we use the metric $d_\text{cpdag}$, which is the number of different entries between the unweighted adjacency matrices of the two completed partially directed acyclic graphs. 

Unless otherwise specified, we set the desired optimality gap in our branch-and-bound algorithm to zero. We terminate the algorithm if our branch-and-bound algorithm does not achieve the desired optimality gap within $50m$ seconds. We report the solution time (in seconds) and achieved relative optimality gap, $\rgap = (\text{upper bound} - \text{lower bound})/\text{lower bound}$, where $\text{upper bound}$ is the objective value associated with the best feasible solution and $\text{lower bound}$ represents the best obtained lower bound during the branch-and-bound process.

We use the Bayesian information criterion to choose the parameter $\lambda$. In our context, the Bayesian information criterion score is
$-2n\sum_{i=1}^m\log \hat{\Gamma}_{ii}+n\mathrm{tr}(\hat{\Gamma}\hat{\Gamma}^\T\hat{\Sigma}_n)+ k\log n$,
where $k$ is the number of nonzero entries in the estimated parameter $\hat{\Gamma}$. From our theoretical guarantees in \S{\ref{sec:theory}}, $\lambda^2$ should be on the order ${\log m/n}$. Hence, we choose $\lambda$ with the smallest Bayesian information criterion score among $\lambda^2 = c^2 \log m/n$ $(c=1,\ldots,15)$.
 The code to reproduce all the experiments is available at \texttt{https://github.com/AtomXT/MICP-NID}.

\subsection{Comparison to Other Benchmarks}
\label{subsec:comparison_benchmarks}
We compare the performance of our method with the benchmark methods on twelve publicly available networks sourced from \cite{Manzour21} and the Bayesian Network Repository (bnlearn). These networks vary in size, ranging from $m = 6$ to $m = 70$ nodes, and true number of edges, $s^\star$ ranging from 6 to 123. We indicate the size of each network next to the name of the network, e.g., Hepar2.70.123 indicates that the Hepar2 network has 70 nodes and 123 edges. For a given network structure, we generate $n = 500$ independent and identically distributed samples from \eqref{eq:compact_sem_model} where the nonzero entries of $B^\star$ are drawn uniformly at random from the set $\{-0.8, -0.6, 0.6, 0.8\}$ and diagonal entries of $\Omega^\star$ are chosen uniformly at random from the set $\{0.5, 1, 1.5\}$. In \S{\ref{subsec:var}}, we will explore a larger range of noise variances. We also present experiments on non-Gaussian error in Appendix\ref{sec:non-Gaussian}.

Table~\ref{tab:compare_benchmarks_est} summarizes the performance of all methods, averaged over {$30$} independent trials. Here, the symbol $*$ indicates that the achieved optimality gap is zero, so an optimal solution is found. The symbol T in the time column means that the time limit of $50m$ seconds was reached. We report the structural hamming distances of the undirected skeleton of the true directed acyclic graph and the corresponding skeleton of the estimated network, the true positive rate, and the false positive rate in Appendix\ref{sec:Other_metrics}. Compared to the top-down approach of \cite{Chen19}, the bottom-up approach of \cite{ghoshal18a}, the PC algorithm of \cite{causalitybase} and the greedy sparsest permutation method of \cite{Solus21permutation}, our method produces more accurate estimates, giving smaller $d_\mathrm{cpdag}$, and provides optimality guarantees, $\rgap$. 
For comparison, error standard deviations are also reported.
We also tested the high-dimensional top-down approach of \cite{Chen19}. However, since it consistently underperforms the top-down approach in these instances, we do not include it in the table. 

We observe that when our algorithm terminates at proven optimality in 7 out of 12 instances, it also produces more accurate estimates than the method of \cite{kucukyavuz2022consistent}. The improved statistical performance is because our method accounts for heteroscedastic noise or non-identifiability, while the other methods do not. { Nonetheless, we note that our optimization method can be slower than the mixed-integer second-order conic method of \cite{kucukyavuz2022consistent}, as it contains logarithmic terms in its objective to handle heteroscedastic noise as opposed to a simpler convex quadratic objective for the homoscedastic case. As compared to greedy equivalence search \cite[]{chickering2002optimal}, the proposed method shows a lower mean $d_{\mathrm{cpdag}}$ for all but one of the graphs and smaller standard deviation for all of the graphs; furthermore, for the graph ``cloud" where the mean $d_{\mathrm{cpdag}}$ of greedy equivalence search is smaller than ours, the difference is very small. The improved performance is attributable to the fact that while both methods aim to maximize the $\ell_0$-penalized log-likelihood function, the heuristic approach of the greedy equivalence search may not find the optimal solution, resulting in larger errors. To further elucidate the comparison with the greedy equivalence search, we present box plots of $d_{\mathrm{cpdag}}$ values in Appendix~\ref{sec:Other_metrics}, and apply a Wilcoxon signed-rank test to show that for all but three of the graphs, the improvement of our approach is statistically significant. 
Since the PC and greedy sparsest permutation algorithms underperform compared to the greedy equivalence search, their results are reported in Appendix\ref{app:pc-gsp}.}

\begin{table}[ht]
\def~{\hphantom{0}}
\caption{{Comparison of mixed-integer convex program and other benchmarks using estimated superstructure, if applicable}}
    \centering
    \resizebox{\columnwidth}{!}{\begin{tabular}{lcccccccccccccc}
    
        & \multicolumn{2}{c}{\hdbu} & \multicolumn{2}{c}{\td} & \multicolumn{3}{c}{\misocp} & \multicolumn{2}{c}{\ges} & \multicolumn{3}{c}{\micp}\\
       Network.$m$.$s^\star$ & Time  & $d_{\mathrm{cpdag}}$ & Time  & $d_{\mathrm{cpdag}}$ & Time  & \rgap & $d_{\mathrm{cpdag}}$ & Time & $d_{\mathrm{cpdag}}$ & Time  & \rgap & $d_{\mathrm{cpdag}}$ \\
       Dsep.6.6  &   $\leq 1$  & 9.2$\pm$4.4   & $\leq 1$   & 2.0$\pm$0.2 &  $\leq 1$  & * & 2.0$\pm$0 & $\leq 1$ & 2.0$\pm$0.4 & $\leq 1$  & * &2.0$\pm0$  \\
       Asia.8.8  &   $\leq 1$  & 20.7$\pm$4.8  & $\leq 1$   & 11.6$\pm$3.1 & $\leq 1$  & * & 11.0$\pm$2.0 & $\leq 1$ & 2.5$\pm$0.9  & $\leq 1$  & * & 2.1$\pm$0.4   \\
       Bowling.9.11  &   $\leq 1$  & 5.6$\pm$2.6 & $\leq 1$  & 7.0$\pm$2.1 & $\leq 1$  & * & 4.1$\pm$2.3 & $\leq 1$ & 2.8$\pm$1.4 & $2$  & * &2.0$\pm0$ \\
       InsSmall.15.25  &   $\leq 1$   & 39.9$\pm$7.1  & $\leq 1$   & 12.4$\pm$4.9 &  $4$ & *  & 8.3$\pm$1.4 & $\leq 1$ & 24.9$\pm$8.5 & T  & 0.05 &8.1$\pm$1.9    \\
       Rain.14.18  &   $\leq 1$   & 16.7$\pm$3.8   & $\leq 1$  & 3.0$\pm$1.6  &  $1$ & *  & 2.0$\pm0$ & $\leq 1$ & 6.9$\pm$4.8 & $104$  & * &2.0$\pm0$ \\
       Cloud.16.19  &   $\leq 1$   & 43.8$\pm$7.8 & $\leq 1$   & 20.4$\pm$4.0   & $\leq 1$ & *  & 19.0$\pm$4.1 & $\leq 1$ & 4.6$\pm$1.9 & $28$  & * &4.8$\pm$1.1   \\
       Funnel.18.18  &   $\leq 1$  & 15.9$\pm$3.0    & $\leq 1$   & 2.1$\pm$0.3   &  $\leq 1$  & * & 2.0$\pm$0.0 & $\leq 1$ & 4.0$\pm$4.0 & $42$  & * &2.0$\pm0$ \\
       Galaxy.20.22  &   $\leq 1$   & 44.2$\pm$3.9 & $\leq 1$   & 27.9$\pm$6.1  &  $\leq 1$  & * & 6.2$\pm$5.2 & $\leq 1$ & 3.1$\pm$2.8 & $200$  & * &1.0$\pm0$  \\
       Insurance.27.52  &   $\leq 1$  & 43.8$\pm$6.6   & $\leq 1$  & 35.8$\pm$5.7 &  T   & 0.05 & 14.2$\pm$4.7 & $\leq 1$ & 32.5$\pm$12.5 & T  & 0.28 &19.8$\pm$7.5   \\
       Factors.27.68 & $\leq 1$  & 26.1$\pm$5.8 & $\leq 1$   & 46.2$\pm$7.5  &T & 0.03 & 37.8$\pm$5.1 & $\leq 1$ & 72.2$\pm$9.2 & T  & 0.29 &51.1$\pm$8.5  \\
       Hailfinder.56.66  &   4  & 132.6$\pm$19.5    & $\leq 1$  & 57.6$\pm$10.3   &  T  & 0.08 & 23.0$\pm$9.3 & $\leq 1$ & 35.6$\pm$15.9 & T  & 0.19 &10.7$\pm$11.4   \\
       Hepar2.70.123  &   7   & 133.6$\pm$12.7   & 1   & 71.0$\pm$7.6  & T & 0.13  & 42.8$\pm$14.1 & $\leq 1$ & 71.6$\pm$22.8 & T  & 0.32 & 64.2$\pm$17.5  \\
    \end{tabular}}
   \caption*{
\hdbu, high-dimensional bottom-up; TD, top-down; MISOCP, mixed-integer second-order cone program; GES, greedy equivalence search;  $d_{\mathrm{cpdag}}$, differences between the true and estimated completed partially directed acyclic graphs; \rgap, relative optimality gap; T refers to reaching the time limit given by $50m$.}    \label{tab:compare_benchmarks_est}
\end{table}

\subsection{Early Stopping}
\label{subsec:early_stopping}
So far, we have set the desired level of optimality gap to be $\tau = 0$. As described in \S{\ref{sec:early_stopping}} and \S{\ref{sec:theory}}, one can set the optimality gap to $\tau = \mathcal{O}({\lambda^2s^\star })$ to speed up computations while retaining good statistical properties. {Since $s^\star$ is unknown, we can replace it with an upper bound $\bar{s}$ to get the maximum speedup while retaining good statistical properties. A fully dense DAG with $m$ nodes would have $m(m-1)/2$ edges. However, given our assumption that the underlying DAG is sparse, we can instead use $\bar{s}=m(m-1)/4$.} In Fig. ~\ref{fig:early_stopping}(a) , we present the upper and lower bounds of objectives during the solving process for the Galaxy instance. It can be seen that the upper bounds decrease fast at the beginning, indicating we find a solution that is close to optimal very fast. It takes longer for the lower bounds to converge to prove the optimality. The behavior of the upper and lower bounds illustrates how early stopping can reduce computational time without sacrificing statistical accuracy. We next empirically explore how changing $\tau$ in the range $\tau \in \{{0,\lambda^2\bar{s},m\lambda^2\bar{s}}\}$ impacts both the computational and statistical performances of our algorithm. Concretely, we generate a directed acyclic graph among $m$ nodes with $m$ directed edges using \texttt{randomDAG} from \texttt{R} package \texttt{pcalg}. For this network structure, we generate $n=400$ independent and identically distributed samples from \eqref{eq:compact_sem_model} where the nonzero entries of $B^\star$ are drawn uniformly at random from the set $\{-0.8, -0.6, 0.6, 0.8\}$ and the diagonal entries of $\Omega^\star$ are chosen uniformly at random from the set $\{0.5,1,1.5\}$. We then evaluate the performance of our algorithm for $m \in \{{15,20,25,30}\}$ and different values of $\tau$ over $30$ independent datasets. 

Fig.~\ref{fig:early_stopping} (b, c) summarizes the results. Here, the metric $d_{\mathrm{cpdag}}$ is scaled by $m$, the total number of edges in the true underlying directed acyclic graph. Further, time is scaled by $50m$. We observe that applying an early stopping criterion, using $\tau \geq \lambda^2\bar{s}$, leads to notable computational benefits. For example, when $\tau = 0$ and $m = 30$, we cannot solve 29 of the 30 instances within the time limit of $50m$ seconds, whereas early stopping with $\tau = \lambda^2\bar{s}$ finishes within the time limit for 25 instances with a total average time of approximately $730$ seconds across the 30 instances. Furthermore, as predicted by our theoretical results, the faster computation time does not come at a statistical cost when $\tau = \lambda^2\bar{s}$. For example, when $m = 20$, all 30 instances for $\tau = \{0,\lambda^2\bar{s}\}$ are completed and the $d_\mathrm{cpdag}$ metrics are similar. We also observe that when $\tau = m\lambda^2\bar{s}$, the scaled $d_{\mathrm{cpdag}}$ values are substantially larger than those of $\tau = \lambda^2\bar{s}$, further supporting our theoretical results.

\begin{figure}[t]
    \centering
\subfloat[Upper and lower bounds over time]{        \includegraphics[scale=0.33]{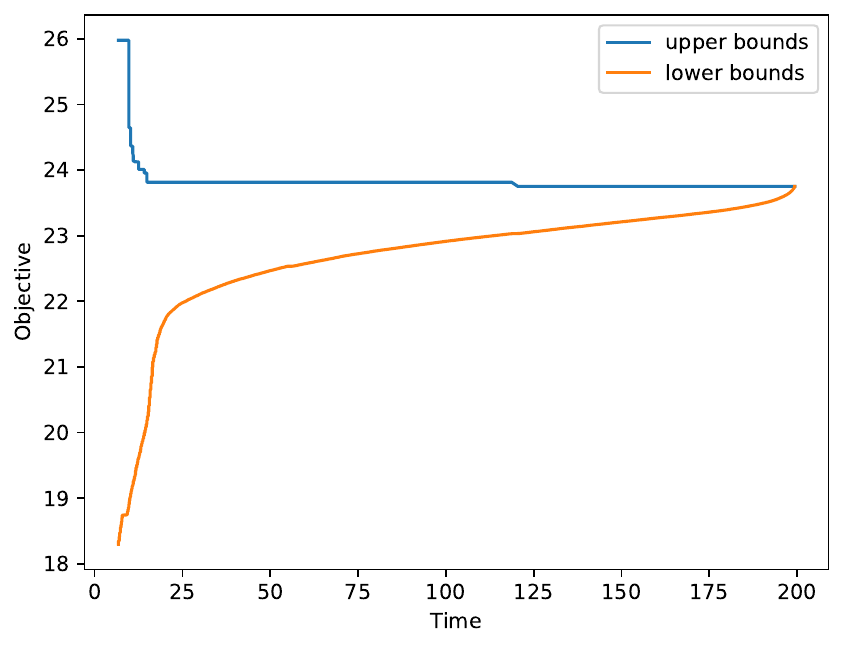}}
  \subfloat[Box plots of scaled $d_{\mathrm{cpdag}}$]{
        \includegraphics[scale=0.33]{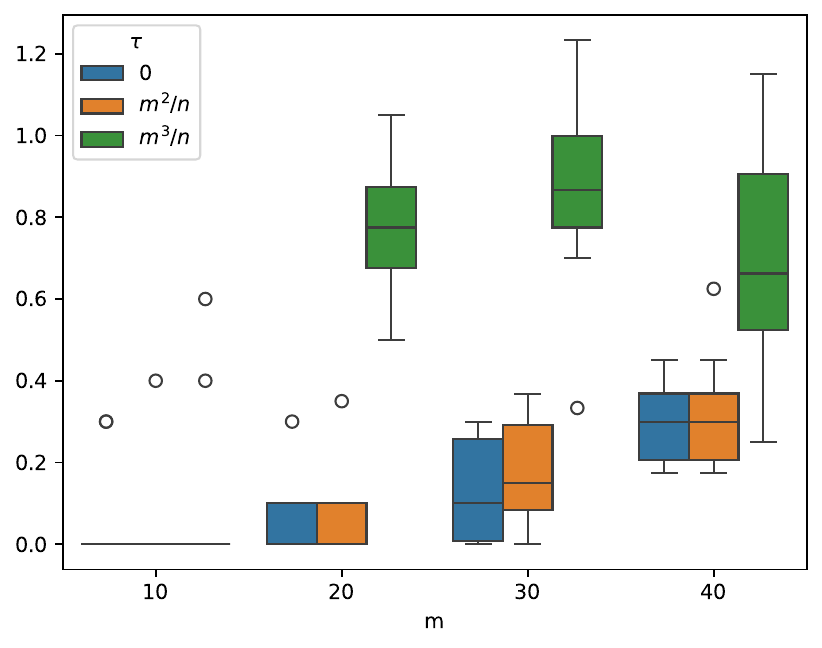}}
    \subfloat[Bar plots of average scaled time]{        \includegraphics[scale=0.33]{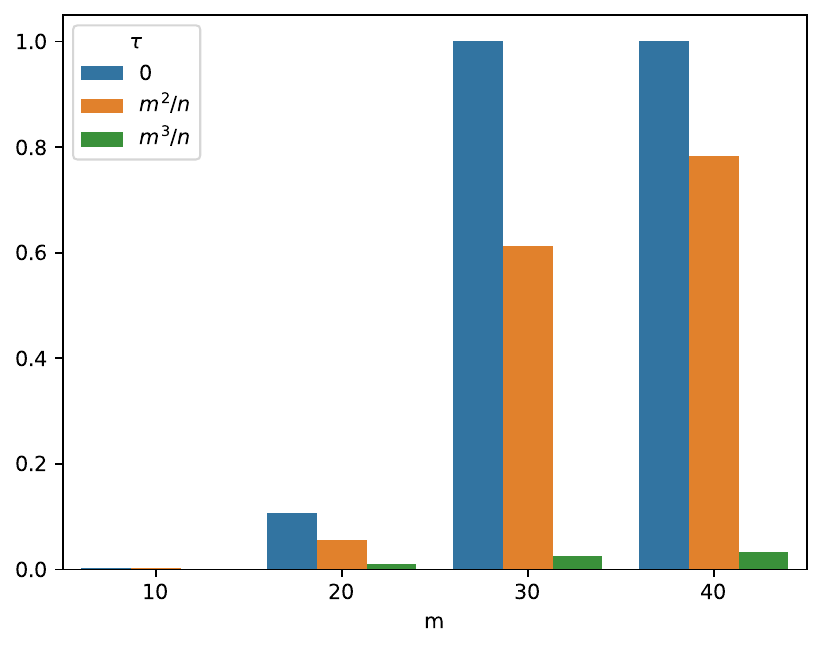}}
    \caption{(a) Upper and lower bounds from the branch-and-bound process for the Galaxy instance. {Scaled $d_{\mathrm{cpdag}}$ (b) and time (c) in early stopping results with $\tau \in\{0,\lambda^2\bar{s},m\lambda^2\bar{s}\}$ across 30 independent trials.}}
    \label{fig:early_stopping}
    
\end{figure}

\subsection{Comparison of General Constraint Attributes Versus Outer Approximation}
\label{subsec:OA}
In \S{\ref{sec:solution method}}, we proposed an outer approximation technique for handling the logarithm term in our mixed integer program \eqref{problem:micp}. We next numerically illustrate the computational and statistical benefits of using our outer approximation approach over \texttt{Gurobi}'s general constraint attribute. Specifically, we generate $30$ independent datasets according to the setup in \S{\ref{subsec:comparison_benchmarks}}. Table \ref{tab:compare_oa_general} compares the performance of solving \eqref{problem:micp} using the outer approximation technique via \texttt{Gurobi} lazy callbacks versus using the default general constraint attribute of \texttt{Gurobi}. {It can be seen that our method with outer approximation performs similarly as the \texttt{Gurobi} default in terms of $d_\text{cpdag}$ while offering computational advantages for instances that can be solved within the time limit.}




\begin{table}[!ht]
\def~{\hphantom{0}}
\caption{{Comparison of \texttt{Gurobi} default with our outer approximation method}}{
    \centering
    \scalebox{0.9}{\begin{tabular}{lcccccc}
        & \multicolumn{3}{c}{\texttt{Gurobi} default} & \multicolumn{3}{c}{Our method}\\
       Network($m$) & Time & \rgap & $d_{\mathrm{cpdag}}$ & Time & \rgap  & $d_{\mathrm{cpdag}}$ \\
       Dsep(6)  &   $3$ & *   & 2.0$\pm 0$  & $\leq 1$ & *     & 2.0$\pm 0$\\
       Asia(8)  &   $6$ & *   & 2.1$\pm 0.4$ &   $\leq 1$  & *    & 2.1$\pm 0.4$ \\
       Bowling(9)  &   $23$ & *   & 2.0$\pm 0$  &   $\leq 1$& *    & 2.0$\pm 0$ \\
       InsSmall(15)  &  T & 0.09     & 7.3$\pm 3.1$ &   T & 0.05& 7.6$\pm 2.2$ \\
       Rain(14)  &   $285$ & *      & 2.0$\pm 0$ &   $88$& *   & 2.0$\pm 0$ \\
       Cloud(16)  &   $119$ & *    & 4.5$\pm 1.1$  &   $11$& *     & 4.6$\pm 1.0$ \\
       Funnel(18)  &   $130$ & *    & 2.0$\pm 0$ &   $62$& *     & 2.1$\pm 0.3$ \\
       Galaxy(20)  &   $764$ & *     & 1.1$\pm 0.5$ &   $114$& *    & 1.0$\pm 0$ \\
       Insurance(27)  &   T & 0.26      & 18.1$\pm 7.6$  &   T  & 0.24& 10.7$\pm 7.0$ \\
       Factors(27) & T & 0.30  & 56.1$\pm 7.2$ & T & 0.30& 47.7$\pm 9.5$\\
       Hailfinder(56)  &   T & 0.20    & 8.6$\pm 5.3$  &   T & 0.23& 14.9$\pm 11.2$ \\
       Hepar2(70)  &   T & 0.30      & 57.3$\pm 9.2$   &   T & 0.29& 55.0$\pm 13.3$\\
    \end{tabular}}}
    \label{tab:compare_oa_general}
 \caption*{
Here, $d_{\mathrm{cpdag}}$, differences between the true and estimated completed partially directed acyclic graphs; \rgap, relative optimality gap; T indicates reaching the time limit of $50m$. All results are averaged across 30 independent trials.}
\end{table}

\subsection{Robustness to Different Amounts of Noise Heteroscedasticity}
\label{subsec:var}
We provide more details of the experiment presented in \S{\ref{sec:our_contributions}}. We consider the empirical setup in \S{\ref{subsec:early_stopping}}, with the exception that the diagonal entries of $\Omega^\star$ are sampled uniformly at random from the interval $[4-\rho,4+\rho]$, where $\rho \in \{1,2,4\}$. Larger values of $\rho$ thus indicate a larger degree of noise heteroscedasticity. We empirically evaluate the performance of our method and competing methods across 30 independent trials. 

Table \ref{tab:var} and Fig. \ref{fig:alpha_results} show that our method is robust to different levels of noise heteroscedasticity, producing small values of $d_\text{cpdag}$ for different $\rho$. This is in contrast to the competing methods that rely on homoscedastic noise: they yield large values of $d_\text{cpdag}$ under strong violation of the homoscedasticity assumption (e.g., $\rho = 4$). 



\begin{table}[!ht]
\def~{\hphantom{0}}
\caption{{Comparison to benchmarks for different amounts of noise heteroscedasticity}}
    \centering
    \resizebox{\columnwidth}{!}{\begin{tabular}{ccccllllcccccccc}
       \multicolumn{1}{c}{} & \multicolumn{1}{c}{}  & \multicolumn{2}{c}{\hdbu} & \multicolumn{2}{c}{\gsp} & \multicolumn{2}{c}{\td}& \multicolumn{3}{c}{\misocp}  & \multicolumn{2}{c}{\ges}  & \multicolumn{3}{c}{\micp}\\
       $m$ & $\rho$ & Time & $d_{\mathrm{cpdag}}$  & Time &$d_{\mathrm{cpdag}}$  & Time &$d_{\mathrm{cpdag}}$ & Time & \rgap & $d_{\mathrm{cpdag}}$  & Time & $d_{\mathrm{cpdag}}$ & Time & \rgap & $d_{\mathrm{cpdag}}$\\
        10 & 1 & $0.09$ & 6.3$\pm4.1$& $0.01$ &2.1$\pm1.7$& $\leq0.01$ &0.4$\pm0.8$& $0.06$ & * & 0.6$\pm1.3$& $0.01$ & 3.4$\pm2.7$& $0.76$   & * & 0.5$\pm0.9$\\
        10 & 2 & $0.08$ & 7.1$\pm4.0$& $0.01$ &2.1$\pm1.8$& $\leq0.01$ &0.4$\pm0.8$& $0.06$ & * & 0.6$\pm1.4$& $\leq 0.01$ & 1.6$\pm1.9$& $0.76$   & * & 0.4$\pm0.9$\\
        10 & 4 & $0.08$ & 5.7$\pm3.2$& $0.01$ &4.2$\pm1.4$& $\leq0.01$ &2.4$\pm1.2$& $0.05$ & * & 0.5$\pm1.4$& $\leq 0.01$ & 1.7$\pm2.2$& $0.66$   & * & 0.5$\pm1.2$\\
        15 & 1 & $0.14$ & 10.3$\pm4.8$& $0.01$ &9.6$\pm2.5$& $0.01$ &3$\pm2.4$& $0.12$ & * & 2.7$\pm2.5$& $\leq 0.01$ & 6.4$\pm3.9$& $7.41$   & * & 1.3$\pm1.9$\\
        15 & 2 & $0.13$ & 11.7$\pm4.3$& $0.01$ &10.1$\pm3.3$& $0.01$ &5.5$\pm2.6$& $0.12$ & * & 5.6$\pm2.9$& $\leq 0.01$ & 5.7$\pm4.3$& $5.17$   & * & 1.3$\pm2.0$\\
        15 & 4 & $0.13$ & 16.4$\pm5.9$& $0.01$ &9.9$\pm2.3$& $\leq0.01$ &11.6$\pm3.3$& $0.10$ & * & 7.4$\pm2.8$& $0.01$ & 6.0$\pm2.7$& $4.60$   & * & 1.8$\pm1.9$\\
        20 & 1 & $0.21$ & 17.9$\pm6.7$& $0.01$ &16.1$\pm3.8$& $0.01$ &6.7$\pm4.3$& $0.21$ & * & 1.9$\pm1.8$& $\leq 0.01$ & 11.1$\pm5.3$& $177.06$ & * & 0.7$\pm1.3$\\
        20 & 2 & $0.21$ & 20.2$\pm5.3$& $0.01$ &16$\pm3.1$& $0.01$ &9.9$\pm3.7$& $0.25$ & * & 4.3$\pm4.5$& $0.01$ & 12.0$\pm4.7$& $68.09$  & * & 1.6$\pm2.8$\\
        20 & 4 & $0.22$ & 28.1$\pm4.7$& $0.01$ &17.3$\pm3.4$& $0.02$ &17.8$\pm4.4$& $0.22$ & * & 10.2$\pm3.9$& $\leq 0.01$ & 10.5$\pm5.5$& $116.20$ & * & 2.6$\pm3.2$\end{tabular}}
    \label{tab:var}
\caption{Here, \hdbu, high-dimensional bottom-up; \gsp, greedy sparsest permutations; \td, top-down;  \misocp, mixed-integer second-order cone program; \ges, greedy equivalence search;  $d_{\mathrm{cpdag}}$, differences between the true and estimated completed partially directed acyclic graphs; \rgap, relative optimality gap. All results are averaged across 30 independent trials.}
\end{table}

\section{Discussion}

We discuss some future research questions that arise from our work. The mixed-integer framework allows for more than one optimal solution to be identified. It would be of interest to explore this feature to identify multiple solutions in the Markov equivalence class. In addition, the outer approximation algorithm can be further enhanced by developing stronger cutting planes that exploit the constraint structure.

\section*{Acknowledgement}
TX and SK are supported, in part, by a National Science Foundation Grant. 
AT is supported by the Royalty Research Fund at the University of Washington.
AS was supported by grants from the National Science Foundation and the National Institutes of Health. 

\bibliographystyle{elsarticle-harv}
\bibliography{main.bib}

\newpage
\appendix
\section*{Appendix}

\subsection{Proof of Proposition~\ref{prop:main}}
\label{proof:prop_main}

\begin{proof}[Proof of Proposition~\ref{prop:main}]

By replacing $\Theta$ with $\Gamma\Gamma^\T$ for $\Gamma$ such that $\Gamma=(I-B)D^{1/2} $, problem \eqref{Problem:original} is equivalent to the following problem:
\begin{subequations}
\begin{align*}
\min\limits_{\Gamma,B,D} & -2\log\det(\Gamma)+\mathrm{tr}(\Gamma\Gamma^\T\hat{\Sigma}_n) + \lambda^2\norm{\Gamma-\mathrm{diag}(\Gamma)}_{\ell_0} \\
\text { s.t. } & \Gamma= (I-B)D^{1/2},\\
    & B \in \mathcal{B}, D \succ 0.
\end{align*} 
\label{Problem:2}
\end{subequations}
Since $B$ has zero diagonal elements and $D$ is diagonal with strict positive diagonal entries, we have $\norm{\Gamma-\mathrm{diag}(\Gamma)}_{\ell_0} = \norm{B}_{\ell_0}$. We do not require $\Gamma$ to be a lower-triangular matrix at this time.

These observations lead to the next claim that allows us to replace the non-convex constraint $\Gamma= (I-B)D^{1/2}$, with convex (linear) constraints.

\begin{lemma}
\label{prop:main_constraint}
$\{\Gamma: \text{There exists } D \succ 0, B \in \mathcal{B}\text{ s.t. } \Gamma=(I-B)D^{1/2}\} = \{\Gamma: \Gamma_{ii}>0, i=1,\ldots,m; \Gamma - \mathrm{diag}(\Gamma) \in \mathcal{B}\}$.
\end{lemma}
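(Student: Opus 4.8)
The plan is to establish the two set inclusions separately, in each case by exhibiting an explicit, mutually inverse correspondence between the parametrization $(B,D)$ and the matrix $\Gamma$. Throughout I treat $D$ as a positive-definite \emph{diagonal} matrix (consistent with $D \in \mathbb{D}_{++}^m$), so that $D^{1/2} = \mathrm{diag}(\sqrt{d_1},\dots,\sqrt{d_m})$ with every $\sqrt{d_j} > 0$. The one observation that drives the whole argument is that right-multiplication by such a $D^{1/2}$ rescales the columns of $I-B$ by strictly positive factors and therefore leaves the off-diagonal zero/nonzero pattern unchanged; since the graph $\mathcal{G}(\cdot)$ depends only on the support, acyclicity is preserved under this operation.

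For the inclusion $\subseteq$, I would take an arbitrary $\Gamma = (I-B)D^{1/2}$ with $B \in \mathcal{B}$ and $D \succ 0$ diagonal. First I would note that $B \in \mathcal{B}$ forces $B_{ii} = 0$ for all $i$, since a nonzero diagonal entry would produce a self-loop in $\mathcal{G}(B)$ and contradict acyclicity. The diagonality of $D^{1/2}$ then gives the entrywise identity $\Gamma_{ij} = (I-B)_{ij}\sqrt{d_j}$, whence $\Gamma_{ii} = \sqrt{d_i} > 0$ and, for $i \neq j$, $\Gamma_{ij} = -B_{ij}\sqrt{d_j}$. Because $\sqrt{d_j} > 0$, the off-diagonal support of $\Gamma$ coincides exactly with that of $B$, so $\mathcal{G}(\Gamma - \mathrm{diag}(\Gamma)) = \mathcal{G}(B)$ is a directed acyclic graph and $\Gamma$ lies in the right-hand set.

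For the reverse inclusion $\supseteq$, I would start from a $\Gamma$ satisfying $\Gamma_{ii} > 0$ and $\Gamma - \mathrm{diag}(\Gamma) \in \mathcal{B}$, and construct the witnesses by inverting the map above. Setting $D^{1/2} = \mathrm{diag}(\Gamma)$ (so $D = \mathrm{diag}(\Gamma)^2 \succ 0$, where positivity of the diagonal is precisely what makes $\mathrm{diag}(\Gamma)$ invertible with a legitimate positive square root) and $B = I - \Gamma\,\mathrm{diag}(\Gamma)^{-1}$, a direct computation gives $(I-B)D^{1/2} = \Gamma$, $B_{ii} = 0$, and $B_{ij} = -\Gamma_{ij}/\Gamma_{jj}$ for $i \neq j$. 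Since $\Gamma_{jj} > 0$, this $B$ has the same off-diagonal support as $\Gamma$, so $\mathcal{G}(B) = \mathcal{G}(\Gamma - \mathrm{diag}(\Gamma))$ is acyclic, $B \in \mathcal{B}$, and the required decomposition exists.

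I do not expect a substantive obstacle, as this is fundamentally a change-of-variables bookkeeping lemma; the only point demanding care is the support-preservation step, namely checking in both directions that scaling column $j$ by the positive scalar $\sqrt{d_j}$ (equivalently $\Gamma_{jj}$) neither creates nor destroys off-diagonal nonzeros, so that the acyclicity of $\mathcal{G}(B)$ and of $\mathcal{G}(\Gamma - \mathrm{diag}(\Gamma))$ are genuinely equivalent. It is also worth flagging that the hypothesis $\Gamma_{ii} > 0$ (rather than merely $\Gamma_{ii} \neq 0$) is exactly what is needed so that $D = \mathrm{diag}(\Gamma)^2$ yields a positive-definite $D$ whose square root is $\mathrm{diag}(\Gamma)$ itself, keeping the inverse map well-defined and range-compatible.
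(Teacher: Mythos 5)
Your proof is correct and follows essentially the same route as the paper's: the same witnesses $D^{1/2} = \mathrm{diag}(\Gamma)$ and $B = I - \Gamma\,\mathrm{diag}(\Gamma)^{-1}$ for the reverse inclusion, and the same observation that positive column scaling preserves the off-diagonal support and hence acyclicity. If anything, your entrywise bookkeeping is slightly cleaner than the paper's, which verifies acyclicity of the left-scaled matrix $D^{-1/2}\Gamma - I$ but then defines $B$ via right scaling; your consistent use of column scaling avoids that (harmless) mismatch.
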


\begin{proof}[Proof of Lemma \ref{prop:main_constraint}]

$\subseteq$: One can write $\Gamma = (I-B)D^{1/2}=D^{1/2}-BD^{1/2}$. Since $B$ is a directed acyclic graph and $D$ is a positive definite diagonal matrix, $BD^{1/2}$ is also a directed acyclic graph. Therefore, 
$$\Gamma - \mathrm{diag}(\Gamma) = D^{1/2}-BD^{1/2} - D^{1/2} = -BD^{1/2}$$
is a directed acyclic graph, and we have $\Gamma_{ii} = D^{1/2}_{ii} > 0, \text{for any } i$.

$\supseteq$: If $\Gamma - \mathrm{diag}(\Gamma) \in \mathcal{B}$ and $\Gamma_{ii} >0 , \text{for any } i$, let $D^{1/2} = \mathrm{diag}(\Gamma)$, then $D \succ 0$.

Then, $D^{-1/2}\{\Gamma - \mathrm{diag}(\Gamma)\} = D^{-1/2}\Gamma - I$ is also a directed acyclic graph.
Let $B=I - \Gamma D^{-1/2}$, then we have $(I-B)D^{1/2} = \Gamma$.
\end{proof}

Therefore, Problem \eqref{Problem:original} is equivalent to the following optimization problem:
\begin{subequations}
\begin{align*}
\min\limits_{\Gamma} & -2\log\det(\Gamma)+\mathrm{tr}(\Gamma\Gamma^\T\hat{\Sigma}_n) + \lambda^2\norm{\Gamma-\mathrm{diag}(\Gamma)}_{\ell_0} \\
\text { s.t. } & \Gamma_{ii}>0 \quad(i=1,\ldots,m),\\
    &  \Gamma - \mathrm{diag}(\Gamma) \in \mathcal{B}.
\end{align*} 
\end{subequations}

Furthermore, for any permutation matrix $P$, we have $\det(\Gamma)=\det(P\Gamma P^\T)$. Since $\Gamma - \mathrm{diag}(\Gamma) \in \mathcal{B}$, if $\Gamma_{ij}\not=0$ for $i\not=j$, we deduce that $\Gamma_{ji}$ must be 0. Therefore, there exists a permutation matrix $\Bar{P}$ such that $\Bar{P}\Gamma\Bar{P}^\T$ is a lower-triangular matrix. As a result, $\log\det(\Gamma) = \sum_{i=1}^m  \log \Gamma_{ii}$. Further, the constraint $\Gamma_{ii}>0$ can be removed due to the logarithm barrier function.  

If $(\hat{B}^{\mathrm{opt}},\hat{D}^{\mathrm{opt}})$ is a minimizer of \eqref{Problem:original}, as shown in the proof of Lemma \ref{prop:main_constraint}, we have the corresponding feasible point $\hat{\Gamma}^{\mathrm{opt}} = (I-\hat{B}^{\mathrm{opt}})(\hat{D}^{\mathrm{opt}})^{1/2}$ of \eqref{Problem:3}. Since \eqref{Problem:original} and \eqref{Problem:3} have the same objective function after changing variables, $\hat{\Gamma}^{\mathrm{opt}}$ is also the minimizer of \eqref{Problem:3}. Similarly, if  $\hat{\Gamma}^{\mathrm{opt}}$ is a minimizer of \eqref{Problem:3}, then $\hat{D}^{\mathrm{opt}} = \mathrm{diag}(\hat{\Gamma}^{\mathrm{opt}})^2$ and $\hat{B}^{\mathrm{opt}} = I-\hat{\Gamma}^{\mathrm{opt}}(\hat{D}^{\mathrm{opt}})^{-1/2}$ is the minimizer of \eqref{Problem:original}.
\end{proof}

\subsection{Identifying good superstructures using the graphical lasso}
\label{sec:super_structure}
Our mixed-integer programming framework takes as input a superstructure $E_\mathrm{super}$. Ideally, the true edge set $E^\star$ is contained in this superstructure, i.e., $E^\star \subseteq E_\mathrm{super}$. Moreoever, for computational purposes, we do not want the cardinality of $E_\mathrm{super}$ to be large. We use a graphical lasso estimator to obtain a superstructure with the aforementioned properties. We consider the following graphical lasso estimator \citep{Rothman2008SparsePI}:
$$\hat{\Theta} = \argmin_{\Theta \in \mathbb{R}^{m\times{m}}} -\log\det(\Theta) + \mathrm{tr}(\Theta\hat{\Sigma}_n) + \lambda_\mathrm{glasso}^2\|\Theta-\mathrm{diag}(\Theta)\|_{\ell_1}.$$
A natural candidate for a superstructure is the estimated moral graph $E_\mathrm{super} = \mathrm{support}(\hat{\Theta})$. The moral graph of a directed acyclic graph $\mathcal{G}$ forms an undirected graph of $\mathcal{G}$ by adding edges between nodes that have common children. Furthermore, we let $\Theta^\star = {\Sigma^\star}^{-1}$ be the precision matrix underlying the data. 
\begin{proposition} Suppose Assumptions \ref{assump:sample_cond} and \ref{condition:faithfulness} hold. Furthermore, suppose that every non-zero element of $\Theta^\star$ has magnitude larger than $\mathcal{O}\left[\{(m+\|\Theta^\star\|_{\ell_0})\log m/n\}^{1/2}\right]$. Then, with probability tending to $1$, for $\lambda_\mathrm{glasso}^2 \asymp \log m/n$, $E^\star \subseteq E_\mathrm{super}$.
\label{prop:glasso}
\end{proposition}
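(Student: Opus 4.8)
The plan is to decompose the claim into a population-level statement about the support of $\Theta^\star$ and a sample-level statement about how well the graphical lasso recovers that support. First I would argue that, at the population level, $E^\star \subseteq \mathrm{support}(\Theta^\star)$ when the directed edges are viewed as undirected. This rests on the standard correspondence between Gaussian conditional independence and zeros of the precision matrix: for a Gaussian Bayesian network, $\mathrm{support}(\Theta^\star)$ is exactly the edge set of the moral graph of $\mathcal{G}^\star$. Each directed edge $(i,j) \in E^\star$ is a parent--child pair and therefore appears in the moral graph, so $E^\star$ (undirected) is contained in $\mathrm{support}(\Theta^\star)$. Faithfulness (Assumption~\ref{condition:faithfulness}) is precisely what rules out accidental cancellations that could force $\Theta^\star_{ij}=0$ despite $i,j$ being adjacent in the moral graph, ensuring no true edge is lost at this stage.

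The second ingredient is a uniform estimation-error bound for the graphical lasso. Under Assumption~\ref{assump:sample_cond}, the eigenvalues of $\Sigma^\star$ (equivalently $\Theta^\star$) are bounded away from $0$ and $\infty$, which is exactly the regularity needed to invoke the Frobenius-norm consistency result of \citet{Rothman2008SparsePI}: with $\lambda_\mathrm{glasso}^2 \asymp \log m/n$, one has, with probability tending to $1$,
$$\|\hat\Theta - \Theta^\star\|_F = \mathcal{O}\left(\left\{(m + \|\Theta^\star\|_{\ell_0})\log m/n\right\}^{1/2}\right).$$
Since the max norm is dominated by the Frobenius norm, $\|\hat\Theta - \Theta^\star\|_\infty \le \|\hat\Theta - \Theta^\star\|_F$, the same rate controls every individual entry.

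The final step combines these pieces through the beta-min hypothesis. For any $(i,j)$ with $\Theta^\star_{ij} \neq 0$, the triangle inequality gives $|\hat\Theta_{ij}| \ge |\Theta^\star_{ij}| - \|\hat\Theta - \Theta^\star\|_\infty$. By assumption every nonzero $|\Theta^\star_{ij}|$ exceeds the order $\{(m + \|\Theta^\star\|_{\ell_0})\log m/n\}^{1/2}$, which (for an appropriate constant) strictly dominates the estimation error on the high-probability event above; hence $\hat\Theta_{ij} \neq 0$ and $(i,j) \in \mathrm{support}(\hat\Theta) = E_\mathrm{super}$. Taking a union over the $\mathcal{O}(\|\Theta^\star\|_{\ell_0})$ such coordinates shows $\mathrm{support}(\Theta^\star) \subseteq E_\mathrm{super}$ with probability tending to $1$, and composing with the population inclusion from the first step yields $E^\star \subseteq E_\mathrm{super}$.

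I expect the main obstacle to be the second step rather than the bookkeeping: one must verify that Assumption~\ref{assump:sample_cond} supplies exactly the regularity under which the \citet{Rothman2008SparsePI} bound holds, and confirm that the resulting rate matches the beta-min threshold so that the signal strictly exceeds the noise. A useful conceptual simplification is that we need only the \emph{one-sided} (``no false negatives'') direction of support recovery---we must not miss a true edge, but we may tolerate extra edges in $E_\mathrm{super}$. This is why only a uniform deviation bound together with a beta-min condition suffice, and we can avoid the much stronger irrepresentability/incoherence conditions that exact support recovery (sparsistency) of the graphical lasso would demand.
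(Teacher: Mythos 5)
Your proposal is correct and follows essentially the same route as the paper's own proof: both establish $E^\star \subseteq \mathrm{support}(\Theta^\star)$ via the Gaussian conditional-independence/precision-matrix correspondence together with faithfulness, then invoke the error bound of \citet{Rothman2008SparsePI} (the paper's Lemma~\ref{lemma:rothman}), pass from the Frobenius to the entrywise norm, and use the beta-min hypothesis to conclude that every nonzero entry of $\Theta^\star$ survives in $\hat{\Theta}$, giving $E^\star \subseteq \mathrm{support}(\Theta^\star) \subseteq E_\mathrm{super}$. Your closing observation that only one-sided (``no false negatives'') recovery is needed, so irrepresentability conditions can be avoided, matches the paper's remark immediately following the proposition.
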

The proof relies on the following result by \cite{Rothman2008SparsePI}:
\begin{lemma}[Theorem 1 of \cite{Rothman2008SparsePI}] Suppose $\lambda_\mathrm{glasso}^2 \asymp \log m/n$. Then, with probability tending to $1$ (as $m,n \to \infty$):
$$\|\Theta^\star - \hat{\Theta}\|_F^2 \leq \mathcal{O}\{(m+\|\Theta^\star\|_{\ell_0})\log m/n\}.$$
\label{lemma:rothman}
\end{lemma}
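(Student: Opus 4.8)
Since this statement is Theorem~1 of \cite{Rothman2008SparsePI}, the plan is to reproduce their primal convexity-localization argument, using the eigenvalue control supplied by Assumption~\ref{assump:sample_cond}. Write $F(\Theta) = -\log\det(\Theta) + \mathrm{tr}(\Theta\hat{\Sigma}_n) + \lambda_\mathrm{glasso}^2\|\Theta - \mathrm{diag}(\Theta)\|_{\ell_1}$ for the penalized objective, set $\Delta = \Theta - \Theta^\star$, and consider $G(\Delta) = F(\Theta^\star + \Delta) - F(\Theta^\star)$. The function $G$ is convex with $G(0) = 0$, and the minimizer $\hat{\Delta} = \hat{\Theta} - \Theta^\star$ satisfies $G(\hat{\Delta}) \le 0$. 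The entire proof reduces to exhibiting a radius $r \asymp \{(m + \|\Theta^\star\|_{\ell_0})\log m/n\}^{1/2}$ with $G(\Delta) > 0$ for every $\Delta$ on the sphere $\|\Delta\|_F = r$: convexity then forces $\|\hat{\Delta}\|_F \le r$, since any $\hat{\Delta}$ with $\|\hat{\Delta}\|_F > r$ would make the rescaled point $(r/\|\hat{\Delta}\|_F)\hat{\Delta}$, which lies on the sphere, satisfy $G \le 0$, contradicting positivity there.

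To lower bound $G$ on the sphere I would decompose it into the gradient term $\mathrm{tr}\{(\hat{\Sigma}_n - \Sigma^\star)\Delta\}$ coming from linearizing $-\log\det(\cdot) + \mathrm{tr}(\cdot\,\hat{\Sigma}_n)$ at $\Theta^\star$, the Bregman remainder $D(\Delta) = -\log\det(\Theta^\star + \Delta) + \log\det\Theta^\star + \mathrm{tr}(\Sigma^\star\Delta)$, and, writing $M^{\mathrm{off}} = M - \mathrm{diag}(M)$, the penalty increment $\lambda_\mathrm{glasso}^2(\|\Theta^{\star,\mathrm{off}} + \Delta^{\mathrm{off}}\|_{\ell_1} - \|\Theta^{\star,\mathrm{off}}\|_{\ell_1})$. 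The remainder has the exact integral form $D(\Delta) = \mathrm{vec}(\Delta)^\T\{\int_0^1 (1-v)(\Theta^\star + v\Delta)^{-1}\otimes(\Theta^\star + v\Delta)^{-1}\,dv\}\,\mathrm{vec}(\Delta)$; since Assumption~\ref{assump:sample_cond} bounds the eigenvalues of $\Sigma^\star = (\Theta^\star)^{-1}$, and $r$ is taken small enough that $\Theta^\star + v\Delta$ keeps uniformly bounded eigenvalues along the segment, this yields $D(\Delta) \ge c\,\|\Delta\|_F^2$ for a constant $c = c(\underline{\kappa},\bar{\kappa})$.

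The stochastic input is entrywise concentration of the sample covariance. Because $X \sim \mathcal{N}(0,\Sigma^\star)$ with bounded eigenvalues, each $(\hat{\Sigma}_n)_{ij} - \Sigma^\star_{ij}$ is sub-exponential, and a Bernstein bound with a union over the $m^2$ entries gives $\|\hat{\Sigma}_n - \Sigma^\star\|_\infty \le \mathcal{O}\{(\log m/n)^{1/2}\}$ with probability tending to one. On this event I would take the penalty level $\lambda_\mathrm{glasso}^2$ to dominate $\|\hat{\Sigma}_n - \Sigma^\star\|_\infty$. Splitting $\Delta$ into its diagonal and off-diagonal parts and letting $S$ be the off-diagonal support of $\Theta^\star$, the diagonal contribution of the gradient term is bounded by $\|\hat{\Sigma}_n - \Sigma^\star\|_\infty\sqrt{m}\,\|\Delta\|_F$; for the off-diagonal part, the inequality $\|\Theta^{\star,\mathrm{off}} + \Delta^{\mathrm{off}}\|_{\ell_1} - \|\Theta^{\star,\mathrm{off}}\|_{\ell_1} \ge \|\Delta^{\mathrm{off}}_{S^c}\|_{\ell_1} - \|\Delta^{\mathrm{off}}_S\|_{\ell_1}$ lets the penalty absorb the off-support $\ell_1$ mass, leaving only an on-support term controlled by $\lambda_\mathrm{glasso}^2\sqrt{|S|}\,\|\Delta\|_F$. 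Collecting these, the gradient-plus-penalty part is bounded below by $-\mathcal{O}(\lambda_\mathrm{glasso}^2)(\sqrt m + \sqrt{|S|})\,\|\Delta\|_F$.

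Putting the two bounds together gives $G(\Delta) \ge c\,r^2 - \mathcal{O}(\lambda_\mathrm{glasso}^2)(\sqrt m + \sqrt{|S|})\,r$ on the sphere, which is strictly positive as soon as $r$ exceeds a fixed multiple of $\lambda_\mathrm{glasso}^2(\sqrt m + \sqrt{|S|})$; taking $r$ of this order and using $|S| \le \|\Theta^\star\|_{\ell_0}$ reproduces the stated radius, and hence the bound $\|\hat{\Theta} - \Theta^\star\|_F^2 \le \mathcal{O}\{(m + \|\Theta^\star\|_{\ell_0})\log m/n\}$ after the constant bookkeeping of \cite{Rothman2008SparsePI}. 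I expect the main obstacle to be the tension in the choice of $r$: the Bregman lower bound $D(\Delta) \ge c\|\Delta\|_F^2$ holds only while $r$ stays below a fixed constant so that the eigenvalues of $\Theta^\star + v\Delta$ remain bounded away from $0$ and $\infty$ uniformly in $v \in [0,1]$, whereas the localization needs $r \asymp \{(m+\|\Theta^\star\|_{\ell_0})\log m/n\}^{1/2}$; checking that this radius is indeed below that constant is where the sample-size regime and the sparsity of $\Theta^\star$ enter, and carefully tracking the segment eigenvalues is the most delicate part of the argument.
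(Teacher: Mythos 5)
The paper never proves this lemma: it is imported verbatim as Theorem~1 of \cite{Rothman2008SparsePI} and used only as an ingredient in the proof of Proposition~\ref{prop:glasso}. Your argument is precisely the proof of that cited theorem --- the convexity-localization scheme ($G$ convex, $G(0)=0$, $G(\hat{\Delta})\le 0$, strict positivity on a sphere), the exact integral form of the log-det Bregman remainder with eigenvalue control supplied by Assumption~\ref{assump:sample_cond}, entrywise sub-exponential concentration of $\hat{\Sigma}_n$ with a union bound, and absorption of the off-support $\ell_1$ mass by the penalty --- so it is correct and follows the same route the paper implicitly relies on. Your closing caveat is also the right one: the Bregman lower bound needs the localization radius to stay below a fixed constant, which is exactly where Rothman et al.'s condition $(m+\|\Theta^\star\|_{\ell_0})\log m/n \to 0$ enters.

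One discrepancy deserves explicit mention, though it originates in the paper's transcription rather than in your reasoning. Your absorption step requires the penalty multiplier to dominate the entrywise noise, i.e.\ $\lambda_\mathrm{glasso}^2 \gtrsim \|\hat{\Sigma}_n-\Sigma^\star\|_\infty \asymp (\log m/n)^{1/2}$, which is the actual hypothesis in \cite{Rothman2008SparsePI}; with that choice your radius $r \asymp \lambda_\mathrm{glasso}^2(\sqrt{m}+\sqrt{|S|})$ indeed squares to the claimed rate. The lemma as stated in the paper instead postulates $\lambda_\mathrm{glasso}^2 \asymp \log m/n$. In the regime $\log m/n \to 0$ where the bound is meaningful, $\log m/n \ll (\log m/n)^{1/2}$, so under the literal hypothesis the off-support stochastic term $\mathcal{O}\{(\log m/n)^{1/2}\}\|\Delta^{\mathrm{off}}_{S^c}\|_{\ell_1}$ can no longer be cancelled by $\lambda_\mathrm{glasso}^2\|\Delta^{\mathrm{off}}_{S^c}\|_{\ell_1}$, and neither your proof nor Rothman et al.'s goes through as written. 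You silently substituted the correct scaling --- mathematically the right move, but if your write-up is meant to prove the lemma exactly as stated you should flag that its hypothesis should read $\lambda_\mathrm{glasso}^2 \asymp (\log m/n)^{1/2}$; the paper appears to have recycled its $\ell_0$-penalty convention $\lambda^2 \asymp \log m/n$, which is not the right order for the $\ell_1$ multiplier in the graphical lasso.
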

\begin{proof}[Proof of Proposition~\ref{prop:glasso}]
Since $X$ is a Gaussian random vector, zeros in the precision matrix $\Theta^\star$ specify conditional independence relationships. Specifically, $\Theta^\star_{ij} = 0$ if and only if $X_i \perp X_j | X_{\setminus\{i,j\}}$, or equivalently, $\Theta^\star_{ij} \neq 0$ if and only if $X_i \not\perp X_j | X_{\setminus\{i,j\}}$. The aforementioned property of the precision matrix $\Theta^\star$ results in two other important properties. First, if $i \rightarrow k \leftarrow j$ is a v-structure in the graph $\mathcal{G}^\star$, then necessarily $X_i \not\perp X_j | X_{\setminus\{i,j\}}$ which equivalently means that $\Theta_{ij}^\star \neq 0$. 
Second, under faithfulness (Assumption~\ref{condition:faithfulness}), if pairs of nodes $(i,j)$ are connected in the population graph $\mathcal{G}^\star$, then  $X_i \not\perp X_j | X_{\setminus\{i,j\}}$ or equivalently $\Theta^\star_{ij} \neq 0$. The aforementioned properties imply that the sparsity pattern of $\Theta^\star$ encodes the moral graph structure of $\mathcal{G}^\star$. Now from Lemma~\ref{lemma:rothman}, with probability tending to $1$, $\|\Theta^\star - \hat{\Theta}\|_\infty^2 \leq \mathcal{O}\{(m+\|\Theta^\star\|_{\ell_0})\log m/n\}$. Since the non-zero entries of $\Theta^\star$ have sufficiently large magnitudes, we conclude that if $\Theta^\star_{ij} \neq 0$, then $\hat{\Theta}_{ij} \neq 0$. Putting everything together, we conclude that: $E^\star \subseteq \{(i,j): \Theta^\star_{ij} \neq 0\} \subseteq E_\mathrm{super}$.
\end{proof}

Note that Proposition~\ref{prop:glasso} does not guarantee that the cardinality of $\mathrm{E}_\mathrm{super}$ is not large. Under a stringent assumption known as the irrepresentability condition, \cite{Ravikumar2008HighdimensionalCE} show that $\mathrm{support}(\hat{\Theta}) = \mathrm{support}(\Theta^\star)$ so that $\text{moral}(E^\star) = E_\mathrm{super}$ (here $\text{moral}(E^\star)$ is the edge set of the moral graph of $\mathcal{G}^\star = (V^\star,E^\star)$). Since the irrepresentability condition may not often hold, to obtain a low cardinality $\mathrm{E}_\mathrm{super}$, we threshold the entries of $\hat{\Theta}$ at level $\tau$ and let $E_\mathrm{super}$ correspond to the support of the thresholded estimate. It is straightforward to show, by appealing to Lemma~\ref{lemma:rothman}, that if $\tau = \mathcal{O}\left[\{(m+\|\Theta^\star\|_{\ell_0})\log m/n\}^{1/2}\right]$ and the smallest nonzero magnitude entry of $\Theta^\star$ is above $\mathcal{O}\left[\{(m+\|\Theta^\star\|_{\ell_0})\log m/n\}^{1/2}\right]$, then with high probability, $\text{moral}(E^\star) = E_\mathrm{super}$. Thus, if the maximal degree in the moral graph of the underlying directed acylic graph is $\mathcal{O}(1)$, then $d_\mathrm{max} = \mathcal{O}(1)$.

In our experiments, we choose $\lambda_\mathrm{glasso}^2 = \log m/n$ and set $\tau = 0.1$.

\subsection{{Proof of Theorem~\ref{thm:main}}}
\label{proof:thm_1}
We are analyzing the following estimator:
\begin{subequations}
\begin{align}
\min\limits_{\Gamma \in \mathbb{R}^{m \times m}: \Gamma-\mathrm{diag}(\Gamma) \in \mathcal{B}} & \quad \sum_{i=1}^m-2\log \Gamma_{ii}+\mathrm{tr}(\Gamma\Gamma^\T\hat{\Sigma}_n) + \lambda^2\norm{\Gamma-\mathrm{diag}(\Gamma)}_{\ell_0} \label{eq:prop-obj} \\\text{subject-to}&\quad \text{support}(\Gamma-\mathrm{diag}(\Gamma)) \subseteq E_\mathrm{super}~~;~~ \|\Gamma_{:j}\|_{\ell_0}  \leq \tilde{\alpha}n/\log{m}, j=1,\ldots,m,
\end{align} 
\label{Problem:temp}
\end{subequations}
where $E_\mathrm{super}$ satisfies $E^\star \subseteq E_\mathrm{super}$.

\begin{proposition}
    Suppose Assumptions~\ref{condition:condition4}--\ref{assump:sample_cond} hold with constants $\tilde{\alpha}, \eta_0$ sufficiently small. Let $\hat{\Gamma}^{\mathrm{early}}$ be the early stopping solution of the optimization problem \eqref{Problem:temp} with the additional constraint that $\left\|\Gamma_{: j}\right\|_{\ell_0} \leq$ $\tilde{\alpha} n / \log m$ for every $j \in V$. Let $\hat{\pi}^{\mathrm{early}}$ be the the ordering of the variables associated with $\hat{\Gamma}^{\mathrm{early}}$. Let $(\hat{B}^{\mathrm{early}}, \hat{\Omega}^{\mathrm{early}})$ be the associated connectivity and noise variance matrices satisfying $\hat{\Gamma}^{\mathrm{early}}=\left(I-\hat{B}^{\mathrm{early}}\right)\left(\hat{\Omega}^{\mathrm{early}}\right)^{-1 / 2}$. Let $\alpha_0=(4 / m) \wedge 0.05$. Then, for $\lambda^2 \asymp \log m / n$, we have, with probability greater than $1-\alpha_0$,
    \begin{itemize}
        \item[(a)] $
\left\|\hat{B}^{\mathrm{early}}-\tilde{B}^{\star}\left(\hat{\pi}^{\mathrm{early}}\right)\right\|_F^2+\left\|\hat{\Omega}^{\mathrm{early}}-\tilde{\Omega}^{\star}\left(\hat{\pi}^{\mathrm{early}}\right)\right\|_F^2\leq\mathcal{O}\left(\lambda^2 s^{\star}\right)$, and $\left\|\tilde{B}^{\star}(\hat{\pi}^\text{early})\right\|_{\ell_0} \asymp s^{\star}.
$
        \item[(b)] $\hat{\Omega}^\mathrm{early}_{jj} \geq \mathcal{O}(1)$, for $j=1,\ldots,m$.
    \end{itemize}
    
\label{thm:vgb_early}
\end{proposition}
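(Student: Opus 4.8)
The plan is to reduce the statement to the oracle inequality of \cite{vgBuhlmann} for the \emph{exact} minimizer of \eqref{Problem:temp}, and then to absorb the effect of early stopping as a controlled additive perturbation. Write $F_n(\Gamma) = \sum_{i=1}^m -2\log\Gamma_{ii} + \mathrm{tr}(\Gamma\Gamma^\T\hat\Sigma_n) + \lambda^2\|\Gamma - \mathrm{diag}(\Gamma)\|_{\ell_0}$ for the penalized objective of \eqref{Problem:temp}, and let $\hat\Gamma^{\mathrm{opt}}$ be an exact minimizer. After the change of variables of Proposition~\ref{prop:main}, and with the superstructure and per-column sparsity constraints in place, \eqref{Problem:temp} is precisely the $\ell_0$-penalized Gaussian DAG likelihood studied in \cite{vgBuhlmann}. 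Hence, under Assumptions~\ref{condition:condition4}--\ref{assump:sample_cond}, their analysis already yields, on a high-probability event $\mathcal{E}$ with $\mathbb{P}(\mathcal{E}) \geq 1 - \alpha_0$, the conclusion of part~(a) with $\hat\Gamma^{\mathrm{opt}}$ in place of $\hat\Gamma^{\mathrm{early}}$, together with $\|\tilde B^\star(\hat\pi^{\mathrm{opt}})\|_{\ell_0}\asymp s^\star$ via the beta-min condition (Assumption~\ref{condition:condition5}). It therefore remains only to show that the early-stopped iterate inherits the same guarantee.

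First I would make precise the sense in which $\hat\Gamma^{\mathrm{early}}$ is near-optimal. Throughout the branch-and-bound procedure the algorithm maintains a valid lower bound $L$ and the incumbent objective $U = F_n(\hat\Gamma^{\mathrm{early}})$, with $L \leq F_n(\hat\Gamma^{\mathrm{opt}}) \leq U$ and $U - L \leq \gap$; here the validity of $L$ as a lower bound on $F_n(\hat\Gamma^{\mathrm{opt}})$ survives the outer-approximation cuts because the affine underestimators of $-2\log(\cdot)$ only relax the objective. Since the mixed-integer program \eqref{problem:micp} is equivalent to \eqref{Problem:temp}, this gives the \emph{near-minimality} relation
\begin{equation}
F_n(\hat\Gamma^{\mathrm{early}}) \;\leq\; F_n(\hat\Gamma^{\mathrm{opt}}) + \gap \;\leq\; F_n\bigl(\tilde\Gamma^\star(\pi)\bigr) + \gap
\label{eq:near-min}
\end{equation}
for every $\pi \in \Pi$, where the second inequality uses that $\tilde\Gamma^\star(\pi)$ is feasible for \eqref{Problem:temp} (its support lies in $E_\mathrm{super}$ by definition of $\Pi$, and its per-column sparsity is controlled by Assumption~\ref{condition:condition4}). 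Relation \eqref{eq:near-min} is exactly the \emph{basic inequality} used in \cite{vgBuhlmann}, except for the extra additive term $\gap$ on the right-hand side.

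Next I would re-run the \cite{vgBuhlmann} oracle argument, carrying the slack $\gap$ through every step. Their derivation expands the likelihood difference $F_n(\hat\Gamma^{\mathrm{early}}) - F_n(\tilde\Gamma^\star(\hat\pi^{\mathrm{early}}))$, controls the empirical-process terms via concentration of $\hat\Sigma_n$ around $\Sigma^\star$ (valid on $\mathcal{E}$), and invokes the restricted-eigenvalue bounds implied by Assumption~\ref{assump:sample_cond} together with Assumptions~\ref{condition:condition4}--\ref{condition:condition5}. Since $\gap$ enters only additively in \eqref{eq:near-min}, it propagates additively to the final bound, yielding
\begin{equation*}
\bigl\|\hat B^{\mathrm{early}} - \tilde B^\star(\hat\pi^{\mathrm{early}})\bigr\|_F^2 + \bigl\|\hat\Omega^{\mathrm{early}} - \tilde\Omega^\star(\hat\pi^{\mathrm{early}})\bigr\|_F^2 \;\leq\; \mathcal{O}(\lambda^2 s^\star) + \gap .
\end{equation*}
Because the stopping rule sets $\gap = \mathcal{O}(\lambda^2 s^\star)$, the perturbation is of the same order as the statistical error and only inflates the implicit constant, giving the Frobenius bound of part~(a); the claim $\|\tilde B^\star(\hat\pi^{\mathrm{early}})\|_{\ell_0}\asymp s^\star$ is then read off as in \cite{vgBuhlmann}. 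I expect the main obstacle to be that the comparison target $\tilde\Gamma^\star(\hat\pi^{\mathrm{early}})$ depends on the ordering \emph{realized by the early-stopped iterate} rather than by the exact optimum, so one cannot simply substitute a fixed point: one must verify that $\hat\pi^{\mathrm{early}}\in\Pi$ (so the target is feasible) and that the step in which \cite{vgBuhlmann} exploit optimality to anchor the realized ordering survives the relaxation to near-optimality. This is precisely what is secured by \eqref{eq:near-min} holding \emph{uniformly} over $\pi\in\Pi$, which lets us instantiate it at $\pi=\hat\pi^{\mathrm{early}}$.

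Finally, part~(b) follows from part~(a) and Assumption~\ref{assump:sample_cond}. For any ordering $\pi$, the entry $\tilde\Omega^\star(\pi)_{jj}$ is the conditional variance of $X_j$ given its $\pi$-predecessors, which dominates the full conditional variance: $\tilde\Omega^\star(\pi)_{jj} \geq 1/\Theta^\star_{jj} \geq 1/\lambda_{\max}(\Theta^\star) = \lambda_{\min}(\Sigma^\star) \geq \underline{\kappa}$, uniformly in $\pi$ and $j$. Combining this uniform lower bound with the entrywise consequence of part~(a), namely $|\hat\Omega^{\mathrm{early}}_{jj} - \tilde\Omega^\star(\hat\pi^{\mathrm{early}})_{jj}| \leq \mathcal{O}\{(\lambda^2 s^\star)^{1/2}\} = o(1)$, gives $\hat\Omega^{\mathrm{early}}_{jj} \geq \underline{\kappa} - o(1) = \mathcal{O}(1)$ for $n$ large enough, as claimed.
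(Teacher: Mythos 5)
Your overall route---converting the branch-and-bound termination criterion into an additive slack in the basic inequality, $F_n(\hat\Gamma^{\mathrm{early}}) \leq F_n(\hat\Gamma^{\mathrm{opt}}) + \gap \leq F_n\{\tilde\Gamma^\star(\pi)\} + \gap$, and then pushing that slack through the oracle argument of \cite{vgBuhlmann}---is essentially the paper's route. The genuine gap is your premise that exact optimality enters the \cite{vgBuhlmann} analysis \emph{only} through the basic inequality. It does not: their argument also exploits the fact that the exact minimizer is optimal in the scaling direction, i.e., that the noise estimate equals the profiled residual variance, $\hat\Omega_{jj} = \|\mathcal{X}_j - \mathcal{X}\hat{B}_{:j}\|_2^2/n$ (equivalently $\mathrm{tr}\{\hat\Gamma\hat\Gamma^\T\hat\Sigma_n\} = m$), and it requires these residual variances to be sandwiched between constants---this is exactly the set $\mathcal{T}_0$ appearing in the paper's Lemma~\ref{lem:connect_vgb}, which demands \emph{both} $\mathcal{T}_0$-membership \emph{and} the basic inequality. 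Neither property transfers automatically to a merely near-optimal iterate: a solution within $\gap$ of the optimum can have $\hat\Omega^{\mathrm{early}}$ mis-scaled relative to the residual variances of $\hat{B}^{\mathrm{early}}$, and while near-optimality does limit this mis-scaling, extracting that limit is itself a missing step. The paper devotes most of its proof to precisely these points: Lemma~\ref{lem:trace_m} and Remark~\ref{rem:tracem} show one may assume, without loss of generality, that the early-stopped solution has been rescaled so the profile identity holds (rescaling preserves feasibility, support, and does not increase the objective), and the proof then establishes part~(b) \emph{directly}---via the Schur complement of $\mathcal{X}^\T\mathcal{X}$, Cauchy interlacing, and concentration of $\hat\Sigma_n$ (this is where $n \geq \mathcal{O}(m)$ is used)---before combining it with the restricted-eigenvalue bound of Lemma~\ref{lem:vgb_73} to place $\hat{B}^{\mathrm{early}}$ in $\mathcal{T}_0$ and only then invoke the oracle machinery for part~(a).

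This omission also inverts the logical order in a way that becomes circular. Your derivation of part~(b) from part~(a) is, in isolation, a correct implication: $\tilde\Omega^\star(\pi)_{jj}$ dominates the full conditional variance $1/\Theta^\star_{jj} \geq \lambda_{\min}(\Sigma^\star) \geq \underline{\kappa}$ by Assumption~\ref{assump:sample_cond}, and the Frobenius bound of part~(a) gives an $o(1)$ entrywise perturbation. But a careful execution of your ``re-run with slack'' step for part~(a) needs the residual-variance control that constitutes part~(b), so deducing (b) from (a) assumes what must be proved first. To repair the argument you would insert, before the oracle step: (i) the rescaling argument restoring $\hat\Omega^{\mathrm{early}}_{jj} = \|\mathcal{X}_j - \mathcal{X}\hat{B}^{\mathrm{early}}_{:j}\|_2^2/n$, and (ii) the lower bound of this quantity by $\sigma_{\min}(\hat\Sigma_n) \geq 1/K_2$ together with Lemma~\ref{lem:vgb_73}---at which point you have reproduced the paper's proof of Proposition~\ref{thm:vgb_early}.
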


The proof of this proposition is similar to the proof of Theorem 3.1 from \cite{vgBuhlmann} with some modifications as we use early stopping solution as the estimate, and the equivalence between formulations  \eqref{Problem:original} and \eqref{Problem:3} established in Proposition~\ref{prop:main}. To fill the gaps and establish Proposition~\ref{thm:vgb_early}, we first analyze the key property of the early stopping solution in the following lemma and remark.

\begin{lemma}
    For any feasible solution $\Gamma$ to the optimization problem \eqref{Problem:temp}, there exists a diagonal matrix $D$ with positive diagonal entries such that, after applying a transformation $\Gamma \leftarrow \Gamma D^{1/2}$, we have $\mathrm{tr}\{\Gamma\Gamma^\T \hat{\Sigma}_n\} = m$. Moreover, the objective function value \eqref{eq:prop-obj} after the transformation is not higher than the original. 
\label{lem:trace_m}
\end{lemma}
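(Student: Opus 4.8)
The plan is to construct the diagonal rescaling explicitly and then reduce the entire claim to an elementary scalar inequality. For a feasible $\Gamma$, finiteness of the objective forces $\Gamma_{ii}>0$ for every $i$, so each column $\Gamma_{:i}$ is nonzero; working on the (high-probability) event that $\hat{\Sigma}_n \succ 0$, the quantities $a_i := (\Gamma^\T \hat{\Sigma}_n \Gamma)_{ii} = \Gamma_{:i}^\T \hat{\Sigma}_n \Gamma_{:i}$ are therefore strictly positive. I would take $D$ diagonal with $D_{ii} = 1/a_i > 0$, so that the transformation $\Gamma \leftarrow \Gamma D^{1/2}$ simply multiplies the $i$-th column of $\Gamma$ by the positive scalar $a_i^{-1/2}$.

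First I would verify that feasibility is preserved. Scaling columns by strictly positive constants leaves the off-diagonal support of $\Gamma - \mathrm{diag}(\Gamma)$ unchanged, so it remains in $\mathcal{B}$ (acyclicity depends only on the support), the constraint $\mathrm{support}(\Gamma-\mathrm{diag}(\Gamma)) \subseteq E_\mathrm{super}$ continues to hold, and each column $\ell_0$ norm is unchanged; in particular the regularization term $\lambda^2\|\Gamma - \mathrm{diag}(\Gamma)\|_{\ell_0}$ is invariant. Next, since $(\Gamma D^{1/2})(\Gamma D^{1/2})^\T = \Gamma D \Gamma^\T$, the quadratic term becomes $\mathrm{tr}(\Gamma D \Gamma^\T \hat{\Sigma}_n) = \mathrm{tr}(D \Gamma^\T \hat{\Sigma}_n \Gamma) = \sum_i D_{ii} a_i = \sum_i 1 = m$, which is exactly the claimed normalization.

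Finally I would compare the two objective values. Only the logarithmic and quadratic terms change: the diagonal entry $\Gamma_{ii}$ becomes $\Gamma_{ii} a_i^{-1/2}$, contributing an extra $-\sum_i \log D_{ii} = \sum_i \log a_i$ to the log term, while the quadratic term changes from $\sum_i a_i$ to $m$. Hence the difference (transformed minus original) equals $\sum_i (\log a_i + 1 - a_i)$. The crux of the argument is the elementary inequality $\log x \leq x - 1$ valid for all $x>0$, which gives $\log a_i + 1 - a_i \leq 0$ termwise and therefore a nonpositive total difference, establishing that the transformed objective does not exceed the original.

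I do not expect a genuine obstacle here: the lemma is essentially a normalization device, and the key insight is recognizing that the choice $D_{ii}=1/a_i$ collapses the objective comparison onto the concavity inequality $\log x \le x-1$. The one point requiring care, rather than difficulty, is the strict positivity of the $a_i$, which relies on $\hat{\Sigma}_n \succ 0$; this holds almost surely once $n \geq m$ and is part of the high-probability event under which the surrounding analysis operates.
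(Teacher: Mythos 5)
Your proof is correct and takes essentially the same route as the paper: both choose the column rescaling $D_{ii} = 1/(\Gamma^\T\hat{\Sigma}_n\Gamma)_{ii}$, which forces the trace term to equal $m$ and leaves feasibility and the $\ell_0$ penalty unchanged. The only cosmetic difference is that the paper justifies the objective decrease by noting this $D$ solves the inner convex minimization over diagonal rescalings (so it beats $D=I$), whereas you verify the same fact directly via $\log x \le x-1$; your explicit flagging of the requirement $a_i > 0$ (i.e., $\hat{\Sigma}_n \succ 0$ on the high-probability event) is a point the paper leaves implicit.
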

\begin{proof}[Proof of Lemma~\ref{lem:trace_m}]
    Take any feasible $\Gamma$ to the optimization problem \eqref{Problem:temp}. Consider the transformation $\Gamma \leftarrow \Gamma{D}^{1/2}$ where $D$ is a diagonal matrix with positive entries. The transformed $\Gamma$ matrix remains feasible without changing the regularization term in the objective. Thus, the best transformation that yields the smallest objective value can be found via the following optimization model:
    \begin{eqnarray*}
D^\mathrm{opt} = \argmin_{D ~\text{diagonal}} -2\sum_{i=1}^m \log(\Gamma{D}^{1/2}) + \mathrm{tr}(D\Gamma^\top\hat{\Sigma}_n\Gamma).
    \end{eqnarray*}
Based on the optimality condition of the equation above, it is straightforward to show that
$$D^\mathrm{opt}_{ii} = 1/(\Gamma^\top\hat{\Sigma}_n\Gamma)_{ii}.$$
Thus, $\mathrm{tr}(D^\mathrm{opt}\Gamma^\top\hat{\Sigma}_n\Gamma) = m$, or equivalently with the transformed $\Gamma \leftarrow \Gamma({D}^\mathrm{opt})^{1/2}$, we have that: $\mathrm{tr}(\Gamma\Gamma^\top\hat{\Sigma}_n) = m$. 
\end{proof}

\begin{remark}\label{rem:tracem}
    Let $\hat{\Gamma}^\mathrm{early}$ be the early stopping solution. Without loss of generality, we can assume that for any $D \neq I$, the transformation $\hat{\Gamma}^\mathrm{early}{D}^{1/2}$ to the early-stopped solution $\hat{\Gamma}^\mathrm{early}$ leads to a larger objective value compared to when $D=I$. If this is not the case, one can find the best transformation $D^\mathrm{opt}$ and obtain a transformed solution with a smaller objective value. Therefore, without loss of generality, $\mathrm{tr}\{\hat{\Gamma}^\mathrm{early}(\hat{\Gamma}^\mathrm{early})^\T \hat{\Sigma}_n\} = m$. Moreover, by the nature of the optimal solution, $\mathrm{tr}\{\hat{\Gamma}^\mathrm{opt}(\hat{\Gamma}^\mathrm{opt})^\T \hat{\Sigma}_n\} = m$.
\end{remark}

Lemma~\ref{lem:trace_m} demonstrates that the trace term of the objective function is fixed at $m$ for both the early stopping solution and the optimal solution. This property provides a key condition for ensuring that the early stopping solution remains sufficiently close to the optimal solution, thereby guaranteeing the validity of Proposition~\ref{thm:vgb_early}. Leveraging the trace term property established in Lemma~\ref{lem:trace_m}, the following lemma provides specific conditions under which the results of Proposition~\ref{thm:vgb_early} hold.

For some constant $K_0 > 0$, consider the set $\mathcal{T}_0:=\{B \in \mathbb{R}^{m \times m}: K_0^2 \geq \|\mathcal{X}_j - \mathcal{X} {B}_{:j}\|_2^2 / n \geq 1/K_0^2,  j=1,\ldots,m\}$ \citep{vgBuhlmann}, where $\mathcal{X}$ is the $n\times m$ sample matrix.

\begin{lemma}
    Consider the setup in Proposition~\ref{thm:vgb_early}, and let ${\Gamma}$ be any feasible solution of \eqref{Problem:temp} with the corresponding adjacency matrix $B$ lying inside the set $\mathcal{T}_0$ and
    $$
    m -2\log\det({\Gamma}) +\lambda^2\|\Gamma-\mathrm{diag}(\Gamma)\|_{\ell_0}\leq \ell_n(\Gamma^\star{\Gamma^\star}^\T) + \mathcal{O}(\lambda^2 s^\star),
    $$
   where $\Gamma^\star = (I-B^\star){D^\star}^{1/2}$. Then, part (a) of Proposition~\ref{thm:vgb_early} holds for $\Gamma$.
\label{lem:connect_vgb}
\end{lemma}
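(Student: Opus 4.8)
The plan is to reduce this statement to the oracle-inequality argument underlying Theorem~3.1 of \cite{vgBuhlmann}, the only modification being that $\Gamma$ is a near-minimizer rather than an exact one. First I would rewrite the loss in the least-squares parametrization. Writing $\Gamma = (I-B)D^{1/2}$ with $D = \mathrm{diag}(d_1,\dots,d_m) \succ 0$, and using that $\det(I-B)=1$ because $I-B$ is unit-triangular after the permutation induced by the DAG ordering, one obtains
\[
\ell_n(\Gamma\Gamma^\T) = \sum_{j=1}^m \Big\{ d_j\, \|\mathcal{X}_j - \mathcal{X} B_{:j}\|_2^2/n - \log d_j \Big\}.
\]
This displays the objective as a sum of $m$ penalized weighted regressions, exactly the structure analyzed in \cite{vgBuhlmann}. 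By Lemma~\ref{lem:trace_m} and Remark~\ref{rem:tracem} the trace term is fixed at $m$, so the noise weights $d_j$ have already been optimized out; membership $B \in \mathcal{T}_0$ guarantees $1/K_0^2 \le \|\mathcal{X}_j - \mathcal{X} B_{:j}\|_2^2/n \le K_0^2$, which keeps the implied $d_j$ and the log-det term bounded.

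Second I would install the basic inequality. An exact minimizer of \eqref{Problem:temp} satisfies, by optimality, that its objective is no larger than the objective evaluated at $\Gamma^\star$; the hypothesis of the lemma is precisely this inequality weakened by the additive slack $\mathcal{O}(\lambda^2 s^\star)$. Since the target error bound in part~(a) is itself $\mathcal{O}(\lambda^2 s^\star)$, the slack is of the correct order and can be carried additively through every step. Concretely, subtracting $\ell_n(\Gamma^\star {\Gamma^\star}^\T)$ from both sides and rearranging the penalties yields a bound on the empirical excess loss of $\Gamma$ relative to $\Gamma^\star$ of the form $\lambda^2\{\|\Gamma^\star - \mathrm{diag}(\Gamma^\star)\|_{\ell_0} - \|\Gamma - \mathrm{diag}(\Gamma)\|_{\ell_0}\} + \mathcal{O}(\lambda^2 s^\star)$, which is identical to the starting inequality of \cite{vgBuhlmann} up to the extra order-$\lambda^2 s^\star$ term.

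Third I would run the van de Geer--B\"uhlmann chain essentially unchanged. Controlling the empirical-process term with $\lambda^2 \asymp \log m/n$ and the eigenvalue and compatibility bounds of Assumption~\ref{assump:sample_cond}, together with the column-sparsity bound of Assumption~\ref{condition:condition4} that confines $B$ to a sparse set, converts the excess-loss inequality into the quadratic-error inequality. The estimator's induced ordering $\pi$ selects the comparison point $\tilde{B}^\star(\pi)$: because every ordering in $\Pi$ yields the same population likelihood, replacing the fixed $\Gamma^\star$ by the ordering-adapted $\tilde{\Gamma}^\star(\pi)$ costs only a stochastic fluctuation of order $\lambda^2 s^\star$, which again merges with the slack. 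The outcome is $\|B - \tilde{B}^\star(\pi)\|_F^2 + \|\Omega - \tilde{\Omega}^\star(\pi)\|_F^2 \le \mathcal{O}(\lambda^2 s^\star)$, i.e. part~(a), while $\|\tilde{B}^\star(\pi)\|_{\ell_0} \asymp s^\star$ follows from the characterization of minimal-edge I-MAPs---each ordering-consistent sparsest DAG has $\ell_0$-norm comparable to $s^\star$---as in \cite{vgBuhlmann}.

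The hard part will be the bookkeeping of the slack: I must verify that at each point where exact optimality is used in \cite{vgBuhlmann}---both in the basic inequality and in the re-centering from $\Gamma^\star$ to the ordering-adapted target---the order-$\lambda^2 s^\star$ perturbation stays additive and does not interact multiplicatively with the penalty level or the empirical-process bounds, so that it remains of the same order as the final error. Checking that the $\mathcal{T}_0$ restriction and the column-sparsity constraint still deliver the required restricted-eigenvalue control for the near-optimal $\Gamma$, rather than only for the exact minimizer, is the other point requiring care.
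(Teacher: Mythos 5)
Your proposal is correct and takes essentially the same route as the paper: both reduce the lemma to the proof of Theorem~3.1 in \cite{vgBuhlmann} by observing that optimality of the estimator enters that proof only through the basic inequality, so the lemma's hypothesis---precisely that inequality weakened by an additive $\mathcal{O}(\lambda^2 s^\star)$ slack, with the trace term fixed at $m$ via Lemma~\ref{lem:trace_m}---together with $B \in \mathcal{T}_0$ lets the entire argument rerun for any feasible $\Gamma$. The slack bookkeeping you flag as the hard part is exactly the point the paper relies on (implicitly), namely that the perturbation remains additive and of the same order as the final error bound.
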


The proof of Lemma~\ref{lem:connect_vgb} follows from the proof of Theorem 3.1 in \cite{vgBuhlmann}. In particular, Theorem 3.1 of \cite{vgBuhlmann} establishes the closeness of an optimal parameter $(\hat{B}^\mathrm{opt},\hat{D}^\mathrm{opt})$ of \eqref{Problem:original} to population parameters. This result is an analog of the first result of Proposition~\ref{thm:vgb_early}. In their proof, the optimality of $(\hat{B}^\mathrm{opt},\hat{D}^\mathrm{opt})$ only comes in through the inequality:
\begin{eqnarray}
\ell_n\{\hat{\Gamma}^\mathrm{opt}({\hat{\Gamma}^\mathrm{opt}})^\top\} + \lambda^2 \|\hat{\Gamma}^\mathrm{opt}-\mathrm{diag}(\hat{\Gamma}^\mathrm{opt})\|_{\ell_0} \leq \ell_n(\Gamma^\star{\Gamma^\star}^\T) + \lambda^2{s}^\star, 
\label{eqn:basic}
\end{eqnarray}
where $\hat{\Gamma}^\mathrm{opt} = (I-\hat{B}^\mathrm{opt})(\hat{D}^\mathrm{opt})^{1/2}$. From Lemma~\ref{lem:trace_m}, we know that $\mathrm{tr}(\hat{\Gamma}^\mathrm{opt}({\hat{\Gamma}^\mathrm{opt}})^\top\hat{\Sigma}_n) = m$. Thus, the inequality \eqref{eqn:basic} reduces to:
\begin{eqnarray*}
m - 2\log\det(\hat{\Gamma}^\mathrm{opt}) + \lambda^2 \|\hat{\Gamma}^\mathrm{opt}-\mathrm{diag}(\hat{\Gamma}^\mathrm{opt})\|_{\ell_0} \leq \ell_n(\Gamma^\star{\Gamma^\star}^\T) + \lambda^2{s}^\star.
\end{eqnarray*}
Therefore, if the inequality \eqref{eqn:basic} and other conditions in Lemma~\ref{lem:connect_vgb} hold for a feasible solution $\Gamma$, the first result of Proposition~\ref{thm:vgb_early} holds for $\Gamma$.



The following lemma states that the adjacency matrix $B$ associated with the early stopping solution lies within the set $\mathcal{T}_0$.  This result allows us to apply Lemma~\ref{lem:connect_vgb} to establish the desired result.

\begin{lemma}[Theorem 7.3 of \cite{vgBuhlmann}]
     Suppose Assumptions~\ref{assump:sample_cond} and \ref{ass:noise_lower_bound} hold, and that $$ 1/K_1 := 3 \underline{\kappa} / 4-\left\{\frac{2(t+\log m)}{n}\right\}^{1/2}-3 \bar{\kappa} \tilde{\alpha}^{1/2} > 0.
     $$
     For all $t>0$, with probability at least $1-2 \exp (-t)$,
$$
\|\mathcal{X} \beta\|_2/n^{1/2} \geq\|\beta\|_2 / K_1
$$
uniformly in all $\beta \in \mathbb{R}^m$, where $\beta$ is restricted to have at most $\tilde{\alpha}n/\log m$ nonzeros.
\label{lem:vgb_73}
\end{lemma}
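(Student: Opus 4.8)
The statement is a verbatim restatement of a known restricted-eigenvalue bound (Theorem 7.3 of \cite{vgBuhlmann}), so the most direct route is simply to invoke that result; below I sketch the self-contained argument one would otherwise reproduce. The plan is to read the claim as a lower bound on the smallest restricted singular value of the Gaussian design $\mathcal{X}$ over the set of $s$-sparse vectors, where $s := \tilde{\alpha}n/\log m$. Since $\|\mathcal{X}\beta\|_2^2/n = \beta^\top \hat{\Sigma}_n \beta$ with $\hat{\Sigma}_n = \mathcal{X}^\top\mathcal{X}/n$, the claim is equivalent to
$$\inf_{\|\beta\|_{\ell_0} \leq s,\, \|\beta\|_2 = 1} \|\mathcal{X}\beta\|_2/n^{1/2} \geq 1/K_1,$$
and because $\mathcal{X}$ has i.i.d.\ rows distributed as $\mathcal{N}(0,\Sigma^\star)$ with $\lambda_{\min}(\Sigma^\star) \geq \underline{\kappa}$ and $\lambda_{\max}(\Sigma^\star)\leq\bar{\kappa}$ by Assumption~\ref{assump:sample_cond}, the population analogue of this infimum is bounded below by $\underline{\kappa}^{1/2}$. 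The task is therefore to control the deviation of the empirical quantity from its population value uniformly over all sparse supports.

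First I would fix a support $S \subseteq \{1,\dots,m\}$ with $|S| = s$ and analyze the $n\times s$ submatrix $\mathcal{X}_S$. Factoring $\mathcal{X}_S = Z\,(\Sigma^\star_{SS})^{1/2}$ with $Z$ an $n\times s$ matrix of i.i.d.\ standard normals, one has $\sigma_{\min}(\mathcal{X}_S) \geq \underline{\kappa}^{1/2}\,\sigma_{\min}(Z)$. I would then bound $\sigma_{\min}(Z)$ from below using two classical ingredients: Gordon's Gaussian comparison inequality, which gives $\mathbb{E}\,\sigma_{\min}(Z) \geq n^{1/2} - s^{1/2}$, together with the fact that $Z \mapsto \sigma_{\min}(Z)$ is $1$-Lipschitz in the Frobenius norm, so Gaussian concentration yields $\sigma_{\min}(Z) \geq n^{1/2} - s^{1/2} - u$ with probability at least $1 - e^{-u^2/2}$ for any $u>0$.

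Next I would remove the dependence on the fixed support by a union bound over the $\binom{m}{s}$ choices of $S$. Choosing $u$ so that $u^2/2 = t + \log\binom{m}{s}$ and using $\log\binom{m}{s} \lesssim s\log m$, the concentration term becomes $u/n^{1/2} \lesssim (t/n)^{1/2} + (s\log m/n)^{1/2}$. The identity $s\log m/n = \tilde{\alpha}$ is the crucial bookkeeping step: it converts the sparsity-complexity term into the clean quantity $\tilde{\alpha}^{1/2}$ appearing in the definition of $1/K_1$, while the residual $(t+\log m)/n$ enters under a square root exactly as stated. Assembling these bounds gives, on an event of probability at least $1 - 2e^{-t}$,
$$\|\mathcal{X}\beta\|_2/n^{1/2} \geq \left\{\underline{\kappa}^{1/2}\bigl(1 - \tilde{\alpha}^{1/2}\bigr) - \mathcal{O}\bigl(\{(t+\log m)/n\}^{1/2}\bigr)\right\}\|\beta\|_2$$
uniformly over $s$-sparse $\beta$, which reproduces the stated structure of $1/K_1$ after tracking the numerical constants and the precise roles of $\underline{\kappa}$ and $\bar{\kappa}$.

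I expect the main obstacle to be the uniform control over the super-polynomially many sparse supports while keeping the complexity term at the sharp order $\tilde{\alpha}^{1/2}$ rather than something larger. The delicate point is that a naive $\epsilon$-net argument inflates the constants; obtaining the clean $3\underline{\kappa}/4$ and $3\bar{\kappa}\tilde{\alpha}^{1/2}$ coefficients requires the sharp Gordon comparison bound combined with an empirical-process peeling argument in the style of \cite{vgBuhlmann}, and matching their exact constants---in particular why $\bar{\kappa}$, rather than $\underline{\kappa}$, multiplies the $\tilde{\alpha}^{1/2}$ term---is where most of the care is needed.
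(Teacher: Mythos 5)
Your primary route---simply invoking Theorem 7.3 of \cite{vgBuhlmann}---is exactly what the paper does: this lemma is imported verbatim as an external result, with no proof supplied in the paper itself. Your supplementary sketch (fixed-support factorization, Gordon's comparison, Gaussian concentration, union bound over $\binom{m}{s}$ supports, and the bookkeeping identity $s\log m/n = \tilde{\alpha}$) is a sound standard reconstruction of a bound of the right form, and you correctly flag that matching the exact constants $3\underline{\kappa}/4$ and $3\bar{\kappa}\tilde{\alpha}^{1/2}$ would require the original authors' empirical-process argument rather than the per-support bound, but none of this is needed to align with the paper.
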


\begin{proof}[Proof of Proposition~\ref{thm:vgb_early}]

We first prove part (b) of the proposition. Recall that $\hat{\Gamma}^\mathrm{early} = (I-\hat{B}^\mathrm{early})(\hat{\Omega}^\mathrm{early})^{-1/2}$. Similar to Lemma~\ref{lem:trace_m}, consider the transformation $\hat{\Gamma}^\mathrm{early} \leftarrow \hat{\Gamma}^\mathrm{early}{D}$ for any diagonal matrix with positive entries, or equivalently,  $\hat{\Gamma}^\mathrm{early} \leftarrow (I-\hat{B}^\mathrm{early})\Omega^{-1/2}$ for some diagonal matrix $\Omega$ with positive entries. The transformed $\hat{\Gamma}^\mathrm{early}$ remains feasible without changing the regularization term in the objective. The best transformation that yields the smallest objective can be found via the following optimization model
 \begin{eqnarray*}
\Omega^\mathrm{opt} = \argmin_{\text{diagonal}~\Omega} \sum_{i=1}^m\log({\Omega}_{ii}) + \mathrm{tr}\{\Omega^{-1}(I-\hat{B}^\mathrm{early})^\top\hat{\Sigma}_n(I-\hat{B}^\mathrm{early})\}.
    \end{eqnarray*}
Based on the optimality condition of the optimization problem above, it is straightforward to show that
$$\Omega^\mathrm{opt}_{ii} = \{(I-\hat{B}^\mathrm{early})^\top\hat{\Sigma}_n(I-\hat{B}^\mathrm{early})\}_{ii} = \|\mathcal{X}_i - \mathcal{X}\hat{B}^{\mathrm{early}}_{:i}\|_2^2/n.$$  
Therefore, without loss of generality, we can assume that any $\Gamma = (I-\hat{B}^\mathrm{early})\Omega$ with $\Omega \neq ({\hat{\Omega}^\mathrm{early}})^{-1/2}$ leads to a larger objective value than that of $\hat{\Gamma}^{\mathrm{early}}$. If this is not the case, one can find the best transformation $\Omega^\mathrm{opt}$ and obtained a transformed solution with a smaller objective value. Therefore, we have shown that $\hat{\Omega}^\mathrm{early}_{ii} = \|\mathcal{X}_i - \mathcal{X}\hat{B}^{\mathrm{early}}_{:i}\|_2^2/n$. 

Using standard arguments from Gaussian concentration and the fact that $n \geq \mathcal{O}(m)$, spectral norm of the difference of $\hat{\Sigma}_n$ and $\Sigma^\star$ is bounded by the order $(m/n)^{1/2}$ with probability greater than $1-\alpha_0$. Consequently, $\sigma_\mathrm{min}(\hat{\Sigma}_n) \geq \sigma_\mathrm{min}(\Sigma^\star) - \mathcal{O}\{(m/n)^{1/2}\} \geq \underline{\kappa} - \mathcal{O}\{(m/n)^{1/2}\} \geq 1/K_2$ for some non-negative constant $K_2$.  

Since matrix $\hat{B}^{\mathrm{early}}$ has zero diagonals, we have $\hat{\Omega}^\mathrm{early}_{ii} \geq \min_{\beta\in R^m, \beta_i = 0} \|\mathcal{X}_i - \mathcal{X}\beta\|_2^2/n = \|\mathcal{X}_i - \mathcal{X}_{-i}(\mathcal{X}_{-i}^\top \mathcal{X}_{-i})^{-1}\mathcal{X}_{-i}^\top\mathcal{X}_i\|_2^2/n$, where $\mathcal{X}_{-i}$ is the data matrix $\mathcal{X}$ excluding the $i$th column. Note that invertibality of $\mathcal{X}_{-i}^\top \mathcal{X}_{-i}$ follows from $\sigma_\mathrm{min}(\hat{\Sigma}_n) > 0$. By applying elementary row and column operations to $\mathcal{X}^\top\mathcal{X}$, we obtain
$$\mathcal{X}^\top\mathcal{X} \rightarrow \begin{bmatrix}
    \mathcal{X}_{-i}^\top \mathcal{X}_{-i} & \mathcal{X}_{-i}^\top \mathcal{X}_{i}\\ \mathcal{X}_{i}^\top \mathcal{X}_{-i} & \mathcal{X}_{i}^\top\mathcal{X}_{i}
\end{bmatrix}.$$ Considering the Schur complement of $\mathcal{X}^\top \mathcal{X}$, we have $\mathrm{det}(\mathcal{X}^\top \mathcal{X}) = \mathrm{det}(\mathcal{X}_{-i}^\top \mathcal{X}_{-i})\mathrm{det}(\mathcal{X}_{i}^\top\mathcal{X}_{i} - \mathcal{X}_{i}^\top \mathcal{X}_{-i}(\mathcal{X}_{-i}^\top \mathcal{X}_{-i})^{-1}\mathcal{X}_{-i}^\top \mathcal{X}_{i})=\mathrm{det}(\mathcal{X}_{-i}^\top \mathcal{X}_{-i})(\mathcal{X}_{i}^\top\mathcal{X}_{i} - \mathcal{X}_{i}^\top \mathcal{X}_{-i}(\mathcal{X}_{-i}^\top \mathcal{X}_{-i})^{-1}\mathcal{X}_{-i}^\top \mathcal{X}_{i})$.

Since $ \|\mathcal{X}_i - \mathcal{X}_{-i}(\mathcal{X}_{-i}^\top \mathcal{X}_{-i})^{-1}\mathcal{X}_{-i}^\top\mathcal{X}_i\|_2^2 = \mathcal{X}_{i}^\top\mathcal{X}_{i} - \mathcal{X}_{i}^\top \mathcal{X}_{-i}(\mathcal{X}_{-i}^\top \mathcal{X}_{-i})^{-1}\mathcal{X}_{-i}^\top \mathcal{X}_{i}$, we have
$$
\|\mathcal{X}_i - \mathcal{X}_{-i}(\mathcal{X}_{-i}^\top \mathcal{X}_{-i})^{-1}\mathcal{X}_{-i}^\top\mathcal{X}_i\|_2^2 = \frac{\mathrm{det}(\mathcal{X}^\top \mathcal{X})}{\mathrm{det}(\mathcal{X}_{-i}^\top \mathcal{X}_{-i})} = \frac{\Pi_{i=1}^m\sigma_i(\mathcal{X}^\top \mathcal{X})}{\Pi_{i=1}^m\sigma_i(\mathcal{X}_{-i}^\top \mathcal{X}_{-i})},
$$
where $\sigma_i(\mathcal{X}^\top \mathcal{X})$ represents the $i$th largest singular value of the matrix $\mathcal{X}^\top \mathcal{X}$.

By Cauchy Interlacing Theorem, we have that $\sigma_j(\mathcal{X}^\top \mathcal{X}) \geq \sigma_j(\mathcal{X}_{-i}^\top \mathcal{X}_{-i})$, $j=1,\ldots, m-1$. Consequently, $\|\mathcal{X}_i - \mathcal{X}_{-i}(\mathcal{X}_{-i}^\top \mathcal{X}_{-i})^{-1}\mathcal{X}_{-i}^\top\mathcal{X}_i\|_2^2/n \geq \sigma_\mathrm{min}(\mathcal{X}^\top \mathcal{X}/n)$. That is, $\hat{\Omega}^\mathrm{early}_{ii}$ is lower bounded by the smallest eigenvalue of the sample covariance matrix $\hat{\Sigma}_n$. Since $\sigma_\mathrm{min}(\hat{\Sigma}_n) \geq 1/K_2$, we have that $\hat{\Omega}^\mathrm{early}_{ii} \geq 1/{K_2}$, which completes the proof of part (b) of Proposition~\ref{thm:vgb_early}.

Now we conclude the proof of part (a)  of the proposition.
Lemma~\ref{lem:vgb_73} implies that for every $j = 1,\ldots,m$, with high probability,
$$ \|\mathcal{X}_j - \mathcal{X}\hat{B}^{\mathrm{early}}_{:j}\|_2^2/n=
 \|\mathcal{X}\hat{B}^{\mathrm{early}-}_{:j}\|_2^2 / n \geq \|\hat{B}^{\mathrm{early}-}_{:j}\|_2^2 / K_1^2 \geq 1/K_1^2,
$$
where we define $\hat{B}^{\mathrm{early}-}_{kj} = -\hat{B}^{\mathrm{early}}_{kj}$ for $k\not=j$ and $\hat{B}^{\mathrm{early}-}_{jj} = 1$. Defining $K_0 = \max\{K_1, K_2\}$, we have $K_0^2 \geq \|\mathcal{X}_j - \mathcal{X}\hat{B}^{\mathrm{early}}_{:j}\|_2^2/n \geq 1/K_0^2$. Thus, with high probability, $\hat{B}^{\mathrm{early}}\in\mathcal{T}_0$. By Lemma~\ref{lem:trace_m} and Remark~\ref{rem:tracem}, we have 
$$
\mathrm{tr}\{\hat{\Gamma}^{\mathrm{early}}(\hat{\Gamma}^{\mathrm{early}})^\T \hat{\Sigma}_n\} - \log\det\{\hat{\Gamma}^{\mathrm{early}}(\hat{\Gamma}^{\mathrm{early}})^\T\} = m - \log\det\{\hat{\Gamma}^{\mathrm{early}}(\hat{\Gamma}^{\mathrm{early}})^\T\}.
$$

The early stopping criterion ensures that the optimality gap is on the order of $\lambda^2 s^\star$. Thus, the objective value of the early stopped solution is at most $\mathcal{O}(\lambda^2s^\star)$ larger than the objective value of the optimal solution, and we have 
\begin{align*}
\ell_n\{\hat{\Gamma}^{\mathrm{early}}(\hat{\Gamma}^{\mathrm{early}})^\top\} + \lambda^2 \hat{s}^{\mathrm{early}} &\leq 
 \ell_n\{\hat{\Gamma}^{\mathrm{opt}}(\hat{\Gamma}^{\mathrm{opt}})^\top\} + \lambda^2 \hat{s}^{\mathrm{opt}} + \mathcal{O}(\lambda^2s^\star)\\
 &\leq \ell_n(\Gamma^\star{\Gamma^\star}^\top) + \lambda^2 s^\star + \mathcal{O}(\lambda^2s^\star).
 \end{align*}
By appealing to Lemma~\ref{lem:connect_vgb}, we complete the proof of part (a) of Proposition~\ref{thm:vgb_early}.   
\end{proof}

\begin{proof}[Proof of Theorem~\ref{thm:main}] 

Next, we combine the above results to prove Theorem~\ref{thm:main}. For notational simplicity, we let $\Omega = \tilde{\Omega}^\star(\hat{\pi}^\mathrm{early})$ and $B = \tilde{B}^\star(\hat{\pi}^\mathrm{early})$. First the mean-value theorem implies that, for any $j \in \{1,\ldots, m\}$, there exists some $\omega$ in the interval between $\hat{\Omega}^{\mathrm{early}}_{jj}$ and $\Omega_{jj}$ such that:
\begin{eqnarray*}
\left\{\Omega^{-1/2}_{jj}-(\hat{\Omega}^{\mathrm{early}})^{-1/2}_{jj}\right\}^2 \leq \frac{1}{4}(\Omega_{jj}-\hat{\Omega}^{\mathrm{early}}_{jj})^2\omega^{-3}. 
\end{eqnarray*}
Note that $\omega \geq \min_{j} \Omega_{jj} - \|\Omega-\hat{\Omega}^{\mathrm{early}}\|_F \geq \min_{j}\Omega_{jj} - \lambda{s^\star}^{1/2} \geq \min_{j}\Omega_{jj}/2$, where the final equality follows from Assumption~\ref{ass:noise_lower_bound}. Putting things together, we have that:
\begin{eqnarray}
\|\Omega^{-1/2}-(\hat{\Omega}^{\mathrm{early}})^{-1/2}\|_F^2 \leq  \frac{2\|\Omega-\hat{\Omega}^{\mathrm{early}}\|_F^2}{\min \Omega_{jj}^3}.
\label{eqn:omega}
\end{eqnarray}
Then,
\begin{eqnarray*}
\begin{aligned}
\|\hat{\Gamma}^{\mathrm{early}}-\tilde{\Gamma}^{\star}\left(\hat{\pi}^{\mathrm{early}}\right)\|_F^2 = &\|(I-\hat{B}^{\mathrm{early}})(\hat{\Omega}^{\mathrm{early}})^{-1/2} - (I-B){\Omega}^{-1/2}\|_F^2 \\
&= \|(I-\hat{B}^{\mathrm{early}})(\hat{\Omega}^{\mathrm{early}})^{-1/2} - (I-B)(\hat{\Omega}^{\mathrm{early}})^{-1/2} \\& \quad + (I-B)(\hat{\Omega}^{\mathrm{early}})^{-1/2} - (I-B){\Omega}^{-1/2}\|_F^2 \\
&=\|(B-\hat{B}^{\mathrm{early}})\{(\hat{\Omega}^{\mathrm{early}})^{-1/2}-{\Omega}^{-1/2}\}+ (B-\hat{B}^{\mathrm{early}}){\Omega}^{-1/2} \\& \quad + (I-B)\{(\hat{\Omega}^{\mathrm{early}})^{-1/2}-{\Omega}^{-1/2}\}\|_F^2\\
&\leq \|\hat{B}^\mathrm{early}-B\|_F^2\{\|\Omega^{-1/2}\|_\infty^2 + \|(\hat{\Omega}^{\mathrm{early}})^{-1/2}\|_\infty^2\} + \|\hat{B}^\mathrm{early}-B\|_F^2\|\Omega^{-1/2}\|_\infty^2 \\
&+  \frac{2\max_{j}\|(I-B)_{j:}\|_2^2 \|\Omega-\hat{\Omega}^\mathrm{early}\|_F^2}{\min\{1,\min_j (\Omega_{jj})^3\}}.\\
    &\leq \mathcal{O}(\lambda^2s^\star).
\end{aligned}
\end{eqnarray*}
The last term of the first inequality follows from relation \eqref{eqn:omega}. For the last inequality, we use Proposition~\ref{thm:vgb_early}, which states that $\|\hat{B}^\mathrm{early}-B\|_F^2 \leq\mathcal{O}(\lambda^2s^\star)$ and $\|\hat{\Omega}^\mathrm{early}-\Omega\|_F^2 \leq\mathcal{O}(\lambda^2s^\star)$. Further, from Proposition~\ref{thm:vgb_early}, we also have $\|(\hat{\Omega}^{\mathrm{early}})^{-1/2}\|_\infty \leq \mathcal{O}(1)$. Now we argue that $\|\Omega^{-1/2}\|_\infty \leq \mathcal{O}(1)$ and $\max_{j}\|(I-B)_{j:}\|_2 \leq \mathcal{O}(1)$ and $\min_j\Omega_{j j} \geq \mathcal{O}(1)$.  To see this, recall that $\Theta^\star = (I-B)\Omega^{-1}(I-B)^\top$ and thus $\Theta^\star_{jj} = \|(I-B)_{j:}\|_2^2\Omega_{jj}^{-1}$, for $j=1,\ldots, m$. By Assumption~\ref{assump:sample_cond}, $\Theta^\star_{jj} = \mathcal{O}(1)$. Since $\|(I-B)_{j:}\|_2^2 \geq 1$, we must then have that $\Omega_{jj}^{-1} \leq \mathcal{O}(1)$. Again, by Assumption~\ref{assump:sample_cond}, we have that $\Omega_{jj}^{-1} \geq \mathcal{O}(1)$, which allows us to conclude that $\|(I-B)_{j:}\|_2 \leq \mathcal{O}(1)$. Putting everything together, we get the last inequality that $\|\hat{\Gamma}^{\mathrm{early}}-\tilde{\Gamma}^{\star}\left(\hat{\pi}^{\mathrm{early}}\right)\|_F^2 \leq \mathcal{O}(\lambda^2s^\star)$, as desired. \end{proof}

\subsection{{Proof of Theorem~\ref{corr:equiv}}}
\label{proof:thm_2}
The proof of the theorem relies on the following lemma. 
\begin{lemma} Consider the optimization problem:
\begin{eqnarray}
\hat{\Theta} = \argmin_{\Theta \in \mathbb{R}^{m \times m},\mathrm{support}(\Theta) \subseteq E_\mathrm{super}} f(\Theta),
\label{eqn:opt_const}
\end{eqnarray}
where $f(\Theta):= -\log\det(\Theta) + \mathrm{tr}(\Theta\hat{\Sigma}_n)$ if $\Theta \succ 0$ and $f(\Theta) = \infty$, otherwise. Let $\Theta^\star = {\Sigma^\star}^{-1}$. Then, with probability greater than $1-1/m$, we have that $\|\hat{\Theta}-\Theta^\star\|_\infty \leq \min\left\{\frac{\sigma_{\max}(\Theta^\star)}{2d_{\max}},\left\{\frac{4(1-c)\log m}{n\sigma_{\max}(\Theta^\star)^2}\right\}^{1/2}\right\}$, where $\sigma_{\max}(\Theta^\star)$ denotes the largest singular value of $\Theta^\star$.
\label{lemma:temp_const}
\end{lemma}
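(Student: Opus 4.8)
The plan is to treat $\hat{\Theta}$ as a constrained Gaussian maximum-likelihood estimator over the linear subspace $\mathcal{C} = \{\Theta \in \mathbb{R}^{m\times m} : \Theta_{ij} = 0 \text{ for } (i,j)\notin E_\mathrm{super}\}$ and to carry out a perturbation analysis around $\Theta^\star$. The first step is to note that $\Theta^\star$ is feasible: exactly as in the proof of Proposition~\ref{prop:glasso}, the off-diagonal sparsity pattern of $\Theta^\star = {\Sigma^\star}^{-1}$ coincides with the moral graph of $\mathcal{G}^\star$, so $\mathrm{support}(\Theta^\star) = \mathrm{moral}(E^\star) \subseteq E_\mathrm{super}$. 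Since $f$ is strictly convex and smooth on the positive-definite cone and equals $+\infty$ outside it, the minimizer $\hat{\Theta}$ is positive definite, and because $\mathcal{C}$ is a subspace the stationarity condition reads $[\nabla f(\hat{\Theta})]_{ij} = [\hat{\Sigma}_n - \hat{\Theta}^{-1}]_{ij} = 0$ for every $(i,j) \in E_\mathrm{super}$ and on the diagonal; equivalently, the estimator matches moments, $[\hat{\Theta}^{-1}]_E = [\hat{\Sigma}_n]_E$, where $E$ denotes $E_\mathrm{super}$ together with the diagonal. Writing $\Delta = \hat{\Theta} - \Theta^\star$ and $W = \hat{\Sigma}_n - \Sigma^\star$ and using $[\Theta^{\star -1}]_E = [\Sigma^\star]_E$, this yields the exact identity $[\hat{\Theta}^{-1} - \Sigma^\star]_E = W_E$.

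Next I would linearize the inverse. Expanding $\hat{\Theta}^{-1} = (\Theta^\star + \Delta)^{-1} = \Sigma^\star - \Sigma^\star\Delta\Sigma^\star + R(\Delta)$ with second-order remainder $R(\Delta) = \Sigma^\star\Delta(\Theta^\star+\Delta)^{-1}\Delta\Sigma^\star$, and restricting to $E$, gives the linear system $N\,\mathrm{vec}(\Delta) = \mathrm{vec}\{[R(\Delta) - W]_E\}$, where $N = [\Sigma^\star\otimes\Sigma^\star]_{E_\mathrm{super},E_\mathrm{super}}$ is the restricted Hessian of Assumption~\ref{assump:sample_cond}; here I use that $\Delta$ is itself supported on $E$ together with $\mathrm{vec}(\Sigma^\star\Delta\Sigma^\star) = (\Sigma^\star\otimes\Sigma^\star)\mathrm{vec}(\Delta)$. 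Inverting and passing to the $\ell_\infty$ operator norm gives $\|\Delta\|_\infty \leq \vertiii{N^{-1}}\{\|W\|_\infty + \|R(\Delta)\|_\infty\}$. The term $\|W\|_\infty$ is controlled by entrywise Gaussian concentration of the sample covariance: under the eigenvalue bounds of Assumption~\ref{assump:sample_cond}, a union bound over the $\mathcal{O}(m^2)$ entries shows that with probability at least $1-1/m$ one has $\|W\|_\infty \leq \mathcal{O}(\sqrt{\log m/n})$, with the sharp constant producing the statistical-rate term $\{4(1-c)\log m/(n\sigma_{\max}(\Theta^\star)^2)\}^{1/2}$.

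The final and most delicate step is controlling the nonlinear remainder $R(\Delta)$ and closing the argument with a self-bounding (fixed-point) scheme. Because each row of $\Delta$ has at most $d_{\max}$ nonzero entries, and using the eigenvalue bounds on $\Sigma^\star$, I would show $\|R(\Delta)\|_\infty \leq \mathcal{O}(d_{\max}\|\Delta\|_\infty^2)$, so that $R$ is quadratically small in $\|\Delta\|_\infty$. Substituting this into the bound above produces an inequality of the form $\|\Delta\|_\infty \leq a + b\|\Delta\|_\infty^2$, and a standard continuity argument restricted to the ball of radius $\sigma_{\max}(\Theta^\star)/(2d_{\max})$---on which the expansion of the inverse is valid and $\hat{\Theta} \succ 0$ is guaranteed---shows that whenever the statistical rate lies below this radius, $\|\Delta\|_\infty$ is pinned at the linear term, giving $\|\Delta\|_\infty \leq \min\{\sigma_{\max}(\Theta^\star)/(2d_{\max}),\ \{4(1-c)\log m/(n\sigma_{\max}(\Theta^\star)^2)\}^{1/2}\}$. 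I expect the main obstacle to be precisely this remainder control: obtaining a sharp $\ell_\infty$ (rather than Frobenius) bound on $R(\Delta)$ that genuinely exploits the $d_{\max}$-sparsity of $\Delta$, bounding $\vertiii{N^{-1}}$ in terms of $\sigma_{\max}(\Theta^\star)$, and verifying that the fixed-point iteration never leaves the validity radius, so that the constants reproduce the stated bound exactly.
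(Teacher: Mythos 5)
Your proposal follows essentially the same route as the paper's proof: feasibility of $\Theta^\star$ plus KKT/stationarity conditions restricted to the support subspace, a Taylor expansion of $\hat{\Theta}^{-1}$ around $\Theta^\star$, inversion of the restricted Hessian $N = [\Sigma^\star\otimes\Sigma^\star]_{E_\mathrm{super},E_\mathrm{super}}$ (with $\vertiii{N^{-1}}$ controlled via $\kappa_2$), a $d_{\max}$-aware $\ell_\infty$ bound on the quadratic remainder, and entrywise Gaussian concentration for $\|\hat{\Sigma}_n-\Sigma^\star\|_\infty$. The only difference is in the closing step: where you appeal to a self-bounding inequality plus a ``standard continuity argument'' to select the small root, the paper makes precisely this step rigorous by applying Brouwer's fixed point theorem to the map $\mathcal{J}(\delta) = \delta - (\mathcal{P}_\Psi\mathbb{I}^\star\mathcal{P}_\Psi)^{-1}[\mathcal{P}_{\Psi}\{\mathbb{I}^\star\mathcal{P}_{\Psi}\delta - \mathcal{R}(\delta) + \hat{\Sigma}_n-\Sigma^\star\}]$ on the ball $\mathcal{B}_r$, and by invoking uniqueness of the minimizer to identify the resulting fixed point with $\hat{\Theta}-\Theta^\star$.
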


\begin{proof}[Proof of Lemma~\ref{lemma:temp_const}]
Since the objective of $f(\Theta)$ is strictly convex in its domain and the constraint is convex, the solution to \eqref{eqn:opt_const} is unique. The Karush-Kuhn-Tucker (KKT) conditions state that there exists $Z \in \mathbb{R}^{m \times m}$ with $\text{support}(Z) \subseteq E_\mathrm{super}^c$ (here, $E_\mathrm{super}^c$ is the complement of $E_\mathrm{super}$) such that
\begin{eqnarray*}-\hat{\Theta}^{-1} + \hat{\Sigma}_n + Z = 0,
\end{eqnarray*}
where $Z$ is the subdifferential term representing the effect of the support constraint.

Letting $\Delta = \hat{\Theta}-\Theta^\star$, by Taylor series expansion, 
\begin{eqnarray*}\hat{\Theta}^{-1} = (\Theta^\star + \Delta)^{-1} = {\Theta^\star}^{-1} -{\Theta^\star}^{-1}\Delta{\Theta^\star}^{-1} + \mathcal{R}(\Delta),
\end{eqnarray*}
where $\mathcal{R}(\Delta) = {\Theta^\star}^{-1}\sum_{k=2}^\infty (-\Delta\Theta^\star)^{k}$. Letting $\mathbb{I}^\star = {\Theta^\star}^{-1} \otimes {\Theta^\star}^{-1}$, the KKT condition can be restated as
\begin{eqnarray*}
\mathcal{P}_{\Psi}\{\mathbb{I}^\star(\Delta) - \mathcal{R}(\Delta) + \hat{\Sigma}_n-\Sigma^\star\} = 0,
\end{eqnarray*}
where $\Psi$ is the subspace $\Psi := \{M \in \mathbb{R}^{m \times m} : \mathrm{support}(M) \subseteq E_\mathrm{super}\}$ and $\mathcal{P}_\Psi$ is a projection operator onto the subspace $\Psi$ that zeros out entries outside of the support set $E_\mathrm{super}$. We now use Brouwer's fixed point theorem to bound $\|\Delta\|_\infty$. Specifically, we define the following operator $\mathcal{J} : \Psi \to \Psi$ as
\begin{eqnarray*}
\mathcal{J}(\delta) = \delta - (\mathcal{P}_\Psi\mathbb{I}^\star\mathcal{P}_\Psi)^{-1}[\mathcal{P}_{\Psi}\{\mathbb{I}^\star\mathcal{P}_{\Psi}\delta - \mathcal{R}(\delta) + \hat{\Sigma}_n-\Sigma^\star\}].
\end{eqnarray*}
Here, the operator $\mathcal{P}_\Psi\mathbb{I}^\star\mathcal{P}_\Psi$ is invertible by Assumption~\ref{assump:sample_cond}. Notice that any fixed point $\delta$ of $\mathcal{J}$ satisfies the KKT condition \eqref{eqn:opt_const}. However, since the optimization problem \eqref{eqn:opt_const} has a unique minimizer, and KKT conditions are necessary and sufficient for optimality, we have that $\mathcal{J}$ has a unique fixed point. Now consider the following compact set: $\mathcal{B}_r = \{\delta \in \Psi: \|\delta\|_\infty \leq r\}$ with $r = \min\left\{2/\kappa_2\|\hat{\Sigma}_n-\Sigma^\star\|_\infty,1/(3\kappa_1{d}_{\max}),3\kappa_2/(4\kappa_1^3d_{\max}^2)\right\}$, where  $\kappa_1$ and $\kappa_2$ are defined in Assumption~\ref{assump:sample_cond}, and $d_{\max}$ is used in Assumption~\ref{condition:strong_beta}. Consider any $\delta \in \mathcal{B}_r$. Using the sub-multiplicative property of the norm $\vertiii{\cdot}$, we have
\begin{eqnarray*}
\vertiii{{\Theta^\star}^{-1}\delta} \leq \vertiii{\Sigma^\star}\vertiii{\delta} \leq \kappa_{1}d_{\max}\|\delta\|_\infty < 1/3.
\end{eqnarray*}
Note that
\begin{eqnarray*}
\|\mathcal{R}(\delta)\|_\infty \leq \|\Sigma^\star\|_\infty \sum_{k=2}^\infty \vertiii{\Sigma^\star\delta}^k \leq \kappa_1\sum_{k=2}^{\infty} (\kappa_1{d}_{\max}r)^k = \frac{\kappa_1(\kappa_1{d}_{\max}r)^2}{1-\kappa_1{d}_{\max}r} \leq \frac{2}{3}{\kappa}_1^3{d}_{\max}^2r^2 \leq \frac{r{\kappa_2}}{2},
\end{eqnarray*}
where the last inequality follows from the bound on $r$. This allows us to conclude that
\begin{eqnarray*}
\|\mathcal{J}(\delta)\|_\infty \leq \frac{1}{\kappa_2}\{\|\mathcal{R}(\delta)\|_\infty + \|\hat{\Sigma}_n-\Sigma^\star\|_\infty\} \leq r.
\end{eqnarray*}
In other words, we have shown that $\mathcal{J}$ maps $\mathcal{B}_r$ to itself. Appealing to Brouwer's fixed point theorem, we have that the fixed point must also lie inside $\mathcal{B}_r$. Thus, $\|\Delta\| \leq r$. From standard Gaussian concentration results, we have that for any constant $\tilde{c}$, if $n \geq \mathcal{O}(\log m)$ (where the order here depends on $\tilde{c}$), with probability greater than $1-1/m$, $\|\hat{\Sigma}_n-\Sigma^\star\|_\infty \leq \tilde{c}(\log m/n)^{1/2}$. Since $r \leq 2/\kappa_2\|\hat{\Sigma}_n-\Sigma^\star\|_\infty$, $\sigma_{\max}(\Theta^\star) = \mathcal{O}(1)$ by Assumption~\ref{assump:sample_cond}, and $d_{\max} \leq \mathcal{O}\{(n/\log m)^{1/2}\}$ by the theorem statement, we have the desired result. 
\end{proof}

\begin{proof}[Proof of Theorem~\ref{corr:equiv}] For notational ease, we introduce the short-hand notation $\ell_n(\Gamma):= \ell_n(\Gamma\Gamma^\top)$. First, using the same arguments as in the proof of Theorem~\ref{thm:main}, we obtain
$$\|\hat{\Gamma}^{\mathrm{early}} - \tilde{\Gamma}^\star(\hat{\pi}^\mathrm{early})\|_F^2 = \mathcal{O}\left(\lambda^2{s}^\star\right).$$
Since we are considering a $\lambda^2$ with a larger order term than that in Theorem~\ref{thm:main}, a stronger assumption is required on the nonzero entries of  $\tilde{B}^\star(\pi)$. Specifically, to ensure the same result holds, we impose Assumption~\ref{condition:strong_beta} (the strong beta-min condition),   which requires the nonzero entries to be of the same order as $\lambda^2$.

For simplicity, denote $\tilde{\Gamma}^\star = \tilde{\Gamma}^\star(\hat{\pi}^\mathrm{early})$.
We have shown that with probability greater than $1-2\alpha_0$, $\|\hat{\Gamma}^\text{early}-\tilde{\Gamma}^\star\|_F^2 \leq \mathcal{O}(\lambda^2{s}^\star)$. Thus, with probability greater than $1-2\alpha_0$, $\|\hat{\Gamma}^\text{early}-\tilde{\Gamma}^\star\|_\infty\leq \mathcal{O}(\lambda{s^\star}^{1/2})$. Combining Assumptions \ref{ass:noise_lower_bound} and \ref{condition:strong_beta}, we have that with probability greater than $1-2\alpha_0$, the magnitude of every non-zero off-diagonal entry of $\tilde{\Gamma}^\star$ is lower-bounded by $\mathcal{O}(\lambda{s^\star}^{1/2})$. Thus, it follows that with probability greater than $1-2\alpha_0$, $\text{support}(\hat{\Gamma}^{\mathrm{early}}) \supseteq \text{support}(\tilde{\Gamma}^\star)$ and consequently $\|\hat{\Gamma}^{\mathrm{early}}\|_{\ell_0}\geq \|\tilde{\Gamma}^\star\|_{\ell_0}$. By the faithfulness condition in Assumption~\ref{condition:faithfulness}, we know that the sparsest graphs that are compatible with the population distribution are the ones in $\mathrm{MEC}(\mathcal{G}^\star)$. Thus, if we show that $\|\hat{\Gamma}^{\mathrm{early}}\|_{\ell_0} = s^\star$, where $s^\star$ represents the number of edges in the true graph, we would have that $\tilde{\Gamma}^\star$ is a member of the population Markov equivalence class and that $\hat{\Gamma}^{\mathrm{early}}$ specifies the same graph as $\tilde{\Gamma}^\star$. 

Notice that by the faithfulness condition, $\|\tilde{\Gamma}^\star\|_{\ell_0} \geq s^\star$. Suppose as a point of contradiction that $\|\hat{\Gamma}^{\mathrm{early}}\|_{\ell_0} > s^\star$. By early stopping criterion, which ensures that the objective value of the early stopped solution is at most $c\lambda^2$ (for some $0<c<1$) larger than the objective value of the optimal solution, we have 
$$
\ell_n(\hat{\Gamma}^{\mathrm{early}}) + \lambda^2 \hat{s}^{\mathrm{early}} \leq 
 \ell_n(\hat{\Gamma}^{\mathrm{opt}}) + \lambda^2 \hat{s}^{\mathrm{opt}} + c\lambda^2 \leq \ell_n(\Gamma^\star_\mathrm{mec}) + \lambda^2 s^\star + c\lambda^2.
 $$
Re-arranging terms, we obtain
\begin{eqnarray*}
\ell_n(\hat{\Gamma}^{\mathrm{early}}) - \ell_n(\Gamma^\star_\mathrm{mec}) \leq \lambda^2 (s^\star-\hat{s}^{\mathrm{early}}) + c\lambda^2,
\end{eqnarray*}
where by the assumption of our proof, $\hat{s}^\mathrm{early} \geq s^\star+1$. Then, we would reach a contradiction if we show that $\ell_n(\hat{\Gamma}^{\mathrm{early}}) - \ell_n(\Gamma^\star_\mathrm{mec}) \geq -(1-c)d_{\max}m\log m/n$.

Notice that
\begin{eqnarray*}
\ell_n(\hat{\Gamma}^{\mathrm{early}}) \geq -\log\det(\hat{\Theta}) + \mathrm{tr}(\hat{\Theta}\hat{\Sigma}_n):= f(\hat{\Theta}).
\end{eqnarray*}
Since $\Theta^\star = \Gamma^\star_\mathrm{mec}{\Gamma^\star_\mathrm{mec}}^\top$, we have that $\ell_n(\Gamma^\star_\mathrm{mec}) = f(\Theta^\star)$. Thus, in order to show that $\ell_n(\hat{\Gamma}^{\mathrm{early}}) - \ell_n(\Gamma^\star_\mathrm{mec}) \geq -(1-c)d_{\max}m\log m/n$, it suffices to show that $|f(\hat{\Theta}) - f(\Theta^\star)| \leq (1-c)d_{\max}m\log m/n$, since $\ell_n(\hat{\Gamma}^{\mathrm{early}})> f(\hat{\Theta})$.

The KKT conditions of \eqref{eqn:opt_const} state that there exists a dual variable $Z\in\mathbb{R}^{m \times m}$ with $\text{support}(Z) \subseteq E_\mathrm{super}^c$ where
$$ 0 =\nabla{f}(\hat{\Theta})+ Z.$$
A Taylor series expansion yields
$$f(\hat{\Theta})-f(\Theta^\star) = \mathrm{vec}\{\nabla{f}(\hat{\Theta})\}^\top \mathrm{vec}(\hat{\Theta}-\Theta^\star) + \frac{1}{2}\mathrm{vec}(\hat{\Theta}-\Theta^\star)^\top \nabla^2f(\bar{\Theta})\mathrm{vec}(\hat{\Theta}-\Theta^\star),$$
where, for every $i,j \in \{1,\dots,p\}$, $\bar{\Theta}_{ij}$ lies between $\hat{\Theta}_{ij}$ and $\Theta^\star_{ij}$. Since $\mathrm{support}(\hat{\Theta})$ and $\mathrm{support}(\Theta^\star)$ are both contained inside $E_\mathrm{super}$, we have that
$$f(\hat{\Theta})-f(\Theta^\star) = \mathrm{vec}\{\nabla{f}(\hat{\Theta})+Z\}^\top \mathrm{vec}(\hat{\Theta}-\Theta^\star) + \frac{1}{2}\mathrm{vec}(\hat{\Theta}-\Theta^\star)^\top \nabla^2f(\bar{\Theta})\mathrm{vec}(\hat{\Theta}-\Theta^\star).$$
Here, $\mathrm{vec}(\cdot)$ vectorizes an input matrix. From the optimality condition of \eqref{eqn:opt_const}, we have 
$$f(\hat{\Theta})-f(\Theta^\star) = \frac{1}{2}\mathrm{vec}(\hat{\Theta}-\Theta^\star)^\top \nabla^2f(\bar{\Theta})\mathrm{vec}(\hat{\Theta}-\Theta^\star),$$
which yields the following bound
\begin{eqnarray}
\begin{aligned}
        |f(\hat{\Theta})-f(\Theta^\star)| \leq \frac{\|\hat{\Theta}-\Theta^\star\|_F^2}{\sigma_\mathrm{min}\{\nabla^2f(\bar{\Theta})\}} \leq \frac{md_{\max}\|\hat{\Theta}-\Theta^\star\|_\infty^2}{\sigma_\mathrm{min}\{\nabla^2f(\bar{\Theta})\}},
\end{aligned}
\label{eqn:temp_bound}
\end{eqnarray}
where $\sigma_\mathrm{min}(\cdot)$ computes the minimum singular value. Here, the second inequality follows from the fact that $\|\hat{\Theta}-\Theta^\star\|_F \leq {m}^{1/2}\max_{i}\|\hat{\Theta}_{i:}-\Theta^\star_{i:}\|_2 \leq (md_{\max})^{1/2}\max_{i}\|\hat{\Theta}_{i:}-\Theta^\star_{i:}\|_\infty =(md_{\max})^{1/2} \|\hat{\Theta}-\Theta^\star\|_\infty$. Notice that $\nabla^2f(\bar{\Theta}) = \bar{\Theta}^{-1} \otimes \bar{\Theta}^{-1}$. Then, $\sigma_{\min}\{\nabla^2f(\bar{\Theta})\} = 1/\sigma_{\max}(\bar{\Theta})^2$. Further,
\begin{eqnarray*}
\sigma_{\max}(\bar{\Theta}) \geq \sigma_{\max}(\Theta^\star) - \|\Theta^\star-\bar{\Theta}\|_2 \geq \sigma_{\max}(\Theta^\star) - d_{\max}\|\Theta^\star-\bar{\Theta}\|_\infty \geq \sigma_{\max}(\Theta^\star) - d_{\max}\|\Theta^\star-\hat{\Theta}\|_\infty,
\end{eqnarray*}
where the second inequality follows from the fact that $\Theta^\star-\bar{\Theta} \in \Psi$ and that for a matrix $N$ with at most $s$ nonzeros in any row or column, $\|N\|_2 \leq s\|N\|_\infty$. From Lemma~\ref{lemma:temp_const}, we then have that, with probability greater than $1-1/m$, $\sigma_{\max}(\bar{\Theta}) \geq \sigma_{\max}(\Theta^\star)/2$. Plugging that into the bound \eqref{eqn:temp_bound} and again appealing to Lemma~\ref{lemma:temp_const}, we have that $(md_{\max}\|\hat{\Theta}-\Theta^\star\|_\infty^2)/\sigma_\mathrm{min}\{\nabla^2f(\bar{\Theta})\} \leq (1-c)m\log m/n$.
\end{proof}

\subsection{Comparing statistical rates of early-stopped solution and optimal solution}
\label{sec:comparison_early_stopping_theory}
In Theorem~3.1 of \cite{vgBuhlmann}, the authors characterize the statistical properties of formulation  \eqref{Problem:original}. Due to the equivalence in Proposition~\ref{prop:main}, the following result is directly implied by Theorem 3.1 of \cite{vgBuhlmann}. 
\begin{proposition} Suppose Assumptions~\ref{condition:condition4}--\ref{condition:faithfulness} hold with constants $\tilde{\alpha}, \eta_0$ sufficiently small. Let $\hat{\Gamma}^\mathrm{opt}$ be the optimal solution of the optimization problem \eqref{Problem:temp} with the additional constraint that $\|\Gamma_{:j}\|_{\ell_0} \leq \tilde{\alpha}n^{1/2}/\log(m)$ for every $j \in V$. Let $\hat{\pi}^\mathrm{opt}$ be the the ordering of the variables associated with $\hat{\Gamma}^\mathrm{opt}$. Let $(\hat{B}^{\mathrm{opt}},\hat{\Omega}^\mathrm{opt})$ be the associated connectivity and noise variance matrix satisfying $\hat{\Gamma}^\mathrm{opt} = (I-\hat{B}^\mathrm{opt})({{\hat{\Omega}^\mathrm{opt}}})^{-1/2}$. Let $\alpha_0 = (4/m) \land 0.05$. Then for $\lambda^2 \asymp {\log(m)}/{n}$, we have, with probability greater than $1-\alpha_0$,
 \begin{itemize}
        \item $
\|\hat{B}^{\mathrm{opt}} - \tilde{B}^\star(\hat{\pi}^\mathrm{opt})\|_F^2 + \|\hat{\Omega}^{\mathrm{opt}} - \tilde{\Omega}^\star(\hat{\pi}^\mathrm{opt})\|_F^2 = \mathcal{O}(\lambda^2s^\star)\leq\mathcal{O}\left(\lambda^2 s^{\star}\right)$, and $\left\|\tilde{B}^{\star}(\hat{\pi}^\text{opt})\right\|_{\ell_0} \asymp s^{\star}.
$
 \item $\hat{\Omega}^\mathrm{opt}_{jj} \geq \mathcal{O}(1)$, for $j=1,\ldots,m$.
    \end{itemize}
\label{thm:vgb_basic}
\end{proposition}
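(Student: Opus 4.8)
The plan is to obtain Proposition~\ref{thm:vgb_basic} as an essentially immediate corollary of Theorem~3.1 of \cite{vgBuhlmann}, by transporting the optimal solution of the reformulated program \eqref{Problem:temp} back to the original $\ell_0$-penalized maximum likelihood estimator \eqref{Problem:original} through the change of variables in Proposition~\ref{prop:main}. The estimator analyzed in \cite{vgBuhlmann} is precisely \eqref{Problem:original} augmented with the per-column sparsity constraint, and under Assumptions~\ref{condition:condition4}--\ref{condition:faithfulness} (which were chosen to mirror theirs) their Theorem~3.1 delivers exactly the bound $\|\hat{B}^\mathrm{opt} - \tilde{B}^\star(\hat{\pi}^\mathrm{opt})\|_F^2 + \|\hat{\Omega}^\mathrm{opt} - \tilde{\Omega}^\star(\hat{\pi}^\mathrm{opt})\|_F^2 = \mathcal{O}(\lambda^2 s^\star)$ together with $\|\tilde{B}^\star(\hat{\pi}^\mathrm{opt})\|_{\ell_0} \asymp s^\star$, which is the first bullet.

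First I would verify that the two constrained programs are genuinely equivalent under the change of variables. By Proposition~\ref{prop:main}(b), any minimizer $\hat{\Gamma}^\mathrm{opt}$ of \eqref{Problem:temp} yields a minimizer $(\hat{B}^\mathrm{opt}, \hat{D}^\mathrm{opt})$ of \eqref{Problem:original} via $\hat{D}^\mathrm{opt} = \mathrm{diag}(\hat{\Gamma}^\mathrm{opt})^2$ and $\hat{B}^\mathrm{opt} = I - \hat{\Gamma}^\mathrm{opt}(\hat{D}^\mathrm{opt})^{-1/2}$, with $\hat{\Omega}^\mathrm{opt} = (\hat{D}^\mathrm{opt})^{-1}$. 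Crucially, the off-diagonal support of $\hat{\Gamma}^\mathrm{opt}$ coincides with the support of $\hat{B}^\mathrm{opt}$, so both the superstructure constraint $\mathrm{support}(\Gamma-\mathrm{diag}(\Gamma)) \subseteq E_\mathrm{super}$ and the per-column $\ell_0$ budget transfer verbatim between the two formulations. Hence applying Theorem~3.1 of \cite{vgBuhlmann} to $(\hat{B}^\mathrm{opt}, \hat{\Omega}^\mathrm{opt})$ is legitimate and yields the first bullet with probability at least $1-\alpha_0$.

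For the second bullet I would reuse the argument from part~(b) of the proof of Proposition~\ref{thm:vgb_early}. Optimizing the objective over the diagonal rescaling $\hat{\Gamma}^\mathrm{opt} \leftarrow (I-\hat{B}^\mathrm{opt})\Omega^{-1/2}$ shows that, without loss of generality, $\hat{\Omega}^\mathrm{opt}_{ii} = \|\mathcal{X}_i - \mathcal{X}\hat{B}^\mathrm{opt}_{:i}\|_2^2/n$; since $\hat{B}^\mathrm{opt}$ has zero diagonal, this residual sum of squares is bounded below by $\sigma_\mathrm{min}(\hat{\Sigma}_n)$ through the Schur-complement and Cauchy-interlacing computation already carried out there. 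Under Assumption~\ref{assump:sample_cond} and $n \geq \mathcal{O}(m)$, Gaussian concentration gives $\sigma_\mathrm{min}(\hat{\Sigma}_n) \geq \underline{\kappa} - \mathcal{O}\{(m/n)^{1/2}\} \geq 1/K_2$, so $\hat{\Omega}^\mathrm{opt}_{jj} \geq \mathcal{O}(1)$ for all $j$.

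The main obstacle is not any deep new estimate but rather the bookkeeping of confirming that our hypotheses line up exactly with those of Theorem~3.1 of \cite{vgBuhlmann}: matching the per-column sparsity budget (the statement writes $\tilde{\alpha}n^{1/2}/\log m$, which should read $\tilde{\alpha}n/\log m$ to agree with Assumption~\ref{condition:condition4} and \eqref{Problem:temp}), and checking that their design-matrix and eigenvalue conditions are implied by Assumptions~\ref{assump:sample_cond} and \ref{ass:noise_lower_bound}. Once this correspondence is established, both bullets follow with no further work, which is precisely why the result can be asserted as directly implied.
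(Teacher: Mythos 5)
Your proposal is correct and takes essentially the same route as the paper, which likewise obtains Proposition~\ref{thm:vgb_basic} by combining the equivalence of formulations established in Proposition~\ref{prop:main} with Theorem~3.1 of \cite{vgBuhlmann}, stating that the result is directly implied. Your extra bookkeeping---checking that the superstructure and per-column $\ell_0$ constraints transfer under the change of variables, supplying the residual-variance argument for $\hat{\Omega}^\mathrm{opt}_{jj} \geq \mathcal{O}(1)$ as in part (b) of the proof of Proposition~\ref{thm:vgb_early}, and flagging that the stated budget $\tilde{\alpha}n^{1/2}/\log m$ should read $\tilde{\alpha}n/\log m$ to match Assumption~\ref{condition:condition4} and \eqref{Problem:temp}---merely makes explicit what the paper leaves implicit.
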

Then, one can establish the following two results. 
\begin{theorem} Consider the setup in Theorem~\ref{thm:main}. Then, for $\lambda^2 \asymp \log m/n$,  letting $\hat{\pi}^\mathrm{opt}$ be the ordering associated with $\hat{\Gamma}^\mathrm{opt}$, we have that, with probability greater than $1-2\alpha_0$, $\|\hat{\Gamma}^\mathrm{opt}-\tilde{\Gamma}^{\star}(\hat{\pi}^\mathrm{opt})\|_F^2 \leq \mathcal{O}(\lambda^2s^\star)$, and $\|\tilde{\Gamma}^{\star}(\hat{\pi}^\mathrm{opt})\|_{\ell_0} \asymp s^\star$.
\label{thm:temp_1}
\end{theorem}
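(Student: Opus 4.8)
The plan is to reduce Theorem~\ref{thm:temp_1} to the already-established Proposition~\ref{thm:vgb_basic} by reusing, essentially verbatim, the deterministic norm-conversion argument from the proof of Theorem~\ref{thm:main}. The key observation is that Theorem~\ref{thm:main} and Theorem~\ref{thm:temp_1} differ only in whether the estimate is the early-stopped solution $\hat{\Gamma}^\mathrm{early}$ or the exact optimizer $\hat{\Gamma}^\mathrm{opt}$; the passage from $(\hat{B},\hat{\Omega})$-bounds to $\Gamma$-bounds in the proof of Theorem~\ref{thm:main} never used the early-stopping structure, only the Frobenius-norm guarantees on $\hat{B}$ and $\hat{\Omega}$ together with Assumptions~\ref{assump:sample_cond} and \ref{ass:noise_lower_bound}. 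Since those guarantees for the optimal solution are exactly the content of Proposition~\ref{thm:vgb_basic}, the argument transfers directly.

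First I would set $\Omega := \tilde{\Omega}^\star(\hat{\pi}^\mathrm{opt})$ and $B := \tilde{B}^\star(\hat{\pi}^\mathrm{opt})$ and invoke Proposition~\ref{thm:vgb_basic} to obtain, with probability at least $1-\alpha_0$, the bounds $\|\hat{B}^\mathrm{opt}-B\|_F^2 \le \mathcal{O}(\lambda^2 s^\star)$ and $\|\hat{\Omega}^\mathrm{opt}-\Omega\|_F^2 \le \mathcal{O}(\lambda^2 s^\star)$, the lower bound $\hat{\Omega}^\mathrm{opt}_{jj}\ge\mathcal{O}(1)$, and $\|\tilde{B}^\star(\hat{\pi}^\mathrm{opt})\|_{\ell_0}\asymp s^\star$. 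These are precisely the inputs that Proposition~\ref{thm:vgb_early} provided for the early-stopped estimate in the proof of Theorem~\ref{thm:main}.

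Next I would repeat the mean-value-theorem step: for each $j$, expressing the gap between $\Omega_{jj}^{-1/2}$ and $(\hat{\Omega}^\mathrm{opt})_{jj}^{-1/2}$ through the derivative of $x\mapsto x^{-1/2}$ at an intermediate point $\omega$, and using Assumption~\ref{ass:noise_lower_bound} to guarantee $\omega\ge \min_j\Omega_{jj}/2$, yields $\|\Omega^{-1/2}-(\hat{\Omega}^\mathrm{opt})^{-1/2}\|_F^2 \le \mathcal{O}(\|\Omega-\hat{\Omega}^\mathrm{opt}\|_F^2)=\mathcal{O}(\lambda^2 s^\star)$. I would then use the decomposition $\hat{\Gamma}^\mathrm{opt}-\tilde{\Gamma}^\star(\hat{\pi}^\mathrm{opt}) = (B-\hat{B}^\mathrm{opt})\{(\hat{\Omega}^\mathrm{opt})^{-1/2}-\Omega^{-1/2}\} + (B-\hat{B}^\mathrm{opt})\Omega^{-1/2} + (I-B)\{(\hat{\Omega}^\mathrm{opt})^{-1/2}-\Omega^{-1/2}\}$ and bound each summand in Frobenius norm exactly as before, invoking $\Theta^\star_{jj} = \|(I-B)_{j:}\|_2^2\,\Omega_{jj}^{-1}$ together with Assumption~\ref{assump:sample_cond} to certify $\|(I-B)_{j:}\|_2\le\mathcal{O}(1)$, $\Omega_{jj}\ge\mathcal{O}(1)$, and $\|\Omega^{-1/2}\|_\infty\le\mathcal{O}(1)$. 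This gives $\|\hat{\Gamma}^\mathrm{opt}-\tilde{\Gamma}^\star(\hat{\pi}^\mathrm{opt})\|_F^2\le\mathcal{O}(\lambda^2 s^\star)$. The sparsity claim then follows because $\tilde{\Gamma}^\star(\hat{\pi}^\mathrm{opt})$ shares its off-diagonal support with $\tilde{B}^\star(\hat{\pi}^\mathrm{opt})$ and carries $m$ additional nonzero diagonal entries, so $\|\tilde{\Gamma}^\star(\hat{\pi}^\mathrm{opt})\|_{\ell_0}=\|\tilde{B}^\star(\hat{\pi}^\mathrm{opt})\|_{\ell_0}+m\asymp s^\star$.

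I do not anticipate a genuine obstacle: the only technical content, the $(\hat{B},\hat{\Omega})\to\Gamma$ conversion, has already been executed in the proof of Theorem~\ref{thm:main} and can be transported word-for-word, and the foundational estimates now come directly from Proposition~\ref{thm:vgb_basic} (itself an immediate consequence of Theorem~3.1 of \cite{vgBuhlmann} via the equivalence in Proposition~\ref{prop:main}), so no analog of the additional early-stopping analysis of Proposition~\ref{thm:vgb_early} is needed. The one point deserving a line of care is the probability bookkeeping: Proposition~\ref{thm:vgb_basic} holds on an event of probability $1-\alpha_0$, and the Gaussian-concentration control of $\hat{\Sigma}_n-\Sigma^\star$ used to lower-bound the smallest eigenvalue holds on a further $1-\alpha_0$ event, so a union bound delivers the stated $1-2\alpha_0$.
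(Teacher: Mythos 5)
Your proposal is correct and is essentially identical to the paper's own proof, which simply states that the result follows by combining Proposition~\ref{thm:vgb_basic} with the $(B,\Omega)\to\Gamma$ conversion argument from the proof of Theorem~\ref{thm:main}; you have just written out that conversion (mean-value theorem step, three-term decomposition, and the boundedness facts from Assumptions~\ref{assump:sample_cond} and~\ref{ass:noise_lower_bound}) explicitly. Your added remarks on the support/sparsity count and the union-bound bookkeeping are consistent with (indeed slightly more careful than) what the paper does.
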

The proof of this Theorem follows from combining the result of Proposition~\ref{thm:vgb_basic} with the analysis in proof of Theorem~\ref{thm:main} that connects the parameter matrices $B$ and $\Omega$ to $\Gamma$. 
\begin{theorem}
Consider the setup in Theorem~\ref{corr:equiv}. Suppose that $\lambda^2 \asymp d_{\max}m\log m/n$, $d_{\max} \leq \mathcal{O}\{(n/\log m)^{1/2}\}$ and that $n \geq \mathcal{O}(\log m)$. Then, with probability greater than $1-2\alpha_0$, there exists a member of the population Markov equivalence class with associated parameter $\Gamma^\star_\mathrm{mec}$ such that i) $\|\hat{\Gamma}^\text{opt}-\Gamma^\star_\mathrm{mec}\|_F^2 \leq \mathcal{O}(\lambda^2{s}^\star)$ and ii) we recover the population Markov equivalence class; that is, $\mathrm{MEC}(\hat{\mathcal{G}}) = \mathrm{MEC}({\mathcal{G}}^\star)$ where $\hat{\mathcal{G}} = \mathcal{G}\{\hat{\Gamma}^\mathrm{opt}-\mathrm{diag}(\hat{\Gamma}^{\mathrm{opt}})\}$ is the estimated directed acyclic graph.
\label{thm:temp_2}
\end{theorem}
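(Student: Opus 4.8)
The plan is to mirror the proof of Theorem~\ref{corr:equiv} given in Appendix~\ref{proof:thm_2}, substituting the exact optimizer $\hat{\Gamma}^{\mathrm{opt}}$ for the early-stopped estimate $\hat{\Gamma}^{\mathrm{early}}$ and replacing every appeal to the early-stopping slack $c\lambda^2$ by the exact optimality inequality. The starting point is Theorem~\ref{thm:temp_1} (the optimal-solution analog of Theorem~\ref{thm:main}, already obtained by combining Proposition~\ref{thm:vgb_basic} with the $\Gamma$-to-$(B,\Omega)$ conversion), which gives $\|\hat{\Gamma}^{\mathrm{opt}}-\tilde{\Gamma}^{\star}(\hat{\pi}^{\mathrm{opt}})\|_F^2 \leq \mathcal{O}(\lambda^2 s^\star)$ and $\|\tilde{\Gamma}^{\star}(\hat{\pi}^{\mathrm{opt}})\|_{\ell_0} \asymp s^\star$ under $\lambda^2 \asymp d_{\max}m\log m/n$ (the larger order forcing the strong beta-min Assumption~\ref{condition:strong_beta}). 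Once I show that $\tilde{\Gamma}^{\star}(\hat{\pi}^{\mathrm{opt}})$ is a member of the population Markov equivalence class, part~(i) is immediate with $\Gamma^\star_\mathrm{mec} = \tilde{\Gamma}^{\star}(\hat{\pi}^{\mathrm{opt}})$.

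Next I would carry out the support-recovery step exactly as in Theorem~\ref{corr:equiv}. Passing from Frobenius to max norm gives $\|\hat{\Gamma}^{\mathrm{opt}}-\tilde{\Gamma}^{\star}(\hat{\pi}^{\mathrm{opt}})\|_\infty \leq \mathcal{O}(\lambda {s^\star}^{1/2})$, and combining Assumptions~\ref{ass:noise_lower_bound} and~\ref{condition:strong_beta} lower-bounds every nonzero off-diagonal entry of $\tilde{\Gamma}^{\star}(\hat{\pi}^{\mathrm{opt}})$ by $\mathcal{O}(\lambda {s^\star}^{1/2})$. Hence no true edge is dropped, so $\mathrm{support}(\hat{\Gamma}^{\mathrm{opt}}) \supseteq \mathrm{support}(\tilde{\Gamma}^{\star}(\hat{\pi}^{\mathrm{opt}}))$ and $\hat{s}^{\mathrm{opt}} \geq \|\tilde{\Gamma}^{\star}(\hat{\pi}^{\mathrm{opt}})\|_{\ell_0} \geq s^\star$, the last bound by faithfulness (Assumption~\ref{condition:faithfulness}). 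Since faithfulness identifies the sparsest compatible graphs with $\mathrm{MEC}(\mathcal{G}^\star)$, it remains only to rule out $\hat{s}^{\mathrm{opt}} > s^\star$; establishing $\hat{s}^{\mathrm{opt}} = s^\star$ then places $\tilde{\Gamma}^{\star}(\hat{\pi}^{\mathrm{opt}})$ in the equivalence class and yields $\mathrm{MEC}(\hat{\mathcal{G}}) = \mathrm{MEC}(\mathcal{G}^\star)$, which is part~(ii).

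The core of the argument is a contradiction hypothesis $\hat{s}^{\mathrm{opt}} \geq s^\star + 1$. Here the only genuine departure from the proof of Theorem~\ref{corr:equiv} appears: exact optimality of $\hat{\Gamma}^{\mathrm{opt}}$ gives $\ell_n(\hat{\Gamma}^{\mathrm{opt}}) + \lambda^2 \hat{s}^{\mathrm{opt}} \leq \ell_n(\Gamma^\star_\mathrm{mec}) + \lambda^2 s^\star$ directly (no $c\lambda^2$ term), so that $\ell_n(\hat{\Gamma}^{\mathrm{opt}}) - \ell_n(\Gamma^\star_\mathrm{mec}) \leq \lambda^2(s^\star - \hat{s}^{\mathrm{opt}}) \leq -\lambda^2$. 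For the reverse direction I would use $\ell_n(\hat{\Gamma}^{\mathrm{opt}}) \geq f(\hat{\Theta})$, with $\hat{\Theta}$ the support-constrained precision estimator of~\eqref{eqn:opt_const}, and $\ell_n(\Gamma^\star_\mathrm{mec}) = f(\Theta^\star)$, so that the difference is at least $-|f(\hat{\Theta})-f(\Theta^\star)|$. A KKT-plus-second-order-Taylor expansion together with Lemma~\ref{lemma:temp_const} bounds this perturbation; because the Gaussian concentration in Lemma~\ref{lemma:temp_const} only requires $n \geq \mathcal{O}(\log m)$, this is precisely where the optimal-solution version relaxes the $n \geq \mathcal{O}(m)$ hypothesis of Theorem~\ref{corr:equiv} (which there came from part~(b) of Proposition~\ref{thm:vgb_early}).

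The main obstacle is the loss of the early-stopping margin $(1-c)$ that made the contradiction comfortable in Theorem~\ref{corr:equiv}: I must verify that $|f(\hat{\Theta})-f(\Theta^\star)|$ is \emph{strictly} below $\lambda^2$. The bound from Lemma~\ref{lemma:temp_const} is $|f(\hat{\Theta})-f(\Theta^\star)| = \mathcal{O}(d_{\max}m\log m/n)$ with a fixed implied constant, matching the order of $\lambda^2$; keeping the condition $d_{\max} \leq \mathcal{O}\{(n/\log m)^{1/2}\}$ ensures the second branch of the minimum in Lemma~\ref{lemma:temp_const} binds, so $\|\hat{\Theta}-\Theta^\star\|_\infty^2 \lesssim \log m/n$ and the $m d_{\max}$ sparsity factor in $\|\hat{\Theta}-\Theta^\star\|_F^2 \leq m d_{\max}\|\hat{\Theta}-\Theta^\star\|_\infty^2$ is accounted for. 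Since $\lambda^2 \asymp d_{\max}m\log m/n$ leaves the leading constant free, I would choose it large enough that $\lambda^2 > |f(\hat{\Theta})-f(\Theta^\star)|$, contradicting $\ell_n(\hat{\Gamma}^{\mathrm{opt}}) - \ell_n(\Gamma^\star_\mathrm{mec}) \leq -\lambda^2$ and forcing $\hat{s}^{\mathrm{opt}} = s^\star$. This constant-matching bookkeeping, rather than any new idea, is the delicate step.
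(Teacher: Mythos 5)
Your proposal is correct and takes essentially the same route as the paper, whose entire proof of Theorem~\ref{thm:temp_2} is the remark that it ``follows a very similar argument as proof of Theorem~\ref{corr:equiv} (with $c=0$)''---precisely the substitution you carry out, replacing the early-stopping slack $c\lambda^2$ with exact optimality, reusing the support-recovery and contradiction steps, and invoking Lemma~\ref{lemma:temp_const} (whose Gaussian concentration indeed only needs $n \geq \mathcal{O}(\log m)$, explaining the relaxed sample-size condition). Your final constant-bookkeeping step (choosing the leading constant in $\lambda^2 \asymp d_{\max} m \log m / n$ to dominate $|f(\hat{\Theta})-f(\Theta^\star)|$) is the same device the paper uses implicitly through the $(1-c)$ factor baked into the statement of Lemma~\ref{lemma:temp_const}, evaluated at $c=0$.
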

The proof of Theorem~\ref{thm:temp_2} follows a very similar argument as proof of Theorem~\ref{corr:equiv} (with $c=0$). 

Comparing Theorem~\ref{thm:temp_1} to Theorem~\ref{thm:main} and Theorem~\ref{thm:temp_2} to Theorem~\ref{corr:equiv}, we note that early stopping provides similar rates of convergence to population parameters. Thus, early stopping can result in significant computational speedups without sacrificing statistical accuracy (in terms of convergence rates).

\subsection{Example of outer approximation}
\label{oa_example}
In this section, we give a simple example to illustrate outer approximation. Consider the following integer programming problem: 
$$\min\limits_{x\in \mathbb{Z}_+} -2\log x + x,$$ which we know the optimal solution is $x=2$. Replacing $-2\log x$ with $y$, the outer approximation works as follows. Starting from any feasible point, say $x^{(1)}=4, y^{(1)}=-\infty$, we have $y^{(1)} < -2\log x^{(1)}$, so we need to add the first cutting plane: $y \geq -2\log 4 - 0.5(x- 4) = -0.5x + 2 - 2\log 4$. Then, by solving $$\min\limits_{x\in \mathbb{Z}_+, y\in \R} y + x \text{ s.t. } y \geq -0.5x + 2 - 2\log 4,$$ we obtain the solution $y^{(2)} = -1.27, x^{(2)} = 1$. Since $y^{(2)} < -2\log x^{(2)}$, we need to add another cutting plane to get a better lower bound. By solving $$\min\limits_{x\in \mathbb{Z}_+, y\in \R} y + x \text{ s.t. } y \geq -0.5x + 2 - 2\log 4, y \geq -2x+2,$$ we obtain the solution $y^{(3)} = -1.77, x^{(3)} = 2$. Since $y^{(3)} < -2\log x^{(3)}$, we add the third cutting plane: $y\geq  -x + 2-2\log 2$, which leads to the solution $y^{(2)} = -2\log 2 =-1.39, x^{(2)} = 2.$ Then the outer approximation algorithm stops.

\begin{figure}[t]
    \centering
    \begin{tikzpicture}[scale=0.8]
  \draw[->] (0,0) -- (6,0) node[right] {$x$};
  \draw[->] (0,-4) -- (0,3) node[above] {$y$};
  
  \draw[domain=0.25:6,smooth,variable=\x,blue] plot ({\x},{-2*ln(\x)+\x}) node[right] {$-2\log x + x$};
  \draw[domain=0.25:7.5,smooth,variable=\x,blue] plot ({\x},{-2*ln(\x)}) node[right] {$-2\log x$};
  \filldraw[black] (2,-1.38629436) circle (2pt) node[above right] {$(2, -1.39)$};
  \filldraw[black] (2,-1.7725887) circle (2pt) node[below left] {$(2, -1.77)$};
  \filldraw[black] (1,-1.2725887) circle (2pt) node[below left] {$(1, -1.27)$};
  
  \draw[dashed] (0,-0.7725887) -- (4,-2.77258) node[right] {first cutting plane};
  \draw[dashed] (0.1,1.8) -- (3,-4) node[right] {second cutting plane};
  \draw[dashed] node[above left] {third cutting plane}(0,0.6137056) -- (3.7,-3.08629436) ;
  
    \end{tikzpicture}
    \caption{Illustration of the outer approximation algorithm.}
    \label{fig:OA}
\end{figure}

As shown in Fig.~\ref{fig:OA}, the cutting planes are dynamically added to the problem, which offers two distinct advantages compared to the piece-wise linear approximation at predetermined breakpoints. First, we solve the problem exactly. Second, it is computationally more efficient by obviating the need for the a priori creation of a piece-wise linear approximation function with a large number of breakpoints
to reduce the approximation error as is done by \texttt{Gurobi}'s general constraint attribute.

\subsection{Comparison with other benchmarks using true moral graph as superstructure}\label{sec:compare_benchmarks}
In Table \ref{tab:compare_benchmarks}, we present results where true moral graphs are used as superstructures, if applicable. {We observe that the improvement of our algorithm over greedy equivalence search is not as significant in Table \ref{tab:compare_benchmarks} (where a true moral graph is supplied as input) as compared to Table     \ref{tab:compare_benchmarks_est}
 (where an estimated moral graph is supplied). To better understand this phenomenon, we recall that greedy equivalence search is a greedy heuristic that makes a locally optimal update at every iteration. When the input superstructure is constrained considerably (as is the case when the true moral graph is supplied as input), there is less opportunity for greedy equivalence search to make ``bad" updates. Nevertheless, as a true moral graph is rarely available in practice, we believe that the comparisons in Table \ref{tab:compare_benchmarks_est}
 are more practically relevant than those of Table \ref{tab:compare_benchmarks}.}

\begin{table}[b]
\def~{\hphantom{0}}
\caption{{Comparison with benchmarks using the true moral graph as superstructure, if applicable}}
    \centering
    \resizebox{\columnwidth}{!}{\begin{tabular}{lcccccccccccccccccc}
    
        & \multicolumn{2}{c}{\hdbu}  & \multicolumn{2}{c}{\td} & \multicolumn{3}{c}{\misocp} & \multicolumn{2}{c}{\ges} & \multicolumn{3}{c}{\micp}   \\
       Network.$m$.$s^\star$ & Time  & $d_{\mathrm{cpdag}}$ & Time  & $d_{\mathrm{cpdag}}$ & Time  & \rgap & $d_{\mathrm{cpdag}}$ & Time  & $d_{\mathrm{cpdag}}$ & Time  & \rgap & $d_{\mathrm{cpdag}}$  \\
       Dsep.6.6  &   $\leq 1$  & 9.2$\pm$4.4   & $\leq 1$   & 2.0$\pm$0.2 &  $\leq 1$  & * & 2.0$\pm$0 & $\leq 1$ & 1.9$\pm$0.4 & $\leq 1$  & * &2.0$\pm0$  \\
       Asia.8.8  &   $\leq 1$  & 20.7$\pm$4.8  & $\leq 1$   & 11.6$\pm$3.1 & $\leq 1$  & * & 10.9$\pm$2.0 & $\leq 1$ & 2.1$\pm$0.6  & $\leq 1$  & * & 2.1$\pm$0.4   \\
       Bowling.9.11  &   $\leq 1$  & 5.6$\pm$2.6 & $\leq 1$  & 7.0$\pm$2.1 & $\leq 1$  & * & 4.1$\pm$2.3 & $\leq 1$ & 2.4$\pm$1.2  & $\leq 1$  & * &2.0$\pm0$ \\
       InsSmall.15.25  &   $\leq 1$   & 39.9$\pm$7.1  & $\leq 1$   & 12.4$\pm$4.9 &  $5$ & *  & 8.0$\pm$0.0  & $\leq 1$ & 24.2$\pm$8.4  & T  & 0.05 & 7.6$\pm$2.2   \\
       Rain.14.18  &   $\leq 1$   & 16.7$\pm$3.8   & $\leq 1$  & 3.0$\pm$1.6  &  $1$ & *  & 2.0$\pm0$ & $\leq 1$ & 5.2$\pm$4.5   & $88$  & * &2.0$\pm0$ \\
       Cloud.16.19  &   $\leq 1$   & 43.8$\pm$7.8 & $\leq 1$   & 20.4$\pm$4.0   & $\leq 1$ & *  & 19.5$\pm$1.5  & $\leq 1$ & 3.0$\pm$1.4   & $11$  & * & 4.6$\pm$1.0   \\
       Funnel.18.18  &   $\leq 1$  & 15.9$\pm$3.0    & $\leq 1$   & 2.1$\pm$0.3   &  $\leq 1$  & * & 2.0$\pm$0.0 & $\leq 1$ & 3.3$\pm$3.9  & $62$  & * & 2.1$\pm$0.3  \\
       Galaxy.20.22  &   $\leq 1$   & 44.2$\pm$3.9 & $\leq 1$   & 27.9$\pm$6.1  &  $\leq 1$  & * & 6.2$\pm$5.2 & $\leq 1$ & 1.7$\pm$2.1   & $114$  & * &1.0$\pm0$  \\
       Insurance.27.52  &   $\leq 1$  & 43.8$\pm$6.6   & $\leq 1$  & 35.8$\pm$5.7 &  588   & * & 14.1$\pm$3.4 & $\leq 1$ & 25.8$\pm$10.4   & T  & 0.24 & 10.7$\pm$7.0    \\
       Factors.27.68 & $\leq 1$  & 26.1$\pm$5.8 & $\leq 1$   & 46.2$\pm$7.5  &T & 0.02 & 33.7$\pm$5.4 & $\leq 1$ & 63.7$\pm$7.7   & T  & 0.30 & 47.7$\pm$9.5 \\
       Hailfinder.56.66  &   4  & 132.6$\pm$19.5    & $\leq 1$  & 57.6$\pm$10.3   &  T  & 0.04 & 8.6$\pm$6.1  & $\leq 1$ & 13.3$\pm$9.1   & T  & 0.23 & 14.9$\pm$11.2    \\
       Hepar2.70.123  &   7   & 133.6$\pm$12.7   & 1   & 71.0$\pm$7.6  & T & 0.06  & 29.9$\pm$11.9  & $\leq 1$ & 37.7$\pm$13.2  & T  & 0.29 & 55.0$\pm$13.3  \\
    \end{tabular}}
    \caption*{
Here, \hdbu, high-dimensional bottom-up; TD, top-down; MISOCP, mixed-integer second-order cone program; \ges, greedy equivalence search;  $d_{\mathrm{cpdag}}$, differences between the true and estimated completed partially directed acyclic graphs; \rgap, relative optimality gap; T refers to reaching the time limit given by $50m$. All results are averaged across 30 independent trials.}    \label{tab:compare_benchmarks}
\end{table}


\subsection{Additional results}
\label{sec:Other_metrics}

{To further elucidate the comparison with greedy equivalence search in Table~4, we present a box plot of $d_{\mathrm{cpdag}}$ values in Fig.~\ref{fig:comp_ges_box}. We see that for most of the graphs, the upper quantile $d_{\mathrm{cpdag}}$ values of our method are below the lower quantile $d_{\mathrm{cpdag}}$ values of greedy equivalence search. Indeed, after applying the Wilcoxon signed-rank test to compare the mean $d_{\mathrm{cpdag}}$ of greedy equivalence search with the mean $d_{\mathrm{cpdag}}$, we find statistically significant results for all but three of the graphs (with a $p$-value less than 0.05); the $p$ values are shown in the box plot.
\begin{figure}
    \centering
    \includegraphics[width=1\linewidth]{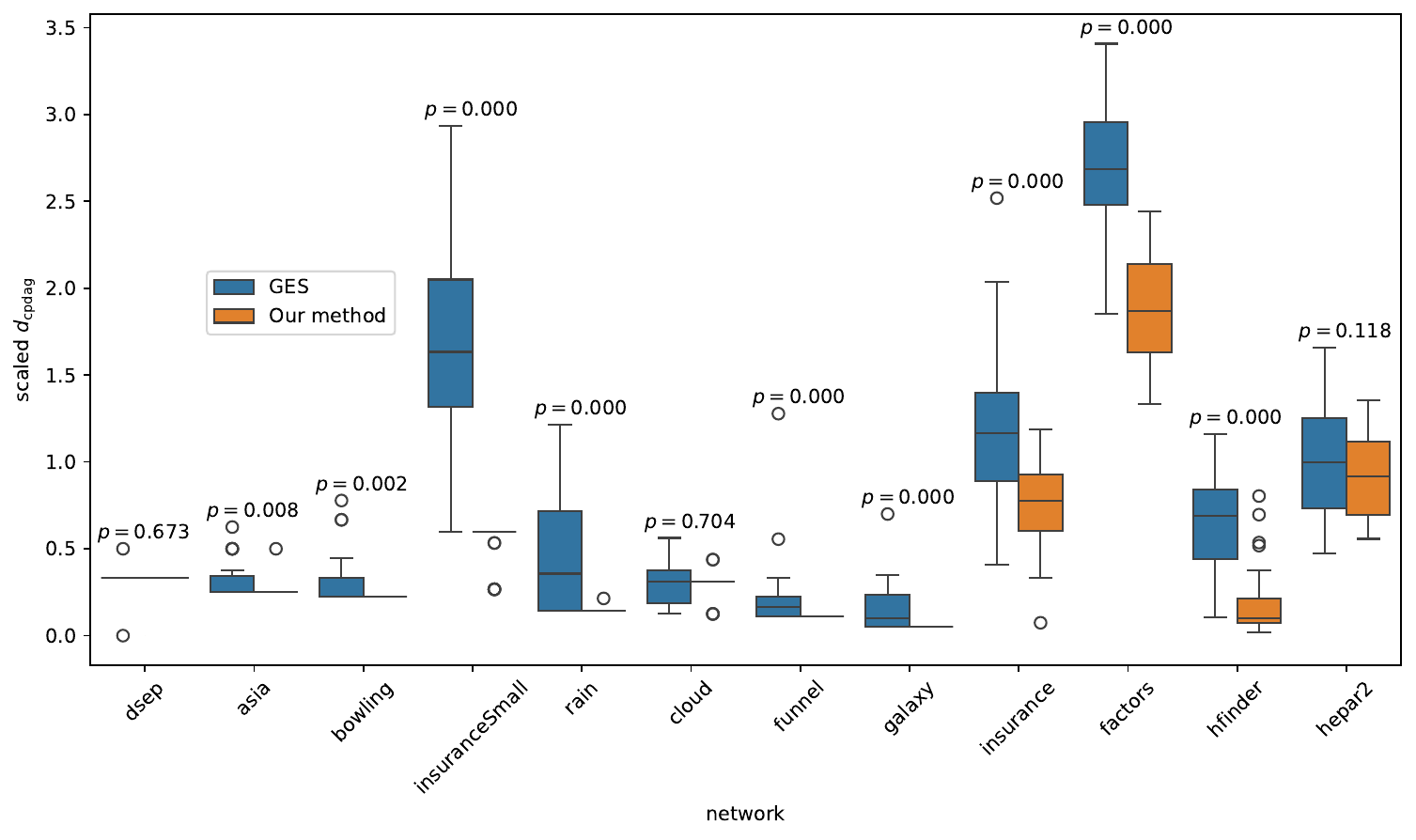}
    \caption{{Box plot for scaled $d_{\mathrm{cpdag}}$ for our method and greedy equivalence search with estimated moral graphs, and the $p$-values of the Wilcoxon signed-rank test.}}
    \label{fig:comp_ges_box}
\end{figure}

In Tables~\ref{tab:compare_other1} and \ref{tab:compare_other2}, we report the structural hamming distances of the undirected skeleton of the true directed acyclic graph and the corresponding skeleton of the estimated network, the true positive rate, and the false positive rate.

\begin{table}[ht]
\label{tab:other_metrics1}
\caption{{Structural hamming distance, true and also positive rate results,  part I}}
    \centering
    \scalebox{0.7}{{\begin{tabular}{cccccccccclll}
    
        & \multicolumn{3}{c}{\hdbu} & \multicolumn{3}{c}{\td} & \multicolumn{3}{c}{\misocp-True} & \multicolumn{3}{c}{\misocp-Est}\\
       Network($m$) & SHDs & TPR & FPR & SHDs & TPR & FPR& SHDs & TPR & FPR & SHDs & TPR &FPR\\
       Dsep(6)  &   3.9& 0.983&   0.126
&   1.0& 0.833&   0.001
&   1.0& 0.833&   0
& 1.0& 0.833&0
\\
       Asia(8)  &   11.8& 0.9&   0.197
&   4.9& 0.862&   0.067
&   2.6& 0.854&   0.223
& 2.7& 0.85&0.225
\\
       Bowling(9)  &   3.2& 0.918&   0.033
&   2.3& 0.909&   0.018
&   1.5& 0.909&   0.083
& 1.5& 0.909&0.083
\\
       InsSmall(15)  &   21.8& 0.968&   0.105
&   3.6& 0.948&   0.012
&   1.0& 0.96&   0.006
& 1.4& 0.953&0.015
\\
       Rain(14) &   9.2& 0.948&   0.047
&   1.4& 0.944&   0.002
&   1.0& 0.944&   0.037
& 1.0& 0.944&0.036
\\
       Cloud(16)   &   20.7& 0.93&   0.082
&   6.8& 0.925&   0.023
&   2.3& 0.933&   0.072
& 2.8& 0.919&0.073
\\
       Funnel(18)   &   10.0& 0.946&   0.03
&   1.1& 0.944&   0
&   1.0& 0.944&   0
& 1.0& 0.944&0
\\
       Galaxy(20)  &   24.1& 0.958&   0.061
&  10.0& 0.953&   0.024
&   2.4& 0.952&   0.022
& 2.4& 0.952&0.022
\\
       Insurance(27)  &   25.2& 0.986&   0.036
&   18.3& 0.946&   0.023
&   3.3& 0.963&   0.01
& 3.3& 0.962&0.011
\\
       Factors(27) & 11.1& 0.966&   0.013
& 21.4& 0.775&   0.009
&   15.9& 0.825&   0.033
& 18.4& 0.783&0.033
\\
       Hailfinder(56)  &   87.8& 0.973&   0.028
&   30.2& 0.922&   0.008
&   3.2& 0.975&   0.004
& 6.7& 0.961&0.009
\\
       Hepar2(70)   &   91.7& 0.983&   0.019&   40.9& 0.849&   0.005&   8.8& 0.967&   0.005& 12.6& 0.957&0.008\\
    \end{tabular}}}
\caption*{
\hdbu, high-dimensional bottom-up; \td, top-down; \misocp, mixed-integer second-order conic program; True, using the true moral graph as superstructure; Est, using estimated moral graph;SHDs, structural hamming distance of undirected graph skeletons; TPR, true positive rate; FPR, false positive rate; 
}    \label{tab:compare_other1}
\end{table}

\begin{table}[ht]
\label{tab:other_metrics2}
\caption{{Structural hamming distance, true and also positive rate results,  part II}}
    \centering
    \scalebox{0.7}{\begin{tabular}{ccccccccccccc}
        & \multicolumn{3}{c}{\ges-True} & \multicolumn{3}{c}{\micp-True}& \multicolumn{3}{c}{\ges-Est} & \multicolumn{3}{c}{\micp-Est}\\
       Network($m$) & SHDs & TPR & FPR & SHDs & TPR & FPR & SHDs & TPR & FPR  & SHDs & TPR & FPR \\
       Dsep(6)  &     1.0& 0.862&   0.032
&   1.0& 0.833&   0
&   1.0& 0.862&   0.033
& 1.0& 0.833& 0
\\
       Asia(8)  &    1.0& 0.815&   0.001
&   1.0& 0.875&   0
&   1.2& 0.8&   0.005
& 1.0& 0.875& 0
\\
       Bowling(9)  &     1.1& 0.909&   0.02
&   1.0& 0.909&   0
&   1.4& 0.906&   0.025
&  1.0& 0.909& 0
\\
       InsSmall(15) &   7.0& 0.553&   0.041
&   2.6& 0.899&   0
&   7.2& 0.547&   0.043
& 2.6& 0.896& 0
\\
       Rain(14) &   1.9& 0.867&   0.009
&   1.0& 0.944&   0
&   2.4& 0.83&   0.013
&  1.0& 0.944& 0
\\
       Cloud(16) &   1.0& 0.897&   0.001
&   1.9& 0.902&   0
&   1.9& 0.872&   0.006
& 2.0& 0.9& 0.001
\\
       Funnel(18) &   1.4& 0.894&   0.003
&   1.1& 0.944&   0
&   1.9& 0.888&   0.005
& 1.0& 0.944& 0
\\
       Galaxy(20) &   1.1& 0.949&   0.001
&   1.0& 0.955&   0
&   2.3& 0.943&   0.005
& 1.0& 0.955& 0
\\
       Insurance(27)  &   5.0& 0.756&   0.014
&  2.4& 0.969&   0.003
&   8.9& 0.73&   0.022
&5.0& 0.953& 0.009
\\
       Factors(27) & 14.6& 0.526&   0.029
& 18.5& 0.776&   0.012
& 22.4& 0.491&   0.036
& 21.5& 0.748& 0.015
\\
       Hailfinder(56)  &   4.2& 0.935&   0.003
&   4.7& 0.971&   0.002
&   14.7& 0.798&   0.006
&  3.6& 0.967& 0.001
\\
       Hepar2(70) &   11.7&   0.89& 0.005&  13.0&   0.956& 0.003&   33.0&   0.821& 0.01& 17.1& 0.945& 0.005\\
    \end{tabular}}
   \caption*{
SHDs, the structural hamming distance of undirected graph skeletons; TPR, true positive rate; FPR, false positive rate;}    \label{tab:compare_other2}
\end{table}

\subsection{Comparison with other benchmarks on non-Gaussian models}
\label{sec:non-Gaussian}
While our proposed method explicitly maximizes the Gaussian likelihood, some competing methods do not assume Gaussian errors. Therefore, in this section, we compare our method with other approaches when the errors are non-Gaussian. We follow the procedure from \cite{shimizu06a} to generate non-Gaussian samples. Specifically, we generate Gaussian errors first and subsequently pass them through a power non-linearity, which raises the absolute value of errors to an exponent in the interval $[0.5, 0.8]$ or $[1.2, 2.0]$, but keeping the original sign. From Table~\ref{tab:compare_benchmarks_non-Gaussian}, we see that even with non-Gaussian errors, our method outperforms others in most cases, especially for smaller instances that can be solved to optimality. Our method and greedy equivalence search \cite[]{chickering2002optimal} generally exhibit similar performance, as both approaches address the same objective functions, specifically log-likelihood functions with $\ell_0$-norm regularization. However, in certain instances, such as  InsSmall, our method achieves significantly fewer errors compared to greedy equivalence search. 
In Fig.~\ref{fig:synthetic_non-Gaussian}, we compare the methods on synthetic non-Gaussian datasets with $n=400$ and $m=10,15,20$. It can be seen that our method performs better than competing methods. Greedy equivalence search with true moral graphs as superstructures results in slightly larger errors and higher variances than ours, as it is a heuristic approach that does not guarantee optimal solutions.

\begin{table}[ht]
\def~{\hphantom{0}}
\caption{{Comparison with benchmarks on non-Gaussian models using true moral graph as superstructures if applicable}}
    \centering
    \resizebox{\columnwidth}{!}{\begin{tabular}{lcccccccccccccc}
    
        & \multicolumn{2}{c}{\hdbu} & \multicolumn{2}{c}{\td} & \multicolumn{3}{c}{\misocp}  & \multicolumn{2}{c}{\ges}  & \multicolumn{3}{c}{\micp} \\
       Network.$m$.$s^\star$ & Time  & $d_{\mathrm{cpdag}}$ & Time  & $d_{\mathrm{cpdag}}$ & Time  & \rgap & $d_{\mathrm{cpdag}}$ & Time & $d_{\mathrm{cpdag}}$ & Time  & \rgap & $d_{\mathrm{cpdag}}$ \\
        Dsep.6.6        & $\leq 1$ & 10.3$\pm4.1$   & $\leq 1$ & 2.1$\pm0.5$    & $\leq 1$    & *    & 2.0$\pm0$     & $\leq 1$ & 2.1$\pm0.5$   & $\leq 1$    & *    & 2.0$\pm0$     \\
        Asia.8.8        & $\leq 1$ & 22.2$\pm7.5$   & $\leq 1$ & 14.7$\pm2.0$   & $\leq 1$    & *    & 10.3$\pm1.9$  & $\leq 1$ & 2.1$\pm0.4$   & $1$    & *    & 2.5$\pm0.9$   \\
        Bowling.9.11     & $\leq 1$ & 7.8$\pm2.3$   & $\leq 1$ & 8.8$\pm2.1$    & $\leq 1$    & *    & 8.1$\pm2.4$   & $\leq 1$ & 2.4$\pm1.1$   & $2$         & *    & 2.0$\pm0$     \\
        InsSmall.15.25   & $\leq 1$ & 39.1$\pm5.5$  & $\leq 1$ & 16.6$\pm3.4$   & $3$         & *    & 10.3$\pm2.7$  & $\leq 1$ & 22.8$\pm7.1$  & T  & 0.04 & 8.1$\pm2.6$   \\
        Rain.14.18       & $\leq 1$ & 17.2$\pm4.6$  & $\leq 1$ & 4.3$\pm3.3$    & $\leq 1$    & *    & 2.0$\pm0$     & $\leq 1$ & 12.1$\pm8.6$  & $104$        & *    & 2.8$\pm2.4$     \\
        Cloud.16.19      & $\leq 1$ & 41.2$\pm7.5$  & $\leq 1$ & 20.5$\pm3.4$   & $\leq 1$    & *    & 20.6$\pm2.9$  & $\leq 1$ & 2.7$\pm1.6$   & $12$        & *    & 3.2$\pm1.5$   \\
        Funnel.18.18.     & $\leq 1$ & 17.1$\pm4.9$ & $\leq 1$ & 2.0$\pm0$      & $\leq 1$    & *    & 2.0$\pm0$   & $\leq 1$ & 3.0$\pm4.2$   & $117$        & *    & 2.2$\pm0.8$     \\
        Galaxy.20.22     & $\leq 1$ & 45.4$\pm5.9$   & $\leq 1$ & 31.2$\pm6.0$   & $1$    & *    & 19.3$\pm7.5$  & $\leq 1$ & 1.6$\pm2.1$   & $174$       & *    & 1.0$\pm0$     \\
        Insurance.27.52  & $\leq 1$ & 40.0$\pm7.4$   & $\leq 1$ & 39.7$\pm3.6$   & $403$       & *    & 22.2$\pm6.2$  & $\leq 1$ & 27.0$\pm9.0$  & T & 0.20 & 14.4$\pm8.3$ \\
        Factors.27.68    & $\leq 1$ & 28.0$\pm4.9$   & $\leq 1$ & 46.8$\pm8.5$   & $667$       & *    & 38.8$\pm5.4$  & $\leq 1$ & 64.2$\pm9.6$  & T & 0.24 & 51.1$\pm10.6$ \\
        Hailfinder.56.66 & 5        & 135.7$\pm20.0$ & $\leq 1$ & 75.2$\pm9.8$  & T & 0.06 & 42.7$\pm13.3$ & $\leq 1$ & 6.1$\pm7.1$   & T & 0.17 & 27.4$\pm15.3$ \\
        Hepar2.70.123     & 10        & 143.8$\pm16.2$ & 2.3        & 95.4$\pm11.4$ & T & 0.07 & 46.5$\pm12.1$ & $\leq 1$ & 33.2$\pm12.6$ & T & 0.24 & 58.9$\pm13.9$
    \end{tabular}}
    \caption*{
\hdbu, high-dimensional bottom-up; \td, top-down; \misocp, mixed-integer second-order cone program; \ges, greedy equivalence search;  $d_{\mathrm{cpdag}}$, differences between the true and estimated completed partially directed acyclic graphs; \rgap, relative optimality gap;}
    \label{tab:compare_benchmarks_non-Gaussian}
\end{table}

\begin{figure}[ht]
    \centering
    \subfloat[$m = 10$]{
        \includegraphics[scale=0.33]{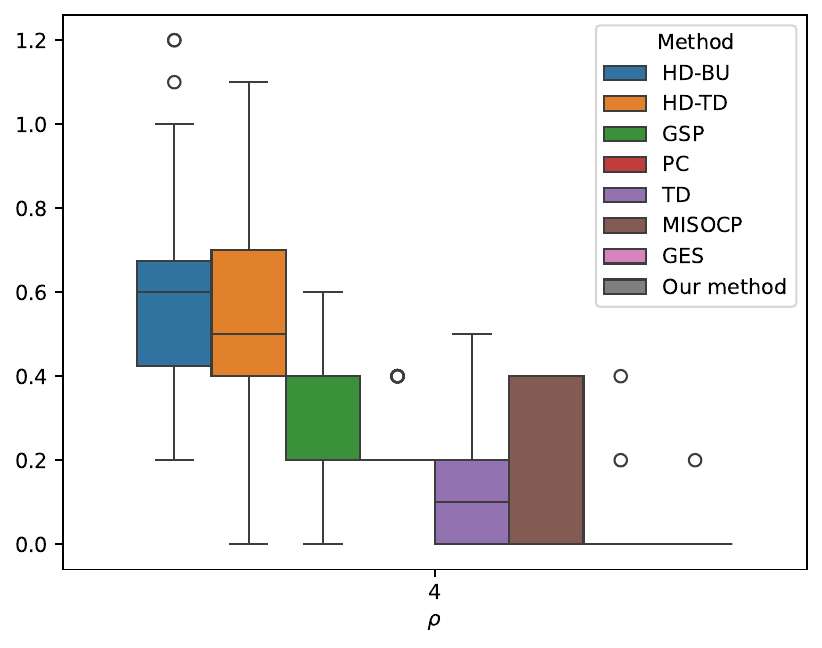}}
    \subfloat[$m=15$]{
        \includegraphics[scale=0.33]{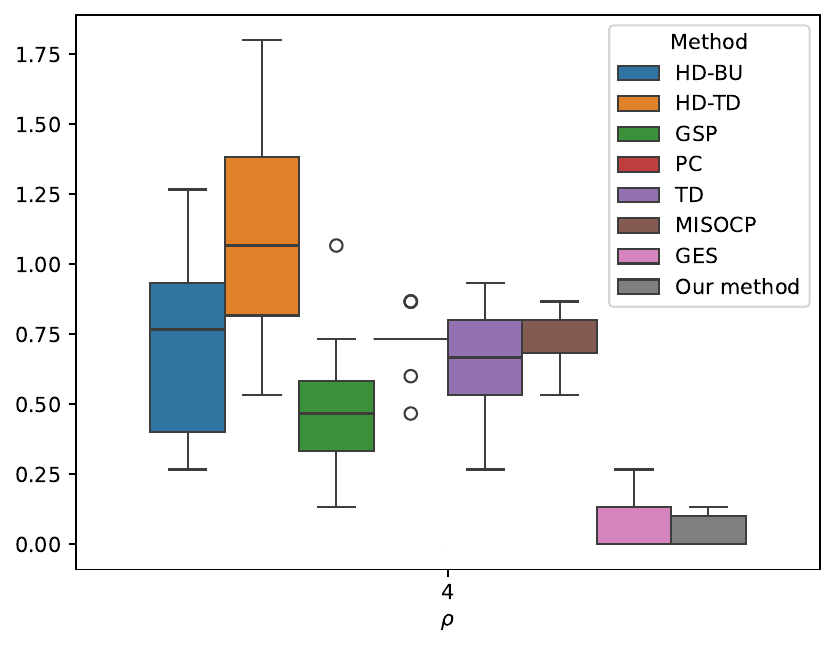}}
   \subfloat[$m=20$]{
        \includegraphics[scale=0.33]{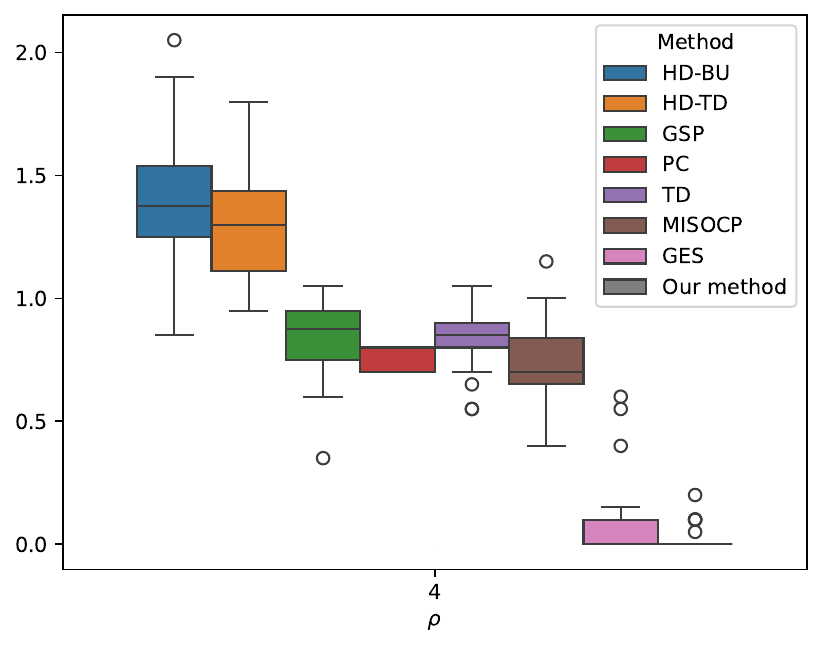}}
    \caption{{Box plots of scaled $d_{\mathrm{cpdag}}$ for our method and benchmarks with $\rho = 4$ and number of nodes $m=10, 15, 20$,  respectively. Here, $d_{\mathrm{cpdag}}$ values are scaled by the total number of edges in the true underlying directed acyclic graph. \hdbu, high-dimensional bottom-up; \hdtd, high-dimensional top-down; \gsp, greedy sparsest permutations; \pc, PC algorithm; \td, top-down;   \misocp, mixed-integer second-order conic program; \ges, greedy equivalence search; $d_{\mathrm{cpdag}}$, differences between true and estimated completed partially directed acyclic graphs with smaller values being better. 
 } }
    \label{fig:synthetic_non-Gaussian}

\end{figure}

\subsection{Additional results of the PC algorithm and the greedy sparsest permutation algorithm}\label{app:pc-gsp}

For completeness, Table~\ref{tab:compare_benchmarks_pc_gsp} summarizes our experiments comparing the $d_{\mathrm{cpdag}}$ and running time of the PC algorithm, the greedy sparsest permutation algorithm, and our method. Since these competing algorithms are outperformed by the greedy equivalence search algorithm, they were not included in the main table.


\begin{table}[ht]
\def~{\hphantom{0}}
\caption{{The PC algorithm and the greedy sparsest permutation algorithm using estimated superstructure}}
    \centering
    \scalebox{0.7}{\begin{tabular}{lccccccc}
    
         & \multicolumn{2}{c}{\pc}& \multicolumn{2}{c}{\gsp} & \multicolumn{3}{c}{\micp}\\
       Network.$m$.$s^\star$  & Time & $d_{\mathrm{cpdag}}$ & Time & $d_{\mathrm{cpdag}}$ & Time  & \rgap & $d_{\mathrm{cpdag}}$\\
       Dsep.6.6  & $\leq 1$ & 5.0$\pm0$  & $\leq 1$ & 2.0$\pm0$ & $\leq 1$  & * &2.0$\pm0$ \\
       Asia.8.8  & $\leq 1$ & 9.1$\pm0.4$ & $\leq 1$ & 5.2$\pm$2.1 & $\leq 1$  & * &2.1$\pm0.4$ \\
       Bowling.9.11  & $\leq 1$ & 14.7$\pm1.1$ & $\leq 1$ & 5.8$\pm$3.2 & $2$  & * &2.0$\pm0$ \\
       InsSmall.15.25  & $\leq 1$ & 34.6$\pm$1.3   & $\leq 1$ & 18.3$\pm$5.1 & T  & 0.05 &8.1$\pm1.9$ \\
       Rain.14.18  & $\leq 1$ & 17.1$\pm$1.8   & $\leq 1$ & 16.0$\pm$3.8 & $104$  & * &2.0$\pm0$ \\
       Cloud.16.19  & $\leq 1$ & 23.2$\pm$3.3  & $\leq 1$ & 14.0$\pm$2.0 & $28$  & * &4.8$\pm1.1$ \\
       Funnel.18.18  & $\leq 1$ & 21.3$\pm$0.7  & $\leq 1$ & 12.4$\pm$3.4 & $42$  & * &2.0$\pm0$ \\
       Galaxy.20.22  & $\leq 1$ & 29.7$\pm$1.0   & $\leq 1$ & 15.2$\pm$4.8   & $200$  & * &1.0$\pm0$ \\
       Insurance.27.52  & $\leq 1$ & 72.9$\pm$3.0   & $\leq 1$ & 37.9$\pm$7.0  & T  & 0.28 &19.8$\pm7.5$ \\
       Factors.27.68 & $\leq 1$ & 94.6$\pm$3.1   & $\leq 1$ & 54.2$\pm$7.4  & T  & 0.29 &51.1$\pm8.5$ \\
       Hailfinder.56.66  & $46$ & 91.7$\pm$5.2  & $\leq 1$ & 113.4$\pm$9.8  & T  & 0.19 &10.7$\pm11.4$ \\
       Hepar2.70.123  & 2 & 138.2$\pm$6.2& $\leq 1$ & 66.6$\pm$8.9  & T  & 0.32 &64.2$\pm17.5$ \\
    \end{tabular}}
\caption*{ \pc, PC algorithm;  GSP, greedy sparsest permutations; $d_{\mathrm{cpdag}}$, differences between the true and estimated completed partially directed acyclic graphs; \rgap, relative optimality gap; T refers to reaching the time limit given by $50m$.}    \label{tab:compare_benchmarks_pc_gsp}
\end{table}

\end{document}